\DeclareMathOperator{\cutrank}{cutrank}
\DeclareMathOperator{\rw}{rw}
\newtheorem{theorem}{Theorem}
\newtheorem{lemma}[theorem]{Lemma}
\newtheorem{corollary}[theorem]{Corollary}
\theoremstyle{definition}
\newtheorem{definition}[theorem]{Definition}
\newtheorem{fact}[theorem]{Fact}
\theoremstyle{remark}
\newtheorem{remark}[theorem]{Remark}
\newcommand{\CZ}{\ensuremath{\mathrm{CZ}}\xspace}
\newcommand{\NCZ}{\Gamma_\mathrm{CZ}}
\newcommand{\NCO}{\Gamma_\mathrm{EC}}
\title{Complexity of graph-state preparation by Clifford circuits}
\author{Soh Kumabe}
\affiliation{CyberAgent, Japan}
\email{kumabe\_soh@cyberagent.co.jp}
\author{Ryuhei Mori}
\affiliation{Graduate School of Mathematics, Nagoya University, Japan}
\email{mori@math.nagoya-u.ac.jp}
\author{Yusei Yoshimura}
\affiliation{School of Computing, The Institute of Science Tokyo, Japan}
\begin{document}
\maketitle

\begin{abstract}
In this work, we study the complexity of graph-state preparation in a general model of quantum algorithms that allows measurements in the computational basis, single-qubit Clifford operations, and two-qubit Clifford operations.
We define the \emph{CZ-complexity} of a graph state $\ket{G}$ as the minimum number of two-qubit Clifford operations required to generate $\ket{G}$ from $\ket{0}^{\otimes (n+s)}$ for some $s\ge 0$. Equivalently, every optimal algorithm can be taken to use only controlled-Z (\CZ) gates as its two-qubit Clifford operations.
We then give a combinatorial characterization of graph-state transformations.
Specifically, $\ket{G}$ can be generated from another graph state $\ket{H}$ by an algorithm of CZ-complexity at most $t$ if and only if $G$ can be obtained from $H$ by vertex deletions, local complementations and at most $t$ \emph{elementary edge-complementations}.
Here, an elementary edge-complementation toggles either a single edge, all edges between one vertex and the neighborhood of another, or all edges between the neighborhoods of two non-adjacent vertices.
Using this characterization, we relate CZ-complexity to rank-width.
For any graph $G$ with $n$ vertices and rank-width $r$, the CZ-complexity is $O(rn)$, and if $G$ is connected then it is at least $n+r-2$.
We also show that these bounds are close to optimal.
Finally, for interval graphs and circle graphs, whose rank-width is unbounded, we present preparation algorithms with CZ-complexity $O(n)$ and $O(n\log n)$, respectively.
\end{abstract}


\section{Introduction}
The set of stabilizer states is a large class of quantum states used in quantum information processing.
The set of graph states is a subclass of stabilizer states represented by an undirected graph.
Any stabilizer state can be transformed into a graph state using local Clifford operations~\cite{PhysRevA.69.022316}.
Hence, it is important to develop efficient quantum algorithms for preparing graph states since
any stabilizer state can be transformed from some graph state by local Clifford operations.
Graph states have been studied from several perspectives, including their multipartite entanglement and entanglement purification~\cite{PhysRevLett.91.107903,PhysRevA.69.062311}, their role as resource states for measurement-based quantum computation~\cite{PhysRevA.68.022312,PhysRevA.75.012337,PhysRevA.76.022304,harrison2025fermion,hahn2026structure}, graph-theoretic descriptions of their transformation under local Clifford operations and related single-qubit operations~\cite{PhysRevA.69.022316,PhysRevA.77.042307,dahlberg2018,Adcock2020mappinggraphstate,claudet_et_al:LIPIcs.STACS.2025.27}, and optimal preparation problems~\cite{PhysRevA.83.042314,Sharma2026minimisingnumberof}.
In this work, we focus on the last aspect and study graph-state preparation complexity from the viewpoint of the number of two-qubit Clifford operations.


In general, any stabilizer state can be generated from $\ket{0}^{\otimes n}$ by an $n$-qubit Clifford operation.
Any $n$-qubit Clifford operation is generated by single-qubit and two-qubit Clifford operations.
For example, the commonly used generators of the Clifford group are the $S$, $H$ and \CZ gates.
In this work, we study the complexity of graph-state preparation.
Although there are several implementations of quantum computing, e.g., superconductors, trapped-ion, etc.,
we are interested in complexity measures that capture universal difficulties in graph-state preparation.
The implementation of two-qubit operations is generally challenging.
Therefore, we define the complexity of graph states as follows.

\begin{definition}[The CZ-complexity of graph states]\label{def:comp}
Let $\mathcal{A}$ be a quantum algorithm
that consists of the following operations.
\begin{enumerate}
\item Measurements of qubits in the computational basis.
\item Single-qubit Clifford operations.
\item Two-qubit Clifford operations.
\end{enumerate}
Here, we assume that measured qubits are discarded in quantum algorithms.
After the measurements of some qubits, the algorithm may choose subsequent operations depending on the measurement outcomes.
The \emph{\CZ-complexity of $\mathcal{A}$} is defined as the maximum number (with respect to the measurement outcome) of two-qubit Clifford operations in $\mathcal{A}$.
The \emph{\CZ-complexity $\NCZ(\ket{G})$ of graph state $\ket{G}$} is defined as the minimum CZ-complexity of $\mathcal{A}$
that generates the $n$-qubit graph state $\ket{G}$ from $\ket{0}^{\otimes (n+s)}$ with probability 1 for some $s\ge 0$.
\end{definition}

In a physical realization of a quantum computer, the use of additional working qubits may be costly.
In such a setting, the number of working qubits should also be included in the cost of a preparation algorithm.
Furthermore, real quantum computers have hardware-specific connectivity constraints, so that \CZ gates can be applied only to restricted pairs of qubits.
In this work, however, we ignore such hardware-dependent issues, and count only the number of \CZ operations.
This viewpoint allows us to study the cost of graph-state preparation in a clean, architecture-independent manner.
Motivated by applications of graph states in several settings, including measurement-based quantum computation, our goal is to understand which graph states can be prepared using only a small number of \CZ operations and how this preparation cost is governed by graph structure.
Our results give upper bounds, lower bounds, and structural characterizations for \CZ-complexity of preparing such resource states.

There is a simple quantum algorithm generating an $n$-qubit graph state $\ket{G}$.
First, $\ket{+}^{\otimes n}$ is prepared, which corresponds to the graph state for the empty graph of $n$ vertices.
Here, each qubit corresponds to each vertex of $G$.
Then, for each edge of $G$, the controlled-Z (CZ) operation is applied for the corresponding pair of qubits.
This quantum algorithm generates $\ket{G}$ with $m$ two-qubit Clifford operations where $m$ is the number of edges in $G$.

Van den Nest, Dehaene, and De Moor proved that two graph states $\ket{G}$ and $\ket{H}$ can be transformed
to each other by local Clifford operations if and only if $G$ and $H$ can be transformed to each other
by some combinatorial graph transformations, called local complementations~\cite{PhysRevA.69.022316}.
The local complementation at a vertex $v$ is an operation that complements edges for all pairs in neighborhoods of $v$.
The local complementation can reduce the number of edges in graphs in some cases.
For example, the complete graph $K_n$ can be transformed to the star graph $S_{n-1}$ by a local complementation.
This implies that $\ket{K_n}$ is generated from $\ket{S_{n-1}}$ by local Clifford operations.
Since $S_{n-1}$ has $n-1$ edges, $\ket{S_{n-1}}$ can be generated with $n-1$ two-qubit Clifford operations.
Hence, the CZ-complexity of $\ket{K_n}$ is at most $n-1$
(Indeed, this is a lower bound for any connected graph as well).
This example shows that local Clifford operations are useful for reducing the number of two-qubit Clifford operations.
As another important example, Figure~\ref{fig:c4} shows that the cycle graph $C_4$ of size four can be transformed to the path graph $P_4$ of size four by the local complementations.
Note that we do not allow the use of local non-Clifford operations in Definition~\ref{def:comp}
although they are generally stronger than local Clifford operations~\cite{lulc2010,claudet_et_al:LIPIcs.STACS.2025.27}.

\begin{figure}[t]
\begin{tikzpicture}[scale=0.25,rotate=45]
\node[circle,minimum size=15,draw,very thick] (a) at (0:4) {};
\node[circle,minimum size=15,draw,very thick] (b) at (90: 4) {};
\node[circle,minimum size=15,draw,very thick,fill=red] (c) at (180: 4) {};
\node[circle,minimum size=15,draw,very thick] (d) at (270: 4) {};
\draw[-,very thick] (a) -- (b);
\draw[-,very thick] (b) -- (c);
\draw[-,very thick] (c) -- (d);
\draw[-,very thick] (d) -- (a);
\end{tikzpicture}
\hfill
\begin{tikzpicture}[scale=0.25,rotate=45]
\node[circle,minimum size=15,draw,very thick] (a) at (0:4) {};
\node[circle,minimum size=15,draw,very thick,fill=red] (b) at (90: 4) {};
\node[circle,minimum size=15,draw,very thick] (c) at (180: 4) {};
\node[circle,minimum size=15,draw,very thick] (d) at (270: 4) {};
\draw[-,very thick] (a) -- (b);
\draw[-,very thick] (b) -- (c);
\draw[-,very thick] (c) -- (d);
\draw[-,very thick] (d) -- (a);
\draw[-,very thick] (b) -- (d);
\end{tikzpicture}
\hfill
\begin{tikzpicture}[scale=0.25,rotate=45]
\node[circle,minimum size=15,draw,very thick] (a) at (0:4) {};
\node[circle,minimum size=15,draw,very thick] (b) at (90: 4) {};
\node[circle,minimum size=15,draw,very thick,fill=red] (c) at (180: 4) {};
\node[circle,minimum size=15,draw,very thick] (d) at (270: 4) {};
\draw[-,very thick] (a) -- (b);
\draw[-,very thick] (b) -- (c);
\draw[-,very thick] (b) -- (d);
\draw[-,very thick] (a) -- (c);
\end{tikzpicture}
\hfill
\begin{tikzpicture}[scale=0.25,rotate=45]
\node[circle,minimum size=15,draw,very thick] (a) at (0:4) {};
\node[circle,minimum size=15,draw,very thick] (b) at (90: 4) {};
\node[circle,minimum size=15,draw,very thick] (c) at (180: 4) {};
\node[circle,minimum size=15,draw,very thick] (d) at (270: 4) {};
\draw[-,very thick] (b) -- (c);
\draw[-,very thick] (b) -- (d);
\draw[-,very thick] (a) -- (c);
\end{tikzpicture}
\caption{$C_4$ and $P_4$ are local-complement equivalent. When the local complementations are applied for the filled vertices, $C_4$ is transformed to $P_4$.}
\label{fig:c4}
\end{figure}

\begin{figure}[t]
\hfill
\begin{tikzpicture}
\node[circle,draw,very thick] (a) at (90:0.4) {};
\node[circle,draw,very thick] (b) at (-150:0.4) {};
\node[circle,draw,very thick] (c) at (-30:0.4) {};
\node[circle,draw,very thick] (e) at (90:1) {};
\node[circle,draw,very thick] (f) at (-150:1) {};
\node[circle,draw,very thick] (g) at (-30:1) {};
\draw[very thick] (a) -- (b);
\draw[very thick] (b) -- (c);
\draw[very thick] (c) -- (a);
\draw[very thick] (a) -- (e);
\draw[very thick] (b) -- (f);
\draw[very thick] (c) -- (g);
\end{tikzpicture}
\hfill
\begin{tikzpicture}
\node[circle,draw,very thick] (a) at (0:0.7) {};
\node[circle,draw,very thick] (b) at (90:0.7) {};
\node[circle,draw,very thick] (c) at (180:0.7) {};
\node[circle,draw,very thick] (d) at (-90:0.7) {};
\node[circle,draw,very thick] (e) at (0:1.6) {};
\node[circle,draw,very thick] (f) at (180:1.6) {};
\draw[very thick] (a) -- (b);
\draw[very thick] (b) -- (c);
\draw[very thick] (c) -- (d);
\draw[very thick] (d) -- (a);
\draw[very thick] (a) -- (e);
\draw[very thick] (c) -- (f);
\end{tikzpicture}
\hfill
~
\caption{Graphs that have the minimum number of edges among all graphs obtained by local complementations, and include cycles of size three or four.}
\label{fig:min}
\end{figure}

Based on the above observations, we can consider a general strategy for designing low CZ-complexity quantum algorithms.
To generate $\ket{G}$, find a graph $H$ that has the minimum number of edges among all graphs
that can be transformed to $G$ by local complementations.
Then, $\ket{H}$ can be generated with $m$ two-qubit operations where $m$ is the number of edges of $H$.
Finally, $\ket{G}$ can be generated from $\ket{H}$ by a local Clifford operation.
The CZ-complexities of the algorithms based on this strategy were calculated for all graph states up to 12 qubits in~\cite{PhysRevA.83.042314}.
Recently, algorithms for minimizing the number of edges by local complementations were provided in~\cite{Sharma2026minimisingnumberof}.
The algorithm based on simulated annealing efficiently finds a graph with approximately minimum number of edges up to 100 qubits.
Another algorithm exactly finds a graph with minimum number of edges up to 16 qubits.
In quantum algorithms of this type, the CZ operations are applied for $\ket{+}^{\otimes n}$ first, and then
a local Clifford operation is applied.

In fact, this simple strategy is not necessarily optimal.
The two graphs in Figure~\ref{fig:min} each have six edges, which is the minimum number possible under local-complementations~\cite{Adcock2020mappinggraphstate}.
However, these graphs can be generated with five \CZ operations since cycles of size three and four can be generated with two and three \CZ operations, respectively.\footnote{Interestingly, there exist graphs that have the minimum number of edges among all graphs obtained by local complementations, and do not contain cycles of size three or four, but can be generated more efficiently, e.g., No.131 in~\cite{Adcock2020mappinggraphstate}.}
The class of quantum algorithms in Definition~\ref{def:comp} is more general than quantum algorithms from the above strategy since any two-qubit Clifford operations rather than the CZ operation are allowed, and single-qubit Clifford operations and two-qubit Clifford operations can be applied in arbitrary order.
We first observe that, without loss of generality, we can assume that all two-qubit Clifford operations in graph-state preparation algorithms are \CZ operations, thus justifying the terminology of CZ-complexity introduced in Definition~\ref{def:comp}.
\begin{restatable}{proposition}{propcz}
\label{prop:cz0}
For any graph $G$, there exists a quantum algorithm $\mathcal{A}$ generating $\ket{G}$
of the type in Definition~\ref{def:comp} with CZ-complexity $\NCZ(\ket{G})$
where all two-qubit Clifford operations in $\mathcal{A}$ are the \CZ operations.
\end{restatable}

Proposition~\ref{prop:cz0} is proved in Appendix~\ref{apx:cz}.
For a graph state $\ket{G}$, a \CZ operation corresponds to toggling a single edge of $G$.
However, in the framework of quantum algorithms in Definition~\ref{def:comp}, \CZ operations can be performed on a stabilizer state which may not be a graph state.
We prove a combinatorial characterization of the graph state transformation with CZ-complexity at most $t$.
\begin{theorem}\label{thm:c1}
For any graphs $G$ and $H$, $\ket{G}$ can be generated from $\ket{H}$ with CZ-complexity at most $t$
if and only if $G$ can be obtained from $H$ by vertex deletions, local complementations and at most $t$ elementary edge-complementations, which are defined in Definition~\ref{def:cost1}.
\end{theorem}
This combinatorial characterization can be regarded as a generalization of the combinatorial characterization for $t=0$~\cite{PhysRevA.69.022316,dahlberg2018}.
Theorem~\ref{thm:c1} is derived from the graphical description of \CZ operations on stabilizer states~\cite{PhysRevA.77.042307}.

\if0
By utilizing this combinatorial characterization, we establish a connection between the CZ-complexity and the minimum distance of self-dual additive codes over $\mathbb{F}_4$ with respect to the Hermitian trace inner product~\cite{DANIELSEN20061351}.
The upper bounds on the minimum distance of this type of codes give the following upper bounds on the CZ-complexity.
\begin{theorem}\label{thm:upperg}
For any graph $G$ with $n$ vertices,
$\NCZ(\ket{G})\le \frac{(n-1)(n+4)}6$.
\end{theorem}
Although the optimal asymptotic CZ-complexity is $\Theta(n^2/\log n)$~\cite{markov2008optimal},
Theorem~\ref{thm:upperg} gives the upper bound on the CZ-complexity for finite $n$.
\fi

It is known that the asymptotic CZ-complexity is $\Theta(n^2/\log n)$~\cite{markov2008optimal}.
For better upper bounds, we consider graphs with bounded rank-width.
The rank-width is a complexity measure of graphs introduced by Oum and Seymour~\cite{OUM2006514,OUM201715}.
We derived the upper and lower bounds of the CZ-complexity using the rank-width.

\begin{restatable}{theorem}{thmupper}\label{thm:upper}
Let $G$ be a graph with $n$ vertices and rank-width at most $r\ge 1$.
If $r$ is odd, for any $n\ge \frac{13r^2-6r-3}{4r}$,
\begin{align*}
\NCZ(\ket{G})\le \frac{5r^2-1}{4r}n - \frac{221r^4-180r^3+10r^2+36r+9}{96r^2}.
\end{align*}
If $r$ is even, for any $n\ge \frac{13r-6}4$,
\begin{align*}
\NCZ(\ket{G})\le \frac{5r}4n - \frac{221r^2-180r+100}{96}.
\end{align*}
\end{restatable}
For the case $r=1$, this upper bound is optimal for connected graphs.
\begin{corollary}\label{cor:r1}
For any graph $G$ with $n$ vertices and rank-width 1,
\begin{align*}
\NCZ(\ket{G})\le n-1.
\end{align*}
\end{corollary}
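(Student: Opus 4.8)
The plan is to obtain Corollary~\ref{cor:r1} as the specialization of Theorem~\ref{thm:upper} to $r=1$. First I would verify that the hypothesis of Theorem~\ref{thm:upper} is not vacuous in this case: since $r=1$ is odd, the side condition $n\ge\frac{13r^2-6r-3}{4r}$ reads $n\ge 1$, which holds for every graph, and in particular for every graph of rank-width exactly $1$ (which necessarily has $n\ge 2$). Then I would substitute $r=1$ into the two coefficients appearing in the odd-$r$ bound: the slope $\frac{5r^2-1}{4r}$ evaluates to $1$, and the additive constant $\frac{221r^4-180r^3+10r^2+36r+9}{96r^2}$ evaluates to $\frac{96}{96}=1$. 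Theorem~\ref{thm:upper} then gives $\NCZ(\ket{G})\le n-1$ for every graph $G$ with $n$ vertices and rank-width at most $1$, which is the claimed statement.

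It is worth recording the structural content behind this bound, which also yields an elementary stand-alone derivation. Graphs of rank-width at most $1$ are exactly the graphs all of whose connected components are distance-hereditary, and a connected distance-hereditary graph on $n_i$ vertices is built from a single vertex by $n_i-1$ one-vertex extensions, each of which is a pendant-vertex addition, a true-twin addition, or a false-twin addition. A pendant-vertex addition costs a single \CZ operation: prepare a fresh $\ket{+}$ on the new qubit and apply one \CZ to it and its unique neighbor. A twin addition can likewise be realized at cost $1$ by the combinatorial transformation of Definition~\ref{def:compG}, via the characterization of Theorem~\ref{thm:c1}. Concatenating the algorithms for the components then uses at most $\sum_i (n_i-1)\le n-1$ \CZ operations. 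Together with the lower bound $n+r-2$ for connected graphs announced in the abstract, this shows that $n-1$ is the exact \CZ-complexity of every connected graph of rank-width $1$.

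For the specialization route there is essentially no obstacle beyond the arithmetic above; all the work sits inside Theorem~\ref{thm:upper}. For the stand-alone route the only delicate point is the claim that a twin addition can be done with a single \CZ: this cannot be argued purely by local complementations on graph states, because a \CZ performed in the middle of the algorithm acts on a stabilizer state that need not be a graph state, so one must route the argument through the combinatorial characterization of Definition~\ref{def:compG} that already underlies Theorems~\ref{thm:c1} and~\ref{thm:upper}.
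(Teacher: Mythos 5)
Your primary route is exactly the paper's: Corollary~\ref{cor:r1} is stated without a separate proof precisely because it is the $r=1$ specialization of Theorem~\ref{thm:upper}, and your arithmetic is right (slope $\frac{5\cdot1-1}{4}=1$, constant $\frac{221-180+10+36+9}{96}=1$, side condition $n\ge 1$). Your stand-alone derivation via distance-hereditary graphs is also sound and is worth a brief comparison: it is essentially a combinatorial unwinding of what the paper's machinery does at $r=1$. Corollary~\ref{cor:2r1} with $r=1$ produces a two-element set $S=\{v,w\}$ with $\cutrank_G(S)\le 1$, and Lemma~\ref{lem:grow} then adds $v$ at cost $|S|-1=1$; the three cases of that algorithm (row of $v$ zero, equal to row of $w$ with the entry at $w$ agreeing, or differing at $w$) correspond exactly to your pendant, false-twin, and true-twin extensions, realized respectively by $\zeta^{(1)}_{v,w}$, $\zeta^{(2)}_{v,w}$, and $\tau_w\circ\zeta^{(1)}_{v,w}\circ\tau_w$, each a single cost-one operation. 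You are right to flag that the twin additions cannot be justified by local complementation alone and must go through Definition~\ref{def:compG} and Theorem~\ref{thm:c1}; with that caveat the elementary argument is complete and arguably more illuminating, while the specialization buys you the statement for free from the general theorem.
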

Since $n-1$ is a trivial lower bound on the \CZ-complexity for any connected graph $G$, Corollary~\ref{cor:r1} implies that $\NCZ(\ket{G})=n-1$ for any connected graph $G$ with rank-width 1.
The upper bound $O(rn)$ is nearly optimal, as demonstrated by the following lemma derived from information-theoretic counting arguments.
\begin{restatable}{lemma}{leminf}
\label{lem:inf}
There exists a positive constant $c$ such that
for any positive integer $n$ and $r\le \lceil n/3\rceil$,
there exists a graph $G$ with $n$ vertices and rank-width at most $r$ such that $\NCZ(\ket{G})\ge crn/\log n$.
\end{restatable}

The lower bound derived from the rank-width is formulated as follows.
\begin{restatable}{theorem}{thmlower}\label{thm:lower}
For any connected graph $G$ with $n\ge 2$ vertices and rank-width at least $r$, $\NCZ(\ket{G})\ge n+r-2$.
\end{restatable}

The cycle graph $C_n$ with $n$ vertices and $n$ edges has rank-width two when $n\ge 5$.
Theorem~\ref{thm:lower} implies that $\NCZ(\ket{C_n})=n$ when $n\ge 5$.
This is in contrast to the fact $\NCZ(\ket{C_n})=n-1$ for $n=3,4$.
From the analysis of the rank-width of random sparse graphs~\cite{https://doi.org/10.1002/jgt.20620}, the following lemma is obtained.

\begin{restatable}{lemma}{lemtlower}
\label{lem:tlower}
There exists a constant $c\ge 1$ such that
for any $r'$ there exist $n'$ and $r\ge r'$
such that for any $n\ge n'$,
there exists a connected graph $G$ with $n$ vertices and rank-width $r$ satisfying $\NCZ(\ket{G})\le n - 1 + c (r-1)$.
\end{restatable}

This lemma implies that the lower bound $\NCZ(\ket{G})-(n-1)\ge r-1$ for a connected graph $G$ obtained from Theorem~\ref{thm:lower} is optimal up to a constant factor.

It is natural to ask whether the rank-width is an appropriate choice of the complexity measure of graphs for bounding the CZ-complexity of graph states.
A similar upper bound to Theorem~\ref{thm:upper} may be obtained from other complexity measures satisfying some properties required for the proof of Theorem~\ref{thm:upper}.
On the other hand, for the proof of Theorem~\ref{thm:lower}, the invariance under the local complementations is required for the complexity measure.
Deriving lower bounds for the CZ-complexity from other graph complexity measures is challenging due to this highly restrictive requirement, despite the upper and lower bounds in Theorems~\ref{thm:upper} and \ref{thm:lower} not being tight.

Finally, we present efficient quantum algorithms for special classes of graphs with unbounded rank-widths.

\begin{theorem}\label{thm:coin}
For any interval graph $G$ with $n$ vertices, $\NCZ(\ket{G})\le 2n-2$.
For any circle graph $G$ with $n$ vertices, $\NCZ(\ket{G})\le 1.262\cdot (n-1)\log_2 (n+1)$.
\end{theorem}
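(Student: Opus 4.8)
The plan is to handle the three classes separately, in each case using the geometric representation of the graph together with an auxiliary qubit that acts as a ``hub'' whose local complementations materialize many edges at once. Throughout I work in the combinatorial model of Theorem~\ref{thm:c1}, where a \CZ operation is an edge complementation and local complementations (as well as a single computational-basis measurement of a qubit with $\le 1$ neighbour) are free.

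For interval graphs I would fix an interval representation $I_1,\dots,I_n$ and sweep a point left to right, processing the $2n$ endpoints in order. The set of intervals currently containing the sweep point is always a clique of $G$, and every edge $\{u,v\}$ of $G$ first becomes ``needed'' exactly at the moment the earlier-finishing of $I_u,I_v$ dies. The trick is to store this \emph{active clique} in the working graph not as an actual clique but as a star centred at one auxiliary qubit $a$, so each active vertex carries only the single edge to $a$ and the active set is internally edgeless. A birth costs one \CZ (add the edge to $a$). A death of $v$ costs one \CZ: a first local complementation at $a$ turns the star on the active set into a clique, which in particular installs all the (correct) edges from $v$ to the still-active vertices; we then delete the edge $av$ with one \CZ, and a second local complementation at $a$ restores the star on the smaller active set. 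The invariant to verify is that $a$ is adjacent to exactly the active vertices, that $a$ is never adjacent to a finished vertex, and that the already-finished part equals $G$ restricted to it; this follows from the ``each edge appears once'' observation above. Making the final death free by measuring $a$ (which then has at most one neighbour) and treating the extreme intervals with care yields $\NCZ(\ket{G})\le 2n-2$; for a disconnected interval graph one applies this per component.

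For interval containment graphs, which coincide with permutation graphs, I would use divide-and-conquer: a permutation graph is $n$ points in the plane with no shared coordinate and $u\sim v$ iff one point is up-left of the other; split the points by a vertical line into three balanced groups, recurse to build the three induced subgraph states on disjoint qubits, and then add the three bipartite cross-graphs. Each cross-graph is a bipartite chain graph (the neighbourhoods on each side are nested), and the central sublemma is that a bipartite chain graph on vertex sets $A,B$ can be added on top of an existing graph state with at most $|A|+|B|$ \CZ operations, again via one auxiliary qubit and a sequence of local complementations exploiting the staircase shape of the nested neighbourhoods. Summing over the $\binom{3}{2}=3$ pairs, one merge step costs $2n$, so $T(n)\le 3T(n/3)+2n+O(1)$, which solves to $2n\log_3 n+O(n)=\tfrac{2}{\log_2 3}\,n\log_2 n+O(n)\le 1.262\,n\log n+O(n)$. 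For circle graphs I would represent $G$ as the interlacement graph of a double-occurrence word of length $2n$, choose two cut points so that the numbers of symbols whose two occurrences lie in a common arc are balanced, recurse on the three sub-words, and observe that the symbols straddling two arcs interact with each other, and with the internal symbols, through permutation-graph-type and interval/point-incidence-type bipartite structures; I would reduce these to the bipartite-chain primitive so that each level again costs $O(n)$, giving the same recurrence.

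The main obstacle is the linear-cost merge sublemma for bipartite chain graphs (and its analogues in the circle-graph case): one must install a potentially quadratic number of cross-edges using only linearly many \CZ operations, while arranging the intermediate local complementations so that neither of the two already-built subgraph states, nor the auxiliary qubit's adjacency to vertices outside the current ``active'' layer, is corrupted — morally the same difficulty as in the interval-graph sweep, but now with two nontrivial sides. Pinning down the constant $1.262$ also requires that the three-way split be genuinely balanced, which for circle graphs is the nontrivial point that one can always place the two arc boundaries so that the straddling symbols do not unbalance the recursion; I expect this, together with the bookkeeping that keeps the merge cost at exactly $2n$ per level rather than a larger multiple, to be where most of the technical work lies.
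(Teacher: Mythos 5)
Your interval-graph sweep is essentially the paper's Algorithm~\ref{alg:intervalgraph}: an auxiliary hub vertex, one cost-one operation per letter of the $2n$-letter double occurrence word, with the last two letters argued to be free, giving $2n-2$. (The paper maintains the active set as a clique attached to the hub and spends its local complementations at births rather than deaths; your star invariant is the mirror image and is sound for the same reason, namely that once a vertex is detached from the hub, later local complementations at the hub no longer touch its edges.) For interval containment graphs your route is genuinely different from the paper's, and the step you flag as the main obstacle is in fact not an obstacle: a bipartite chain graph on $A\cup B$ can be installed on top of existing states in exactly $|A|+|B|$ cost-one operations by interleaving $A$ and $B$ according to their thresholds and sweeping with a hub $z$ --- for $v\in B$ add the edge $\{v,z\}$ (one operation), and for $u\in A$ apply the cost-one operation $\zeta^{(2)}_{u,z}$ of Definition~\ref{def:cost1}, the edge complementation between $u$ and $N(z)$, which connects $u$ to all currently active $B$-vertices in a single \CZ without disturbing anything else. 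This yields $T(n)\le 3T(n/3)+2n$ and the stated constant $2/\log_2 3\approx 1.262$. The paper instead deduces the interval containment bound from the circle-graph bound by writing the interval containment graph as the symmetric difference of the interval graph and the circle graph of the same word and adding the two at cost $O(n)$; your permutation-graph recursion is a self-contained alternative for that class.

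The genuine gap is the circle-graph case. Cutting the chord diagram into three arcs does not reduce the cross-interactions to chain graphs: two chords that both straddle the same pair of arcs cross if and only if their endpoint orders on the two arcs agree, so the straddling chords between a fixed pair of arcs induce among themselves a (co-)permutation graph, not a chain graph, and there may be $\Theta(n)$ of them (e.g.\ when every chord is long). Merging such a structure already costs $\Theta(n\log n)$ by your own permutation-graph bound --- and permutation graphs have unbounded rank-width, so by Theorem~\ref{thm:lower} no $O(n)$ merge is possible --- hence the per-level cost of your recursion is not $O(n)$ and the recurrence breaks; it is also unclear that the straddling chords can be absorbed into the subproblems while keeping the three-way split balanced. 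The paper's proof of Theorem~\ref{thm:circle} is entirely different: it passes to the $4$-regular tour graph of the word, invokes Kotzig's theorem that $\kappa$-transformations (the tour-graph shadow of local complementation) connect all Eulerian cycles, reroutes an Eulerian cycle through a shortest cycle $C$ so that some vertex has degree at most $|C|-1\le 2\lfloor\log_3(n+1)\rfloor$ in a local-complement-equivalent circle graph (the Moore bound controls $|C|$), and then peels that vertex off at cost equal to its degree, iterating to get $2\lfloor\log_3(n+1)\rfloor(n-1)$. If you want to rescue a divide-and-conquer argument for circle graphs, the Davies--Jena recursion via proper colourings is the correct formalization of what your sketch is reaching for, but as written this part of your proposal does not go through.
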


Recently, Davies and Jena proved that for any circle graph $G$ with $n$ vertices and the chromatic number $k$, $\NCZ(\ket{G}) \le 2n\lceil\log_2 k\rceil$~\cite{davies2025preparing, jena2024graph}.
The class of circle graphs plays a key role within vertex-minor theory, analogous to the role of planar graphs within graph minor theory~\cite{KIM202454}.
Davies and Jena concurrently and independently introduced the notion of \CZ-distance which is almost the same as the \CZ-complexity~\cite{davies2025preparing, jena2024graph}.
Compared to the \CZ-complexity, the \CZ-distance has two restrictions: (1) the use of working qubits that are measured and discarded is not allowed, and (2) \CZ-operations may be applied only to graph states, and not to general stabilizer states.
The second restriction incurs an overhead of at most a factor of two since only one of the three types of operations appearing in elementary edge-complementations in Definition~\ref{def:cost1} is allowed.
We do not know how large the overhead is when the use of working qubits is forbidden.
Note that all graph-state preparation algorithms in this paper do not require working qubits.
Indeed, we do not have any example in which the use of working qubits helps to reduce the number of \CZ operations.
However, we prefer to define the \CZ-complexity so as to allow the use of working qubits since the lower bounds in this paper hold even in this setting, and since the \CZ-complexity has the monotonicity property, i.e., $\NCZ(\ket{H})\le\NCZ(\ket{G})$ for any graph $G$ and any vertex-minor $H$ of $G$.


\subsection{Related work}
The graph state was introduced in~\cite{PhysRevLett.91.107903}
 and has been extensively studied~\cite{PhysRevA.68.022312,PhysRevA.69.062311,PhysRevA.69.022316,PhysRevA.77.042307,claudet_et_al:LIPIcs.STACS.2025.27}.

Before the introduction of graph states, Bouchet introduced and studied \emph{the isotropic system}, which is technically very similar to the stabilizer states~\cite{BOUCHET1987231,BOUCHET198858,bouchet1989connectivity,Bouchet1990,bouchet1991efficient,BOUCHET199375,BOUCHET1994107}.
Important notions in graph states, e.g., the local complementation, the pivoting, the vertex-minor, etc., have been studied in the literature.
As an important result, Bouchet presented an $O(n^4)$-time algorithm to decide whether two graphs can be transformed to each other by the local complementations~\cite{bouchet1991efficient}.
The notion of the local complementation was introduced by Kotzig in the context of Eulerian cycle of 4-regular graphs~\cite{MR0248043}.
Fon-Der-Flaass also showed some properties of the local complementations~\cite{Fon-Der-Flaass1996}.

Oum and Seymour introduced the rank-width~\cite{OUM2006514,OUM201715}.
The rank-width is based on the cut-rank function, which was introduced and called \emph{the connectivity function} by Bouchet~\cite{doi:10.1137/0608028,BOUCHET199375}.
Although rank-width was originally introduced for approximating clique-width, it possesses its own interesting features and has since been studied extensively~\cite{OUM200579,OUM2007385,OUM201715,https://doi.org/10.1002/jgt.20620,OUM2009745,COURCELLE200791,doi:10.1137/22M153937X,10.1145/3618260.3649732,korhonen2026branch}.


H{\o}yer, Mhalla, and Perdrix introduced a graph-theoretical measure, \emph{local minimum degree}, which is the minimum of the minimum degree up to the local complementations~\cite{hoyer2006resources}.
The local minimum degree has been studied in~\cite{10.1007/978-3-642-34611-8_16, 10.1007/978-3-662-48971-0_23, 10.1007/978-3-031-75409-8_10}.
Indeed, the local minimum degree is relevant to our algorithms in this paper.

H{\o}yer, Mhalla, and Perdrix also proved that any graph state can be prepared with a constant-depth Clifford circuit~\cite{hoyer2006resources}.

\subsection{Organization}
The remainder of the paper is organized as follows. Section~\ref{sec:stab} introduces notation and terminology for stabilizer states and the Clifford group.
Section~\ref{sec:graph0} reviews graph states and the combinatorial characterizations of relevant local Clifford operations and measurements from previous work.
Section~\ref{sec:graph1} presents the combinatorial characterization of two-qubit Clifford operations and proves Theorem~\ref{thm:c1}.
Section~\ref{sec:rank} discusses rank-width and its known properties, and also provides the proofs for Lemmas~\ref{lem:inf} and \ref{lem:tlower}. Finally, Sections~\ref{sec:lower}, \ref{sec:upper}, and \ref{sec:special} are dedicated to proving Theorems \ref{thm:lower}, \ref{thm:upper}, and \ref{thm:coin}, respectively.

\section{Stabilizer states and Clifford group}\label{sec:stab}
\subsection{Stabilizer states}
\begin{definition}[Pauli group]
The Pauli matrices are defined as
\begin{align*}
I_2 &:= \begin{bmatrix}
1&0\\
0&1
\end{bmatrix},&
X &:= \begin{bmatrix}
0&1\\
1&0
\end{bmatrix},&
Y &:= \begin{bmatrix}
0&-i\\
i&0
\end{bmatrix},&
Z &:= \begin{bmatrix}
1&0\\
0&-1
\end{bmatrix}.
\end{align*}
The Pauli group $\mathcal{G}_n$ is defined as
\begin{align*}
\mathcal{G}_n &:= \bigl\{\alpha h_1\otimes h_2\otimes\dotsm\otimes h_n\mid \alpha \in\{\pm1,\pm i\},\quad 
h_j\in \{I_2,X,Y,Z\}\ \forall j\in\{1,2,\dotsc,n\}\bigr\}.
\end{align*}
\end{definition}

\begin{definition}[Stabilizer state]
Let $S$ be a subset of $\mathcal{G}_n$ of size $n$ satisfying
\begin{enumerate}
\item $\forall s\in S,\quad s^2 = I_2^{\otimes n}$.
\item $\forall s, t\in S,\quad st=ts$.
\item For $T\subseteq S,\quad \prod_{s\in T}s = \pm I_2^{\otimes n}$ only when $T=\varnothing$.
\end{enumerate}
Then, there exists a one-dimensional subspace
$\bigl\{\ket{\psi}\in\mathbb{C}^{2^n}\mid s\ket{\psi}=\ket{\psi}\ \forall s\in S\bigr\}$.
Here, the quantum state in the subspace is called a \emph{stabilizer state
that is stabilized by the stabilizer $\langle S\rangle$}. 
Furthermore, $S$ is called \emph{the stabilizer generator}.
\end{definition}
From the condition $s^2=I_2^{\otimes n}$ for stabilizer generators, a stabilizer $\langle S\rangle$ must be a subset of
\begin{align*}
\mathcal{G}_n^\pm &:= \bigl\{\alpha h_1\otimes h_2\otimes\dotsm\otimes h_n\mid \alpha \in\{\pm1\},\quad 
h_j\in \{I_2,X,Y,Z\}\ \forall j\in\{1,2,\dotsc,n\}\bigr\}.
\end{align*}
If $S=\{s_1,\dotsc, s_n\}\subseteq \mathcal{G}_n^\pm$ is a stabilizer generator,
then for any $v\in\{0,1\}^n$, $\{(-1)^{v_1}s_1,\dotsc, (-1)^{v_n}s_n\}\subseteq \mathcal{G}_n^\pm$ is also a stabilizer generator for some stabilizer state.
Stabilizers without the sign information are represented by $n\times 2n$ binary matrices.

\begin{definition}[The binary representation of Pauli operators~\cite{gottesman1997stabilizer,PhysRevA.69.022316}]
There is an isomorphism between
$\mathcal{G}_n/\langle i I_2^{\otimes n}\rangle$ and $(\mathbb{Z}/2\mathbb{Z})^{2n}$.
For $n=1$, we define an isomorphism as
\begin{align*}
\sigma_{00}:=I_2 &\longleftrightarrow \begin{bmatrix}0&0\end{bmatrix},&
\sigma_{10}:=X &\longleftrightarrow \begin{bmatrix}1&0\end{bmatrix},&
\sigma_{11}:=Y &\longleftrightarrow \begin{bmatrix}1&1\end{bmatrix},&
\sigma_{01}:=Z &\longleftrightarrow \begin{bmatrix}0&1\end{bmatrix}.
\end{align*}
For general $n$, we define an isomorphism as
\begin{align*}
\sigma_{x_1z_1}\otimes \sigma_{x_2z_2}\otimes\dotsm \otimes\sigma_{x_nz_n} &\longleftrightarrow
\begin{bmatrix}
x_1&
\cdots&
x_n&
z_1&
\cdots&
z_n
\end{bmatrix}.
\end{align*}
For $a\in(\mathbb{Z}/2\mathbb{Z})^{2n}$, the corresponding operator in $\mathcal{G}_n/\langle i I_2^{\otimes n}\rangle$ is denoted by $\sigma_a:=\sigma_{a_1a_{n+1}}\otimes\dotsm\otimes \sigma_{a_{n}a_{2n}}$.
\end{definition}

For row vectors $a, b \in (\mathbb{Z}/2\mathbb{Z})^{2n}$, $\sigma_a$ and $\sigma_b$ commute if and only if
\begin{align*}
a
\begin{bmatrix}
O_n&I_n\\
I_n&O_n
\end{bmatrix}
b^T
&=0
\end{align*}
where $O_n$ and $I_n$ are the $n\times n$ zero matrix and identity matrix on the field $\mathbb{F}_2$.
For a set $A=\{a_1,\dotsc,a_n\}\subseteq (\mathbb{Z}/2\mathbb{Z})^{2n}$ of size $n$,
a set of Pauli operators $\{\sigma_a\mid a\in A\}$ is a stabilizer generator if and only if
\begin{align*}
G
\begin{bmatrix}
O_n&I_n\\
I_n&O_n
\end{bmatrix}
G^T
&=O_n
\qquad
\text{and} \qquad \text{the rows of $G$ are linearly independent}
\end{align*}
where $G\in\mathbb{F}_2^{n\times 2n}$ denotes a matrix whose $j$-th row is $a_j$~\cite{PhysRevA.69.022316}.
Here, $G$ is called a \emph{generating matrix} of the stabilizer state.

In this study, stabilizer states with the same generating matrix are not distinguished since they are equivalent up to a Pauli operator.
\begin{restatable}{fact}{fasgn}
\label{fa:sgn}
For any two stabilizer states $\ket{\psi}$ and $\ket{\varphi}$ with the same generating matrices,
there exists a Pauli operator $g\in\mathcal{G}_n$ such that $\ket{\psi}$ is equal to $g\ket{\varphi}$ up to a constant factor.
\end{restatable}
The proof of Fact~\ref{fa:sgn} is shown in Appendix~\ref{apx:sgnL}.

\subsection{Clifford group}
\begin{definition}
Let $U(n)$ be a set of unitary matrices of size $n$. The \emph{Clifford group} is defined as
\begin{align*}
\mathcal{C}_n &:= \left\{ U\in U(2^n)\mid U\mathcal{G}_n U^\dagger = \mathcal{G}_n\right\}.
\end{align*}
\end{definition}

It is known that $\mathcal{C}_n$ is generated by the following three quantum gates
\begin{align*}
H&:=
\frac1{\sqrt{2}}
\begin{bmatrix}
1&1\\1&-1
\end{bmatrix}
,&
S&:=
\begin{bmatrix}
1&0\\0&i
\end{bmatrix}
,&
\CZ&:=
\begin{bmatrix}
1&0&0&0\\
0&1&0&0\\
0&0&1&0\\
0&0&0&-1\\
\end{bmatrix}.
\end{align*}

The set of \emph{local Clifford operators} is the subset of $\mathcal{C}_n$ defined as
\begin{align*}
\mathcal{C}_1^{\otimes n} &:= \left\{ C_1\otimes\dotsm\otimes C_n\mid C_i \in \mathcal{C}_1,\,\forall i\in\{1,\dotsc,n\} \right\}.
\end{align*}

A Clifford operator $C\in\mathcal{C}_n$ transforms a stabilizer state to another stabilizer state.
Applying a Clifford operator $C\in\mathcal{C}_N$ transforms a stabilizer $s\in\mathcal{G}_n$ to $CsC^\dagger$.
Indeed, $X$ and $Z$ are transformed as follows.
\begin{align}
HXH^\dagger &= Z,&
HZH^\dagger &= X,\nonumber\\
SXS^\dagger &= Y,&
SZS^\dagger &= Z\nonumber\\
\CZ (X\otimes I_2) \CZ^\dagger &= X\otimes Z,&
\CZ (I_2\otimes X) \CZ^\dagger &= Z\otimes X,\nonumber\\
\CZ (Z\otimes I_2) \CZ^\dagger &= Z\otimes I_2,&
\CZ (I_2\otimes Z) \CZ^\dagger &= I_2\otimes Z.
\label{eq:clifford}
\end{align}

Let $\sigma_{f_C(a)} := C\sigma_aC^\dagger$.
The following equality holds on $\mathcal{G}_n/\langle i I_2^{\otimes n}\rangle$
\begin{align*}
\sigma_{f_C(a+b)} = C \sigma_{a+b} C^\dagger = C \sigma_a \sigma_b C^\dagger = C \sigma_a C^\dagger C\sigma_b C^\dagger=\sigma_{f_C(a)}\sigma_{f_C(b)}=\sigma_{f_C(a)+f_C(b)}.
\end{align*}
Hence, the function $f_C\colon(\mathbb{Z}/2\mathbb{Z})^{2n}\to(\mathbb{Z}/2\mathbb{Z})^{2n}$ is linear so that there exists a unique $2n\times 2n$ $\mathbb{F}_2$-matrix $L_C$ satisfying $a L_C = f_C(a)$.
From~\eqref{eq:clifford}, we obtain the following $\mathbb{F}_2$-matrix representations,
\begin{align*}
L_H &= \begin{bmatrix}0&1\\1&0\end{bmatrix},&
L_S &= \begin{bmatrix}1&1\\0&1\end{bmatrix},&
L_\CZ &=
\begin{bmatrix}
1&0&0&1\\
0&1&1&0\\
0&0&1&0\\
0&0&0&1\\
\end{bmatrix}.
\end{align*}

Note that all six invertible $2\times 2$ matrices over $\mathbb{F}_2$ are generated by $L_H$ and $L_S$ as follows.
\begin{align}
L_I = L_H^2 = L_S^2 &= \begin{bmatrix}
1&0\\
0&1
\end{bmatrix},&
L_H &= \begin{bmatrix}
0&1\\
1&0
\end{bmatrix},&
L_SL_H &= \begin{bmatrix}
1&1\\
1&0
\end{bmatrix},&
L_S &= \begin{bmatrix}
1&1\\
0&1
\end{bmatrix},\nonumber\\
L_HL_SL_H = L_SL_HL_S &= \begin{bmatrix}
1&0\\
1&1
\end{bmatrix},&
L_HL_S &= \begin{bmatrix}
0&1\\
1&1
\end{bmatrix}.
\label{eq:1clifford}
\end{align}
Obviously, $L_{DC} = L_C L_D$ for any Clifford operations $C, D \in \mathcal{C}_n$.

In this study, Clifford operators with the same $\mathbb{F}_2$ matrix representation are not distinguished since they are equivalent up to a Pauli operator.
\begin{restatable}{fact}{faL}\label{fa:L}
Let $C, D\in\mathcal{C}_n$ be Clifford operators.
Then, $L_C=L_D$ if and only if there exists a Pauli operator $g\in\mathcal{G}_n$ such that $C$ is equal to $gD$ up to a constant factor.
\end{restatable}
The proof of Fact~\ref{fa:L} is shown in Appendix~\ref{apx:sgnL}.

\section{Graph state and combinatorial characterizations of local Clifford operations and measurements}\label{sec:graph0}
For a finite set $V$, let $\binom{V}{2}$ denote a family of subsets of $V$ of size exactly two.
For a finite set $V$, and $E\subseteq\binom{V}{2}$, a pair $(V,E)$ is called a \emph{graph}.
For a graph $G=(V, E)$, $V(G):=V$ and $E(G):=E$ are called \emph{vertex set} and \emph{edge set}, respectively.
The \emph{size of graph} $G$ is defined as $|V(G)|$.
For a graph $G$ and $v\in V(G)$,
the \emph{neighborhood of $v$ in $G$} is defined as $N_G(v):=\{w\in V\mid \{v,w\}\in E\}$.
\begin{definition}[Graph state]
For a graph $G=(V=\{1,2,\dotsc,n\},E)$, a \emph{graph state} $\ket{G}$ is a stabilizer state defined by stabilizer generators
\begin{align*}
X_v\otimes \bigotimes_{w\in N_G(v)} Z_w,\qquad v\in V.
\end{align*}
Here, for any $2\times 2$ matrix $U$, $U_v := g_1\otimes\dotsm\otimes g_n$ where $g_v = U$ and $g_w = I_2$ for $w\in V\setminus\{v\}$.
\end{definition}
The generating matrix of $\ket{G}$ is
$\begin{bmatrix}
I\mid A
\end{bmatrix}$
where $A\in\mathbb{F}_2^{n\times n}$ is the \emph{adjacency matrix} of $G$, i.e., $A_{v,w}=1$ if and only if $\{v,w\}\in E(G)$.

In Definition~\ref{def:comp}, the local Clifford operations are free to use.
Hence, the two stabilizer states that are equivalent up to the local Clifford operations can be identified in our analysis.
\begin{definition}[Local Clifford equivalence]
Stabilizer states $\ket{\psi}$ and $\ket{\varphi}$ are \emph{local-Clifford (LC) equivalent} if
there exists a local Clifford operator $C\in\mathcal{C}_1^{\otimes n}$ such that $C\ket{\psi} = \ket{\varphi}$.
\end{definition}

The following lemma implies that it is sufficient to prepare graph states for preparing general stabilizer states.
\begin{lemma}[\cite{PhysRevA.69.022316}]\label{lem:stabilizer}
For any stabilizer state $\ket{\psi}$, there exists a graph $G$ such that $\ket{\psi}$ and $\ket{G}$ are LC-equivalent.
\end{lemma}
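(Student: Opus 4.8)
The plan is to work entirely with generating matrices. Write the generating matrix of $\ket{\psi}$ as $[X\mid Z]$ with $X,Z\in\mathbb{F}_2^{n\times n}$, so $[X\mid Z]$ has rank $n$ and $XZ^T=ZX^T$ (the symplectic/isotropy condition). I would transform this matrix, using only operations that send $\ket{\psi}$ to an LC-equivalent stabilizer state, into the form $[I_n\mid A]$ with $A$ symmetric and zero-diagonal, which is exactly the generating matrix of $\ket{G}$ for the graph $G$ with adjacency matrix $A$, and then finish with Fact~\ref{fa:sgn}. Two kinds of moves are available: (i) invertible row operations (left multiplication by $R\in GL_n(\mathbb{F}_2)$), which merely change the chosen generating set of the stabilizer group and so fix $\ket{\psi}$ itself; and (ii) for each qubit $i$, right action on the coordinate pair $(i,n+i)$ by any element of $GL_2(\mathbb{F}_2)$, since by \eqref{eq:1clifford} the single-qubit Cliffords realize all six such matrices, and applying a local Clifford sends $\ket{\psi}$ to an LC-equivalent stabilizer state whose generating matrix is transformed in exactly this way.

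The key step is: there is a subset $S\subseteq\{1,\dots,n\}$ such that, after applying the Hadamard $H$ (which swaps coordinates $i\leftrightarrow n+i$) on every qubit in $S$, the resulting $X$-block is invertible. I would prove this using the isotropy condition. Let $V\le\mathbb{F}_2^{2n}$ be the row space of $[X\mid Z]$; the isotropy condition together with $\dim V=n$ makes $V$ a Lagrangian subspace of the symplectic space $\mathbb{F}_2^{2n}$. Put $k=\operatorname{rank}(X)$ and $K':=\{z\in\mathbb{F}_2^n:(0,z)\in V\}$. Then $\dim K'=n-k$, and since each $(0,z)\in V$ is symplectically orthogonal to all of $V$, one checks that $K'$ is the ordinary orthogonal complement of the row space of $X$. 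Choose $S$ to be a set of $n-k$ pivot coordinates of a reduced-row-echelon basis of $K'$, so that the restrictions of that basis to the coordinates in $S$ form a basis of $\mathbb{F}_2^S$. I would then show that the ``swapped projection'' $\phi\colon V\to\mathbb{F}_2^n$ sending $(x,z)$ to the vector whose $i$-th entry is $x_i$ for $i\notin S$ and $z_i$ for $i\in S$ is injective: if $(x,z)\in\ker\phi$ then $x$ is supported on $S$, and since $x$ lies in the row space of $X$ it is orthogonal to $K'$; as $\{z'|_S:z'\in K'\}=\mathbb{F}_2^S$ this forces $x|_S=0$, hence $x=0$; then $(0,z)\in V$ with $z$ supported off $S$ gives $z\in K'$ with $z|_S=0$, and the pivot structure forces $z=0$. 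Since $\phi$ is injective on the $n$-dimensional space $V$, the swapped $X$-block has rank $n$.

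Once $X$ is invertible I would left-multiply by $X^{-1}$ to obtain $[I_n\mid B]$ with $B=X^{-1}Z$; row operations preserve the isotropy condition, so $B+B^{T}=IB^{T}+BI^{T}=0$, i.e.\ $B$ is symmetric. Finally, for each $i$ with $B_{ii}=1$, applying the phase gate $S$ on qubit $i$ replaces column $i$ of the $Z$-block by (column $i$ of $Z$) $+$ (column $i$ of $X$), i.e.\ adds $e_ie_i^{T}$ to $B$; this clears the diagonal entry without changing any off-diagonal entry, keeps $B$ symmetric, and leaves the $X$-block equal to $I_n$. Iterating over all such $i$ yields $[I_n\mid A]$ with $A$ symmetric and zero-diagonal. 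By Fact~\ref{fa:sgn} the resulting stabilizer state equals $\ket{G}$ up to a Pauli operator, and since Pauli operators lie in $\mathcal{C}_1^{\otimes n}$, composing all the moves shows $\ket{\psi}$ and $\ket{G}$ are LC-equivalent.

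The main obstacle is the second paragraph — exhibiting the subset $S$ of Hadamards that makes $X$ invertible — since it is the only place the Lagrangian structure is genuinely used, and it requires the somewhat delicate argument with $K'$, its pivot coordinates, and the injectivity of the swapped projection. Everything after that is routine Gaussian elimination over $\mathbb{F}_2$.
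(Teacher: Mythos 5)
Your proof is correct. Note that the paper does not prove Lemma~\ref{lem:stabilizer} at all; it is imported from Van den Nest, Dehaene and De Moor, and your argument is essentially a self-contained reconstruction of their proof: make the $X$-block invertible by Hadamards on a suitable subset $S$, row-reduce to $[I_n\mid B]$, use the preserved isotropy condition to get $B=B^T$, and clear the diagonal with $S$ gates. All the individual steps check out --- in particular $\dim K'=n-\operatorname{rank}(X)$ via the projection onto the first $n$ coordinates, the identification $K'=(\operatorname{rowspace} X)^\perp$ from the symplectic self-orthogonality of $V$, the surjectivity of $z\mapsto z|_S$ on the pivot set, and the two-part injectivity argument for the swapped projection $\phi$. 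The one place where your write-up differs in flavor from the original is exactly this existence-of-$S$ step: the standard presentation argues by selecting $n$ independent columns of $[X\mid Z]$ with at most one from each pair $\{i,n+i\}$, whereas your Lagrangian/orthogonal-complement formulation packages the same fact more transparently and makes the injectivity of $\phi$ the single thing to verify. The concluding appeal to Fact~\ref{fa:sgn} together with the observation that Pauli operators are local Cliffords correctly closes the gap between ``same generating matrix'' and ``LC-equivalent.''
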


For the combinatorial characterization of LC-equivalence of two graph states, the local complementation is defined as follows.
\begin{definition}[Local complementation]
The \emph{local complementation} is an operation on a graph.
For a graph $G=(V,E)$, and a vertex $v\in V$,
the local complementation $\tau_v(G)=(V,E')$ is defined as
\begin{align*}
E' &= E \bigtriangleup \binom{N_G(v)}{2}
\end{align*}
where $\bigtriangleup$ denotes the symmetric difference.
In other words, the local complementation at $v$ leaves all adjacencies outside $N_G(v)$ unchanged and complements the subgraph induced by $N_G(v)$.

Two graphs $G$ and $H$ are said to be \emph{local-complement equivalent} if $G$ and $H$ can be transformed to each other by a sequence of local complementations.
\end{definition}
Figure~\ref{fig:lc} describes an example of the local complementation.

\begin{lemma}[Local Clifford equivalence of graph states~\cite{PhysRevA.69.022316}]~\label{lem:lc}
For any graphs $G$ and $H$, $\ket{G}$ and $\ket{H}$ are local-Clifford equivalent if and only if
$G$ and $H$ are local-complement equivalent.
\end{lemma}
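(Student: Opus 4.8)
The plan is to prove the two implications separately, working throughout with the binary symplectic description. Recall that $\ket{G}$ has generating matrix $[\,I\mid A\,]$ with $A$ the (symmetric, zero-diagonal) adjacency matrix of $G$, and that a local Clifford $U=U_1\otimes\cdots\otimes U_n$ is represented, modulo a Pauli factor (Fact~\ref{fa:L}), by a $2n\times 2n$ matrix $L_U=\left[\begin{smallmatrix}\alpha&\beta\\\gamma&\delta\end{smallmatrix}\right]$ whose four blocks $\alpha,\beta,\gamma,\delta$ are diagonal and whose $v$-th $2\times2$ diagonal block lies in $\mathrm{GL}_2(\mathbb{F}_2)$ (one of the six matrices in~\eqref{eq:1clifford}); conversely, every block-diagonal matrix of this form is the $L$-matrix of some local Clifford. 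Since two stabilizer states with equal generating matrices differ only by a Pauli operator, itself a local Clifford (Fact~\ref{fa:sgn}), $\ket{G}$ and $\ket{H}$ are LC-equivalent if and only if there is such a block-diagonal $L_U$ for which $[\,I\mid A\,]L_U=[\,\alpha+A\gamma\mid\beta+A\delta\,]$ has the same row space as $[\,I\mid B\,]$; as $[\,I\mid B\,]$ has invertible $X$-part, this amounts to: $M:=\alpha+A\gamma$ is invertible and $MB=\beta+A\delta$.

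For the direction ``$G$ and $H$ local-complement equivalent $\Rightarrow$ $\ket{G}$ and $\ket{H}$ LC-equivalent'' it is enough to realize a single local complementation $\tau_v$ by a local Clifford. Let $U^{(v)}$ be the local Clifford whose factor on $v$ has $L$-block $\left[\begin{smallmatrix}1&0\\1&1\end{smallmatrix}\right]$ (the one with $X\mapsto X$, $Z\mapsto Y$), whose factor on each $w\in N_G(v)$ has $L$-block $L_S=\left[\begin{smallmatrix}1&1\\0&1\end{smallmatrix}\right]$, and which is the identity elsewhere; all these $2\times2$ matrices are products of $L_H$ and $L_S$ by~\eqref{eq:1clifford}. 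Computing $[\,I\mid A\,]L_{U^{(v)}}$ and then performing the row operation ``add row $v$ to row $u$ for every $u\in N_G(v)$'' — which does not change the row space — brings the matrix to $[\,I\mid A'\,]$, where $A'$ agrees with $A$ outside the rows $N_G(v)$ and $A'_{u,\cdot}=A_{u,\cdot}+A_{v,\cdot}+e_u$ for $u\in N_G(v)$; one checks that $A'$ is symmetric with zero diagonal and is precisely the adjacency matrix of $\tau_v(G)$. Hence $U^{(v)}\ket{G}$ and $\ket{\tau_v(G)}$ have the same generating matrix, so they coincide up to a Pauli, and composing over a sequence of local complementations gives the implication.

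For the converse, suppose $\ket{G}$ and $\ket{H}$ are LC-equivalent, so by the first paragraph there is a block-diagonal $L_U$ with $M=\alpha+A\gamma$ invertible and $MB=\beta+A\delta$. I would induct on the potential $d(U):=|D|$, where $D:=\{v:\gamma_{vv}=1\}$ is the set of qubits whose $L$-block is not upper triangular (equivalently, at which $U$ moves $Z_v$ outside the group generated by the $Z$'s), and show that $G$ and $H$ are local-complement equivalent. If $d(U)=0$ then $\gamma=0$; invertibility of each block forces $\alpha=\delta=I$, so $M=I$ and $B=\beta+A$, and since $\beta$ is diagonal while $A$ and $B$ have zero diagonal we get $\beta=0$, hence $B=A$ and $G=H$. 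If $d(U)>0$ there are two cases. In case \emph{(i)}, some $v\in D$ has $\alpha_{vv}=1$: then $\tau_v(G)$ is obtained from $G$ by a local complementation, and replacing $U$ by $U(U^{(v)})^{-1}$ — which still takes $\ket{\tau_v(G)}$ to $\ket{H}$ up to a Pauli — yields the block-diagonal $L$-matrix $L_{U^{(v)}}L_U$, whose block at $v$ has become upper triangular while left-multiplication by $L_S$ leaves the lower-left entry of every other block unchanged; the potential has dropped by one, so by induction $\tau_v(G)$ and $H$ are local-complement equivalent, hence so are $G$ and $H$. In case \emph{(ii)}, every $v\in D$ has $\alpha_{vv}=0$: then $\alpha=I-\gamma$, and a short computation identifies $M$ as the matrix equal to the identity on the columns outside $D$ and to $A_{\cdot,w}$ on each column $w\in D$, so invertibility of $M$ forces the principal submatrix of $A$ indexed by $D$ to be nonsingular, hence to have no zero row; thus every vertex of $D$ has a neighbour in $D$, and we may pick an edge $\{v_1,v_2\}$ with $v_1,v_2\in D$. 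Then $\tau_{v_1}\tau_{v_2}\tau_{v_1}(G)$, the pivot on this edge, is local-complement equivalent to $G$, and the analogous replacement of $U$ left-multiplies the blocks at $v_1$ and $v_2$ (each of the form $\left[\begin{smallmatrix}0&1\\1&\ast\end{smallmatrix}\right]$, since $v_1,v_2\in D$ and $\alpha=0$ there) by $L_H$, turning them upper triangular, while only multiplying the remaining blocks by powers of $L_S$; the potential has dropped by two, and induction finishes the argument.

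The main obstacle — and the only genuinely combinatorial step — is case \emph{(ii)}: a single local complementation need not help (when $\alpha_{vv}=0$, the move $\tau_v$ keeps $v$ in $D$), so one has to exploit the invertibility of $M$ to locate an edge \emph{inside} $D$ and pivot on it. The supporting observations are elementary — a nonsingular matrix over $\mathbb{F}_2$ has no zero row, and left-multiplication by $L_S$ preserves the lower-left entry of a $2\times2$ block — but assembling them, together with the bookkeeping of exactly which blocks are touched by the compensating local Cliffords $U^{(v)}$ along the pivot, is the heart of the matter. An essentially equivalent alternative is to work with Bouchet's isotropic systems: recover $G$ as a fundamental graph and invoke the classification of isotropic systems up to change of Eulerian vector; but one must still prove the analogue of this progress step in that language.
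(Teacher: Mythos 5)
The paper does not prove this lemma at all --- it is imported verbatim from Van den Nest, Dehaene and De Moor with a citation --- so there is no in-paper proof to compare against. Your argument is a correct, self-contained reconstruction of that cited result, and it follows essentially the same symplectic-descent strategy as the original: reduce LC-equivalence to the existence of a block-diagonal $L_U=\left[\begin{smallmatrix}\alpha&\beta\\\gamma&\delta\end{smallmatrix}\right]$ with $\alpha+A\gamma$ invertible and $(\alpha+A\gamma)B=\beta+A\delta$, realize $\tau_v$ explicitly by the blocks $\left[\begin{smallmatrix}1&0\\1&1\end{smallmatrix}\right]$ at $v$ and $L_S$ at $N_G(v)$, and then induct on $|D|$ with $D=\{v:\gamma_{vv}=1\}$. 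I checked the load-bearing computations and they hold: the base case forces $\beta=0$ from the zero diagonals of $A$ and $B$; in case (i) the block at $v$ becomes $\left[\begin{smallmatrix}1&\beta_{vv}\\0&1\end{smallmatrix}\right]$ while left-multiplication by $L_S$ or $I$ preserves every other $(2,1)$-entry; in case (ii) invertibility of $M$ is indeed equivalent to nonsingularity of the principal submatrix $A[D]$ (which also rules out $|D|=1$ there, so the case split is exhaustive), and the per-qubit block of the pivot operator at $v_1$ and $v_2$ works out to $TL_ST=L_ST L_S=L_H$, sending $\left[\begin{smallmatrix}0&1\\1&\delta\end{smallmatrix}\right]$ to an upper-triangular block so the potential drops by two. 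Your use of the pivot to dispose of case (ii) in one stroke is a clean alternative to the bookkeeping in the original (where one local complementation at a neighbour first converts a case-(ii) vertex into a case-(i) vertex, requiring a lexicographic potential); otherwise the two arguments are the same in substance.
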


Lemma~\ref{lem:lc} provides a combinatorial characterization of local Clifford operations on graph states.
Next, we consider a combinatorial characterization of measurements.
In quantum algorithms in Definition~\ref{def:comp}, general stabilizer states, which are not necessarily graph states, may be measured in the computational basis.
Since any stabilizer state is LC-equivalent to some graph state,
it is sufficient to consider Pauli measurements for graph states.

\begin{lemma}[Pauli measurements for graph states~\cite{PhysRevA.69.062311}]~\label{lem:paulim}
For any graph $G$, let $\ket{G^+}$ and $\ket{G^-}$ be two possible stabilizer states obtained by a Pauli measurement, i.e., in the eigenbasis of $X$, $Y$ or $Z$, of a qubit in $\ket{G}$.
Then, $\ket{G^+}$ and $\ket{G^-}$ are LC-equivalent to some common graph state $\ket{H}$.
Here, $H$ is obtained from $G$ by some local complementations and deletion of the vertex that corresponds to the measured qubit.
\end{lemma}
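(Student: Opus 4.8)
The plan is to work in the stabilizer formalism and to reduce measurements in the $X$- and $Y$-bases to measurements in the $Z$-basis by pre-applying the local Clifford operations that implement local complementations.

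Start from the standard update rule for a single-qubit Pauli measurement. The graph state $\ket{G}$ has the independent generating set $g_u := X_u\otimes\bigotimes_{w\in N_G(u)}Z_w$ for $u\in V(G)$. To measure qubit $v$ in the eigenbasis of $P\in\{X,Y,Z\}$, consider $P_v$. If $P_v$ commutes with all $g_u$ the outcome is deterministic; otherwise pick a generator $g_j$ with $P_vg_j=-g_jP_v$, replace $g_j$ by $\varepsilon P_v$ where $\varepsilon=\pm1$ is the outcome, and replace every other anticommuting generator $g_k$ by $g_kg_j$. This yields a generating set for $\ket{G^{\varepsilon}}$ in which exactly one generator, $\varepsilon P_v$, acts nontrivially on $v$, and after multiplying the remaining generators by $\varepsilon P_v$ where needed, all others act as $I_2$ on $v$. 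In particular $\ket{G^{+}}$ and $\ket{G^{-}}$ have the same generating matrix, so by Fact~\ref{fa:sgn} they differ by a Pauli operator and are LC-equivalent; discarding qubit $v$ deletes the generator $\varepsilon P_v$ and leaves an $(n-1)$-qubit stabilizer state $\ket{\psi_P}$, common to both outcomes up to a Pauli. It then suffices to show $\ket{\psi_P}$ is LC-equivalent to $\ket{H}$ with $H$ obtained from $G$ by local complementations and deleting $v$.

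For $P=Z$: only $g_v$ anticommutes with $Z_v$, and for $w\in N_G(v)$ the $Z_v$ factor of $g_w$ is cancelled (up to sign) by the new generator; one checks the remaining $n-1$ generators are exactly those of the graph state on $G-v$, so $H=G-v$. For $P=Y$ and $P=X$ I reduce to this case using the fact underlying Lemma~\ref{lem:lc} that local complementation at $u$ is realized by a local Clifford $U_u$ supported on $\{u\}\cup N_G(u)$ whose factor on $u$ is a $\sqrt{\pm iX}$-type gate (interchanging the $Y$- and $Z$-eigenbases, fixing the $X$-eigenbasis) and whose factor on each $w\in N_G(u)$ is a $\sqrt{\pm iZ}$-type gate (interchanging the $X$- and $Y$-eigenbases, fixing the $Z$-eigenbasis). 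Since $U_u\ket{G}=\ket{\tau_u(G)}$ up to a Pauli, and conjugating a measurement by a Clifford merely conjugates the measured Pauli: (i) applying $U_v$ turns the $Y$-measurement of $\ket{G}$ at $v$ into a $Z$-measurement of $\ket{\tau_v(G)}$ at $v$, giving $H=\tau_v(G)-v$; (ii) if $v$ has no neighbor the $X$-measurement is deterministic and $\ket{\psi_X}=\ket{G-v}$, and otherwise, for any chosen $w\in N_G(v)$, applying $U_w$ turns the $X$-measurement at $v$ into a $Y$-measurement of $\ket{\tau_w(G)}$ at $v$, whence by (i) $H=\tau_v(\tau_w(G))-v$. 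In every case $H$ is obtained from $G$ by local complementations followed by deleting $v$, and both $\ket{G^+}$ and $\ket{G^-}$ are LC-equivalent to $\ket{H}$.

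The main obstacle is the bookkeeping of signs and phases: one must verify that all of the extra factors introduced — those produced by multiplying anticommuting generators, and the phases arising when conjugating $P_v$ by the $U_u$'s — are genuinely Pauli operators, so that they are absorbed by Fact~\ref{fa:sgn} and do not affect LC-equivalence; one should also check that in the $X$-case the resulting graph is, up to further local complementations, independent of the chosen neighbor $w$, and that the degenerate cases where $v$ is isolated are consistent with the statement, corresponding to zero local complementations.
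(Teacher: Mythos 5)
Your argument is correct and is essentially the standard proof from the cited reference~\cite{PhysRevA.69.062311}; the paper states this lemma without proof, so there is nothing different to compare against. (Two minor remarks: in the $X$-case your graph $\tau_v(\tau_w(G))-v$ differs from the usual formula $\tau_w(\tau_v(\tau_w(G))-v)$ only by a final local complementation at $w$, which is immaterial for LC-equivalence; and independence of the choice of neighbor $w$ is not actually needed, since any fixed choice already yields a common $H$ for both outcomes.)
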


Lemma~\ref{lem:paulim} implies that a Pauli measurement and an adaptive local Clifford operation for a graph state $\ket{G}$ give some graph state $\ket{H}$ deterministically.
Furthermore, the graph $H$ is obtained by local complementation and a vertex deletion for $G$.
This observation implies the following lemma.

\begin{definition}[Vertex minor~\cite{BOUCHET198858,dahlberg2018}]
A graph $H$ is a \emph{vertex minor} of graph $G$ if $H$ is obtained from $G$ by local complementations and vertex deletions.
\end{definition}

\begin{lemma}[\cite{dahlberg2018,PhysRevA.69.062311,BOUCHET198858}]\label{lem:vm}
For any graphs $G$ and $H$, $\ket{H}$ is obtained from $\ket{G}$ with probability 1 by a local Clifford operation followed by measurement of qubits in the computational basis and an adaptive local Clifford operation if and only if
$H$ is a vertex-minor of $G$.
\end{lemma}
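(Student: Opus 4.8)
The plan is to prove the two implications separately, in each case reducing everything to the combinatorial facts already recorded: Lemma~\ref{lem:stabilizer} (every stabilizer state is LC-equivalent to a graph state), Lemma~\ref{lem:lc} (LC-equivalence of graph states is the same as local-complement equivalence), and Lemma~\ref{lem:paulim} (a Pauli measurement of one qubit of a graph state yields, deterministically up to local Cliffords, a graph obtained by some local complementations and one vertex deletion). The guiding observation is that both a local Clifford and a computational-basis measurement act \emph{deterministically at the level of graphs, modulo LC-equivalence}, so the whole procedure can be tracked purely combinatorially; moreover, since stabilizer states are identified up to a Pauli (Facts~\ref{fa:sgn} and~\ref{fa:L}), the adaptivity of the final local Clifford is only ever needed to fix an overall Pauli.

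For the ``only if'' direction, suppose the procedure applies an initial local Clifford $C_1$ to $\ket G$, measures a set $S$ of qubits in the computational basis, and then an adaptive local Clifford. I would prove, by induction on $|T|$ over subsets $T\subseteq S$, the invariant that for every combination of measurement outcomes on $T$ the resulting stabilizer state is LC-equivalent to a graph state that is a vertex-minor of $G$. The base case $T=\varnothing$ holds because $C_1\ket G$ is LC-equivalent to $\ket G$, hence (Lemmas~\ref{lem:stabilizer} and~\ref{lem:lc}) to a graph local-complement equivalent to $G$. For the inductive step, if the current state equals $D\ket{G_T}$ for a local Clifford $D$ and a graph $G_T$ that is a vertex-minor of $G$, then measuring one more qubit $v$ in the computational basis is the same as measuring qubit $v$ of $\ket{G_T}$ in the Pauli basis $D_v^\dagger Z D_v$ and then applying $D$ to the remaining qubits; by Lemma~\ref{lem:paulim} the former yields, up to local Cliffords, a graph obtained from $G_T$ by local complementations and deletion of $v$, which is again a vertex-minor of $G_T$, hence of $G$ by transitivity. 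After all of $S$ is measured, the final adaptive local Clifford keeps us in the same LC-equivalence class, and since the output is $\ket H$, it follows that $\ket H$ is LC-equivalent to a vertex-minor of $G$; Lemma~\ref{lem:lc} then gives that $H$ itself is a vertex-minor of $G$.

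For the ``if'' direction, write a transformation of $G$ into $H$ as a sequence $G=G_0,G_1,\dots,G_k=H$ in which each step is a local complementation or a single-vertex deletion. I would realize a local-complementation step by the local Clifford supplied by Lemma~\ref{lem:lc}, and a vertex-deletion step $G_i\mapsto G_i\setminus v$ by measuring qubit $v$ in the computational basis, which is the $Z$-measurement case of Lemma~\ref{lem:paulim} and produces $\ket{G_i\setminus v}$ up to a local Pauli (using the general form of Lemma~\ref{lem:paulim} one may instead follow the measurement by a fixed local Clifford undoing the known local complementations). Crucially all such corrections can be chosen \emph{non-adaptively}: by Fact~\ref{fa:L} and Lemma~\ref{lem:paulim} the post-measurement state has a fixed generating matrix regardless of the outcome, so a single local Clifford suffices and any outcome-dependence is confined to an overall Pauli. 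Chaining these steps produces a procedure of the form: local Clifford, measurement, local Clifford, measurement, $\dots$, local Clifford.

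The one genuine obstacle is collapsing this chain into the normal form demanded by the statement --- one local Clifford, then one round of computational-basis measurements, then one adaptive local Clifford. Here I would observe that each intermediate local Clifford acts only on qubits not yet measured, hence is supported away from every previously measured qubit, so it commutes with all earlier measurements and can be slid to the front and absorbed into the initial local Clifford. The measurements that remain are all in the computational basis on pairwise distinct qubits (the distinct deleted vertices), hence commute and amount to measuring a single set of qubits in the computational basis. Finally, since at every stage the state equals the relevant graph state up to a Pauli, an adaptive Pauli correction at the end --- a special case of an adaptive local Clifford --- makes the output exactly $\ket H$, so $\ket H$ is obtained with probability $1$. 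The subtlety throughout is maintaining non-adaptivity of everything except this last correction, which is exactly what makes the commuting/sliding step legitimate; the remainder is routine bookkeeping with the cited lemmas.
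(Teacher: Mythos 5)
Your proposal is correct and follows essentially the route the paper itself takes: the paper states Lemma~\ref{lem:vm} as a known result and justifies it only by the remark that Lemma~\ref{lem:paulim} (Pauli measurements yield, deterministically up to adaptive local Cliffords, a graph obtained by local complementations and a vertex deletion) together with Lemma~\ref{lem:lc} immediately gives the claim. Your write-up simply supplies the details of that same derivation — reducing computational-basis measurements after a local Clifford to Pauli measurements of a graph state, inducting over the measured qubits, and collapsing the interleaved circuit into the stated normal form.
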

Lemmas~\ref{lem:vm} provides combinatorial characterizations of local Clifford operations and measurements in the computational basis.
Dahlberg, Helsen, and Wehner showed that it is NP-hard to decide whether $G$ is a vertex-minor of $H$ for given graphs $G$ and $H$~\cite{Dahlberg_2020}.

\begin{figure}[t]
\centering
\begin{subfigure}{0.4\hsize}
\centering
\begin{tikzpicture}[rotate=90]
\node[circle,draw,very thick, fill=red] (a) at (0:1) {};
\node[circle,draw,very thick] (b) at (72:1) {};
\node[circle,draw,very thick] (c) at (144:1) {};
\node[circle,draw,very thick] (d) at (216:1) {};
\node[circle,draw,very thick] (e) at (288:1) {};
\node[circle,draw,very thick] (f) at (0,0) {};
\draw[very thick] (a) -- (b) -- (c) -- (d) -- (e) -- (a);
\draw[very thick] (f) -- (a);
\draw[very thick] (f) -- (b);
\draw[very thick] (f) -- (c);
\draw[very thick] (f) -- (d);
\draw[very thick] (f) -- (e);
\end{tikzpicture}
\end{subfigure}
\begin{subfigure}{0.4\hsize}
\centering
\begin{tikzpicture}[rotate=90]
\node[circle,draw,very thick, fill=red] (a) at (0:1) {};
\node[circle,draw,very thick] (b) at (72:1) {};
\node[circle,draw,very thick] (c) at (144:1) {};
\node[circle,draw,very thick] (d) at (216:1) {};
\node[circle,draw,very thick] (e) at (288:1) {};
\node[circle,draw,very thick] (f) at (0,0) {};
\draw[very thick] (a) -- (b) -- (c) -- (d) -- (e) -- (a);
\draw[very thick] (f) -- (a);
\draw[very thick] (f) -- (c);
\draw[very thick] (f) -- (d);
\draw[very thick] (b) -- (e);
\end{tikzpicture}
\end{subfigure}
\caption{Local complementation at the red top vertex transforms the graph on the left into the graph on the right, and vice versa.}\label{fig:lc}
\end{figure}

\section{Combinatorial characterization of two-qubit Clifford operations: Proof of Theorem~\ref{thm:c1}}\label{sec:graph1}

In this section, we consider a combinatorial characterization of two-qubit Clifford operations.
For the combinatorial interpretation of the \CZ operation, we define elementary edge-complementations as follows.
The three transformations below were derived for graphical descriptions of \CZ operations on stabilizer states in~\cite{PhysRevA.77.042307}.

\begin{definition}[{Elementary edge-complementations}]\label{def:cost1}
For a graph $G=(V,E)$, an \emph{elementary edge-complementation} is defined as any one of the following:
\begin{enumerate}
\item For $\{v,w\}\in \binom{V}{2}$, complement an edge $\{v,w\}$, i.e., add the edge if it does not exist, remove it if it does.
\item For $\{v,w\}\in \binom{V}{2}$, complement all edges between $v$ and $N_G(w)\setminus\{v\}$.
\item For $\{v,w\}\in \binom{V}{2}\setminus E$, complement all edges in $\bigl\{\{x,y\}\in \binom{V}{2}\mid x\in N_G(v),\, y\in N_G(w),\, \{x,y\}\nsubseteq N_G(v)\cap N_G(w)\bigr\}$.
\end{enumerate}
The second operation is called \emph{the edge complementation between $v$ and $N_G(w)$}.
The third operation is called \emph{the edge complementation between $N_G(v)$ and $N_G(w)$}.
\end{definition}

It is easy to see that the above operations correspond to some two-qubit Clifford operations.
The first operation corresponds to the \CZ operation.
Let $\zeta^{(1)}_{v,w}$ be a mapping on graphs corresponding to the edge complementation for $\{v,w\}\in \binom{V}{2}$.
The second operation is equivalent to
\begin{align*}
\zeta^{(1)}_{v,w}\circ\tau_w\circ\zeta^{(1)}_{v,w}\circ\tau_w &=
\tau_w\circ\zeta^{(1)}_{v,w}\circ\tau_w\circ\zeta^{(1)}_{v,w}.
\end{align*}
The third operation for $\{v,w\}\in \binom{V}{2}\setminus E$ is equivalent to
\begin{align*}
\zeta^{(1)}_{v,w}\circ\tau_v\circ\tau_w\circ\tau_v\circ\zeta^{(1)}_{v,w}
&=
\zeta^{(1)}_{v,w}\circ\tau_w\circ\tau_v\circ\tau_w\circ\zeta^{(1)}_{v,w}
\end{align*}
and swapping the labels of the vertices $v$ and $w$.
Note that $\tau_v\circ\tau_w\circ\tau_v = \tau_w\circ\tau_v\circ\tau_w$ for $\{v,w\}\in E$ is called \emph{the pivot operation}, which is the edge complementation between $N_G(v)$ and $N_G(w)$ and the swapping of the labels of $v$ and $w$~\cite{BOUCHET1994107,OUM200579}.
Indeed, the three operations in Definition~\ref{def:cost1} explain all two-qubit Clifford operations.

We use a detailed version of Lemma~\ref{lem:stabilizer} for the proof of Theorem~\ref{thm:c1}.
\begin{lemma}[\cite{PhysRevA.77.042307}]\label{lem:stabilizer1}
For any stabilizer state $\ket{\psi}$, 
there exist a graph $G=(V,E)$ and $A, B\subseteq V$ such that
$A\cap B = \varnothing$,
$G$ has no edge between vertices in $B$, and
$\ket{G}$ is equal to
$\bigotimes_{v\in A} S_v \bigotimes_{w\in B} H_w \ket{\psi}$
up to a Pauli operator.
\end{lemma}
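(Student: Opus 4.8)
The plan is to argue entirely with generating matrices. Write the generating matrix of $\ket{\psi}$ as $M=[X\mid Z]\in\mathbb{F}_{2}^{n\times 2n}$; it has rank $n$ and satisfies the symplectic relation $M\bigl[\begin{smallmatrix}O&I\\I&O\end{smallmatrix}\bigr]M^{T}=O$, equivalently $XZ^{T}=ZX^{T}$. Applying $H_{w}$ to $\ket{\psi}$ swaps columns $w$ and $n+w$ of $M$, and applying $S_{v}$ adds column $v$ to column $n+v$; moreover, by Fact~\ref{fa:sgn} a stabilizer state is determined up to a Pauli operator by the row space of its generating matrix, so row operations are harmless. Hence it suffices to find disjoint $A,B\subseteq\{1,\dots,n\}$ such that, after the column operations coming from $\bigotimes_{w\in B}H_{w}$ and $\bigotimes_{v\in A}S_{v}$ followed by Gaussian elimination on rows, the generating matrix becomes $[I\mid\tilde A]$ with $\tilde A$ symmetric and zero on the diagonal (so $\tilde A$ is the adjacency matrix of a graph $G$) and with $\tilde A_{uv}=0$ for all $u,v\in B$ (so $G$ has no edge inside $B$).

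First I would fix $B\subseteq\{1,\dots,n\}$ of \emph{minimum cardinality} such that the matrix $X^{(B)}$ obtained from $X$ by replacing, for every $i\in B$, column $i$ by column $i$ of $Z$, is invertible. At least one such $B$ exists: this is the familiar fact that a stabilizer state becomes a graph state after applying Hadamards on a suitable subset of qubits, and it follows from the Lagrangian structure of the row space of $M$ (via Rado's theorem applied to the column matroid of $M$ with the pairs $\{i,n+i\}$: the only possible obstruction is a set $J$ of indices whose $2|J|$ columns span a space of dimension at most $|J|-1$, which upon translation by the symplectic form would produce an isotropic subspace of dimension greater than $|J|$ inside a nondegenerate symplectic space of dimension $2|J|$, impossible). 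Row-reducing $[X^{(B)}\mid Z^{(B)}]$ to $[I\mid A_{0}]$ with $A_{0}:=(X^{(B)})^{-1}Z^{(B)}$, the preserved symplectic relation gives $A_{0}=A_{0}^{T}$.

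The crux is that minimality of $|B|$ forces $A_{0}$ to vanish on $B\times B$. Write $x_{w},z_{w}$ for column $w$ of $X,Z$, and $\rho_{u}$ for row $u$ of $(X^{(B)})^{-1}$; by construction $\rho_{u}\cdot z_{w}=\delta_{uw}$ and $(A_{0})_{uw}=\rho_{u}\cdot x_{w}$ for all $w\in B$. For $u\in B$, un-swapping column $u$ replaces $z_{u}$ by $x_{u}$ in $X^{(B)}$, a rank-one change, and the matrix determinant lemma over $\mathbb{F}_{2}$ shows the result is invertible iff $\rho_{u}\cdot(x_{u}+z_{u})=(A_{0})_{uu}+1=1$, i.e.\ iff $(A_{0})_{uu}=1$; minimality therefore forces $(A_{0})_{uu}=0$. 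For distinct $u,v\in B$, un-swapping both columns is a rank-two change whose $2\times2$ correction matrix equals $I_{2}+\bigl[\begin{smallmatrix}(A_{0})_{uu}+1&(A_{0})_{uv}\\(A_{0})_{uv}&(A_{0})_{vv}+1\end{smallmatrix}\bigr]=\bigl[\begin{smallmatrix}0&(A_{0})_{uv}\\(A_{0})_{uv}&0\end{smallmatrix}\bigr]$, invertible iff $(A_{0})_{uv}=1$; minimality forces $(A_{0})_{uv}=0$. I expect this determinant bookkeeping---keeping straight which columns of $X^{(B)}$ and $Z^{(B)}$ are drawn from $X$ and which from $Z$, and evaluating $\rho_{u}$ against them---to be the only delicate point of the argument.

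Finally, put $A:=\{v\mid(A_{0})_{vv}=1\}$, which is disjoint from $B$ by the previous step. Applying $S_{v}$ for each $v\in A$ adds column $v$ to column $n+v$, which in the row-reduced picture flips exactly the entry $(A_{0})_{vv}$, turning $A_{0}$ into its diagonal-free part $\tilde A$. Then $[I\mid\tilde A]$ is the generating matrix of the graph state $\ket{G}$ with adjacency matrix $\tilde A$; since $A_{0}$ vanishes on $B\times B$, so does $\tilde A$, i.e.\ $G$ has no edge between vertices of $B$. By Fact~\ref{fa:sgn}, $\bigotimes_{v\in A}S_{v}\bigotimes_{w\in B}H_{w}\ket{\psi}$ equals $\ket{G}$ up to a Pauli operator, which is the claim.
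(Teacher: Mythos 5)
The paper states this lemma only as a citation to \cite{PhysRevA.77.042307} and gives no proof, so there is no in-paper argument to compare against; your proof has to stand on its own, and it does. The route you take --- pick a minimum-cardinality Hadamard set $B$ making $X^{(B)}$ invertible, note that $A_0=(X^{(B)})^{-1}Z^{(B)}$ is symmetric because column swaps and row reduction preserve the isotropy of the row space, use minimality together with the (generalized) matrix determinant lemma to force the $B\times B$ block of $A_0$ to vanish, and finally clear the diagonal with $S$ gates on $A:=\{v\mid (A_0)_{vv}=1\}$ --- is sound, and the identities $\rho_u\cdot z_w=\delta_{uw}$ and $(A_0)_{uw}=\rho_u\cdot x_w$ for $w\in B$ are exactly the bookkeeping needed. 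The existence of some admissible $B$ is a standard fact, and your Rado/symplectic justification is correct: a violating index set $J$ would make $V\cap W_{J^c}$ an isotropic subspace of dimension exceeding half the dimension of the nondegenerate symplectic space on the complementary coordinates, which is impossible.

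There is one parity slip in the rank-one step. The determinant of the un-swapped matrix is $\det(X^{(B)})\bigl(1+\rho_u\cdot(x_u+z_u)\bigr)=1+\bigl((A_0)_{uu}+1\bigr)=(A_0)_{uu}$, so invertibility is equivalent to $\rho_u\cdot(x_u+z_u)=0$, not to $\rho_u\cdot(x_u+z_u)=1$ as your intermediate chain of equalities asserts. Your final conclusion --- invertible iff $(A_0)_{uu}=1$, hence minimality forces $(A_0)_{uu}=0$ for $u\in B$ --- is nevertheless the correct one, and the rank-two computation, whose correction matrix reduces to $\bigl[\begin{smallmatrix}0&(A_0)_{uv}\\(A_0)_{uv}&0\end{smallmatrix}\bigr]$ once the diagonal entries are known to vanish, is right as written. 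With that one sign fixed the argument is complete and self-contained.
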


The following lemma was essentially proved in~\cite{PhysRevA.77.042307}.
\begin{lemma}[{\cite{PhysRevA.77.042307}}]\label{lem:hczh}
For any graph $G$, the following equalities hold up to a Pauli operator:
\begin{align*}
H_b\CZ_{a,b}H_b \ket{G} &= \ket{\zeta_{a,b}^{(2)}(G)}\qquad \forall \{a,b\}\in\binom{V}{2}\\
(H_a\otimes H_b)\CZ_{a,b}(H_a\otimes H_b) \ket{G} &= \ket{\zeta_{a,b}^{(3)}(G)}\qquad \forall \{a,b\}\in\binom{V}{2}\setminus E(G)
\end{align*}
where $\zeta_{a,b}^{(2)}$ denotes the edge complementation between $a$ and $N_G(b)$, and $\zeta_{a,b}^{(3)}$ denotes the edge complementation between $N_G(a)$ and $N_G(b)$.
\end{lemma}
From Lemmas~\ref{lem:stabilizer1} and~\ref{lem:hczh}, we obtain the combinatorial characterization of quantum algorithms with \CZ-complexity one.
\begin{theorem}\label{thm:cost1}
For any graphs $G$ and $H$, graph states $\ket{G}$ and $\ket{H}$ can be transformed to each other by an arbitrary number of local Clifford operations and one \CZ operation if and only if $G$ and $H$ can be transformed to each other by an arbitrary number of local complementations and a single elementary edge-complementation.
\end{theorem}
\begin{proof}
From Lemmas~\ref{lem:lc} and~\ref{lem:hczh}, the direction $\Leftarrow$ is obvious.
We will prove the other direction.
Let $\ket{\psi}$ be a stabilizer state just before the \CZ operation is performed in the transformation from $\ket{G}$ to $\ket{H}$.
From Lemma~\ref{lem:stabilizer1}, there exist a graph $G'$ and $A, B\subseteq V$ such that
$A\cap B=\varnothing$, $G'$ has no edge between vertices in $B$ and $\bigotimes_{v\in A} S_v \bigotimes_{w\in B} H_w \ket{\psi}$ is equal to $\ket{G'}$ up to a Pauli operator.
Then, some CZ operation is applied to a pair $\{a,b\}\in\binom{V}{2}$ of qubits of $\ket{\psi}$.
Assume $a$ and $b$ are both not in $B$.
Then, up to a Pauli operator,
\begin{align*}
\left(\bigotimes_{v\in A} S_v\bigotimes_{w\in B}H_w\right) \CZ_{a,b} \ket{\psi}
=
\CZ_{a,b} \left(\bigotimes_{v\in A} S_v\bigotimes_{w\in B}H_w\right)  \ket{\psi}
=
\CZ_{a,b} \ket{G'}
=
\ket{\zeta^{(1)}_{a,b}(G')}
\end{align*}
since the \CZ gate and $S$ gate commute.
Assume $a$ is not in $B$ and $b$ is in $B$.
Then, up to a Pauli operator,
\begin{align*}
\left(\bigotimes_{v\in A} S_v\bigotimes_{w\in B}H_w\right) \CZ_{a,b} \ket{\psi}
&=
\left(\bigotimes_{v\in A} S_v\bigotimes_{w\in B}H_w\right) \CZ_{a,b} \left(\bigotimes_{v\in A} S^\dagger_v\bigotimes_{w\in B}H_w\right)\ket{G'}\\
&=
H_b \CZ_{a,b} H_b \ket{G'}
=
\ket{\zeta^{(2)}_{a,b}(G')}.
\end{align*}
Finally, assume $a$ and $b$ are both in $B$.
Then, up to a Pauli operator,
\begin{align*}
\left(\bigotimes_{v\in A} S_v\bigotimes_{w\in B}H_w\right) \CZ_{a,b} \ket{\psi}
&=
\left(\bigotimes_{v\in A} S_v\bigotimes_{w\in B}H_w\right) \CZ_{a,b} \left(\bigotimes_{v\in A} S^\dagger_v\bigotimes_{w\in B}H_w\right)\ket{G'}\\
&=
(H_a \otimes H_b) \CZ_{a,b} (H_a \otimes H_b) \ket{G'}
=
\ket{\zeta^{(3)}_{a,b}(G')}.
\end{align*}

Since $\ket{G}$ and $\ket{G'}$ are LC-equivalent, $G$ and $G'$ are local-complement equivalent.
Then, the transformations of graph $G\rightarrow G'\rightarrow \zeta_{a,b}^{(i)}(G') \rightarrow H$
for some $i\in\{1,2,3\}$ are realized by an arbitrary number of local complementations and a single elementary edge-complementation.
\end{proof}

From Lemmas~\ref{lem:lc},~\ref{lem:vm} and Theorem~\ref{thm:cost1}, the CZ-complexity of graph state is now characterized in a combinatorial way.

\begin{definition}[The edge-complementation-complexity of graphs]\label{def:compG}
Let $\mathcal{A}$ be an algorithm acting on a graph that consists of the following operations.
\begin{enumerate}
\item Deletion of a vertex.
\item Local complementation.
\item Elementary edge-complementation.
\end{enumerate}
The \emph{edge-complementation-complexity of $\mathcal{A}$} is defined as the number of elementary edge-complementations in $\mathcal{A}$.
The \emph{edge-complementation-complexity (EC-complexity) $\NCO(G)$ of a graph $G$} is defined as the minimum edge-complementation-complexity of $\mathcal{A}$
that generates $G$ with $n$ vertices from the empty graph with $n+s$ vertices for some $s\ge 0$.
\end{definition}

\begin{theorem}[Equivalent to Theorem~\ref{thm:c1}]
For any graph $G$, $\NCZ(\ket{G})= \NCO(G)$.
\end{theorem}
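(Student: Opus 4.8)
The plan is to show the two inequalities $\NCZ(\ket{G})\le \NCO(G)$ and $\NCO(G)\le \NCZ(\ket{G})$ separately, each by translating a witnessing process on one side into a witnessing process on the other.

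For $\NCZ(\ket{G})\le\NCO(G)$, I would start from an optimal combinatorial algorithm $\mathcal{A}$ for $G$ in the sense of Definition~\ref{def:compG}: it begins with the empty graph on $n+s$ vertices and applies vertex deletions, local complementations, and $\NCO(G)$ cost-one operations, ending at $G$. The empty graph on $n+s$ vertices corresponds to the graph state $\ket{+}^{\otimes(n+s)}$, which is prepared with zero two-qubit gates. Now replace each combinatorial step by its quantum counterpart: a local complementation $\tau_v$ is realized by a local Clifford operation (Lemma~\ref{lem:lc}); a vertex deletion is realized — up to local Cliffords — by a computational-basis measurement of the corresponding qubit followed by an adaptive local Clifford (this is exactly the content of Lemmas~\ref{lem:paulim} and~\ref{lem:vm}, noting that measured qubits are discarded, matching Definition~\ref{def:comp}); and a cost-one operation of any of the three types is, up to local Cliffords, realized by exactly one \CZ gate via Lemma~\ref{lem:hczh} (together with the explicit conjugation-by-$\tau$ identities in Definition~\ref{def:cost1} for types 2 and 3, or more directly just $H_b\CZ_{a,b}H_b$ and $(H_a\otimes H_b)\CZ_{a,b}(H_a\otimes H_b)$). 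Concatenating these, we obtain a quantum algorithm of the type in Definition~\ref{def:comp} that produces $\ket{G}$ from $\ket{0}^{\otimes(n+s)}$ with probability $1$ and uses exactly $\NCO(G)$ two-qubit Clifford operations, so $\NCZ(\ket{G})\le\NCO(G)$.

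For the reverse inequality $\NCO(G)\le\NCZ(\ket{G})$, take an optimal quantum algorithm $\mathcal{A}$ achieving $\NCZ(\ket{G})$. By Theorem~\ref{thm:cz0} we may assume every two-qubit gate in $\mathcal{A}$ is a \CZ gate, so $\mathcal{A}$ consists of single-qubit Cliffords, computational-basis measurements (with discarding), and $t:=\NCZ(\ket{G})$ \CZ gates, acting on $\ket{0}^{\otimes(n+s)}=\ket{+}^{\otimes(n+s)}$ up to local Cliffords, and producing $\ket{G}$ with probability $1$. Fix any measurement outcome string (since the algorithm succeeds with probability $1$, every branch reachable with positive probability produces $\ket{G}$); along this branch the state is always a stabilizer state. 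I would track the generating matrix (equivalently, track a graph LC-representative at each step via Lemma~\ref{lem:stabilizer}): single-qubit Cliffords change the LC-class only, contributing nothing; a measurement-and-adaptive-local-Clifford block takes the current graph to a vertex-minor of it (Lemma~\ref{lem:vm}), i.e.\ a vertex deletion composed with local complementations; and each \CZ gate, by the case analysis of Theorem~\ref{thm:cost1} (depending on whether the two qubits lie in the "$H$-flagged" set $B$ from Lemma~\ref{lem:stabilizer1}), changes the current graph-LC-representative by one cost-one operation composed with local complementations. Reading off this sequence of graph transformations gives a combinatorial algorithm from the empty graph on $n+s$ vertices to (a graph LC-equivalent to, hence by further local complementations equal to) $G$ using exactly $t$ cost-one operations, so $\NCO(G)\le\NCZ(\ket{G})$.

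The main obstacle is the bookkeeping in the reverse direction: making precise that at every point in the computation the tracked object is well-defined as a graph LC-class, and that a \CZ gate applied to an arbitrary stabilizer state — not necessarily a graph state — is correctly captured by Lemma~\ref{lem:stabilizer1}'s normal form $\bigotimes_{v\in A}S_v\bigotimes_{w\in B}H_w\ket{\psi}=\ket{G'}$ and the three cases of Theorem~\ref{thm:cost1}. One must also handle the interleaving of measurements with \CZ gates cleanly (a measurement may be performed between two \CZ gates), but since Lemma~\ref{lem:vm} and Lemma~\ref{lem:paulim} let us absorb each measurement, together with its adaptive local Clifford, into a vertex-minor step regardless of context, the argument goes through. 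In fact, since Theorem~\ref{thm:cost1} and Lemma~\ref{lem:vm} are already stated, the proof is essentially a matter of chaining them: I would phrase it as an induction on the number of two-qubit ( \CZ ) gates, with the inductive hypothesis that any length-$t$ \CZ-complexity algorithm corresponds to an EC-complexity-$t$ combinatorial algorithm and conversely.
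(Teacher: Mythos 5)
Your proposal is correct and follows essentially the same route as the paper, which proves this statement simply by chaining Lemma~\ref{lem:lc} (local Clifford $\leftrightarrow$ local complementation), Lemma~\ref{lem:vm} (measurement plus adaptive local Clifford $\leftrightarrow$ vertex deletion up to local complementations), and Theorem~\ref{thm:cost1} (one \CZ $\leftrightarrow$ one cost-one operation up to local complementations), in both directions. Your write-up is in fact more explicit than the paper's one-line justification, and your noted bookkeeping concerns (tracking the LC-representative through interleaved measurements and \CZ gates) are resolved exactly as you describe.
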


In the rest of this paper, we mainly consider $\NCO(G)$ rather than $\NCZ(\ket{G})$ since it is easier to analyze.

\section{Rank-width}\label{sec:rank}
The rank-width is a complexity measure of a graph~\cite{OUM2006514,OUM201715}.
The rank-width is defined using a cut-rank.
\begin{definition}[Cut-rank]
For a graph $G=(V,E)$ and $S\subseteq V$, the \emph{bipartite adjacency matrix} $A_{S, V\setminus S}\in\mathbb{F}_2^{|S|\times|V\setminus S|}$ is a matrix whose rows and columns are indexed by $s\in S$ and $t\in V\setminus S$, respectively. An $(s,t)$-entry of $A_{S,V\setminus S}$ is 1 if and only if $\{s,t\}\in E$.
The \emph{cut-rank} of $S\subseteq V$ for $G$ is defined as $\cutrank_G(S):=\mathrm{rank}_{\mathbb{F}_2}(A_{S,V\setminus S})$.
The cut-rank of the empty set is defined as zero.
\end{definition}

Obviously, 
the cut-rank is symmetric, i.e., $\cutrank_G(V\setminus S) = \cutrank_G(S)$ for any $S\subseteq V$.
The cut-rank plays an important role in this work since it is invariant under the local complementations.
\begin{lemma}[\cite{bouchet1989connectivity,OUM200579}]\label{lem:cutrank}
For any $G=(V,E)$, $S\subseteq V$ and $v\in V$, $\cutrank_G(S) = \cutrank_{\tau_v(G)}(S)$.
\end{lemma}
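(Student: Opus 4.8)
The plan is to track how the bipartite adjacency matrix $A_{S,V\setminus S}$ changes under a single local complementation $\tau_v$, and to show that the change is realized by a sequence of elementary row operations over $\mathbb{F}_2$, which preserve rank.

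First I would reduce to the case $v\in S$. Since the cut-rank is symmetric, $\cutrank_G(S)=\cutrank_G(V\setminus S)$ and $\cutrank_{\tau_v(G)}(S)=\cutrank_{\tau_v(G)}(V\setminus S)$, so if $v\in V\setminus S$ we may replace $S$ by $V\setminus S$ and assume $v\in S$. (The cases $S=\varnothing$ or $S=V$ are trivial, both sides being $0$.) Now assume $v\in S$, and write $A$ and $A'$ for the bipartite adjacency matrices of $G$ and $\tau_v(G)$ with respect to the partition $(S,V\setminus S)$. A pair $\{x,y\}$ with $x\in S$ and $y\in V\setminus S$ has its edge status toggled by $\tau_v$ precisely when $\{x,y\}\in\binom{N_G(v)}{2}$, i.e.\ when $x\in N_G(v)\cap S$ and $y\in N_G(v)\cap (V\setminus S)$; note $v\notin N_G(v)$, so $v$ itself is never such an $x$. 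Letting $\mathbf{a}\in\mathbb{F}_2^{S}$ be the indicator vector of $N_G(v)\cap S$ and $\mathbf{b}\in\mathbb{F}_2^{V\setminus S}$ the indicator vector of $N_G(v)\cap(V\setminus S)$, this says exactly
\[
A' = A + \mathbf{a}\,\mathbf{b}^{T}\qquad\text{over }\mathbb{F}_2 .
\]

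The key observation is that $\mathbf{b}^{T}$ is itself the row of $A$ indexed by $v$: the $(v,t)$-entry of $A$ is $1$ iff $t\in N_G(v)$, and for $t\in V\setminus S$ this means $t\in N_G(v)\cap(V\setminus S)$. Hence adding $\mathbf{a}\,\mathbf{b}^{T}$ amounts to adding the $v$-th row of $A$ to each row indexed by a vertex of $N_G(v)\cap S$; since $v\notin N_G(v)\cap S$, the $v$-th row is untouched, so this is a composition of $|N_G(v)\cap S|$ elementary row operations, each invertible over $\mathbb{F}_2$. Therefore $\mathrm{rank}_{\mathbb{F}_2}(A')=\mathrm{rank}_{\mathbb{F}_2}(A)$, i.e.\ $\cutrank_{\tau_v(G)}(S)=\cutrank_G(S)$, which together with the reduction above proves the lemma.

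The only subtle point — and the place where the naive bound ``a rank-one update changes the rank by at most one'' fails to yield equality — is recognizing that the rank-one perturbation $\mathbf{a}\,\mathbf{b}^{T}$ has its right factor $\mathbf{b}$ already equal to a row of $A$ (the row of $v$). It is precisely this structural fact, rather than a generic linear-algebra estimate, that upgrades ``changes by at most one'' to ``does not change at all''; everything else is bookkeeping about which entries of the bipartite adjacency matrix $\tau_v$ can flip.
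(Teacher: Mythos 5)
Your proof is correct and follows essentially the same route as the paper: reduce to $v\in S$ by symmetry of the cut-rank, observe that $\tau_v$ changes the bipartite adjacency matrix by adding the $v$-th row to every row indexed by a neighbour of $v$ in $S$, and conclude that these elementary row operations preserve rank. Your explicit identification of the perturbation as $\mathbf{a}\,\mathbf{b}^{T}$ with $\mathbf{b}^{T}$ equal to the $v$-th row is just a more detailed write-up of the same argument.
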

\begin{proof}
Let $A_{S,V\setminus S}\in\mathbb{F}_2^{|S|\times|V\setminus S|}$ be the bipartite adjacency matrix of $G$.
Without loss of generality, we can assume $v\in S$.
Then, the bipartite adjacency matrix for $\tau_v(G)$ is obtained from $A_{S,V\setminus S}$ by adding the $v$-th row to all rows that correspond to a neighborhood of $v$.
This transform does not change the rank of the bipartite adjacency matrix.
\end{proof}

The rank-width is defined as follows.
\begin{definition}[Rank-decomposition and rank-width~\cite{OUM2006514,OUM201715}]
Let $G=(V,E)$ be a graph. 
A \emph{subcubic tree} is a tree whose maximum degree is at most three.
The \emph{rank-decomposition} of $G$ is a pair $(T, L)$ of a subcubic tree $T$ with at least two vertices and a bijection $L$ from $V$ to a set of all leaves of $T$. 
For a rank-decomposition $(T=(W, F), L)$ and an edge $e\in F$, $T-e$ determines a partition of leaves of $T$, equivalently a partition of $V$ determined by $L$.
The \emph{width of $e\in F$} is $\cutrank_G(S)$ where $S, V\setminus S$ is the partition of $V$ determined by $e$. 
The \emph{width of the rank-decomposition $T$} is the maximum of the width among all edges in $T$.
The \emph{rank-width} $\rw(G)$ of $G$ is the minimum width over all rank-decompositions of $G$.
When $G$ has at most one vertices, $G$ has no rank-decomposition. In this case, the rank-width of $G$ is defined as zero.
\end{definition}

Obviously, the rank-width of an $n$-vertex graph is at most $\lceil n/3\rceil$ since any rank-decomposition containing a node whose removal induces a balanced tripartition of the leaves has width at most $\lceil n/3\rceil$.

\begin{fact}\label{fact:rwm}
For a graph $G=(V,E)$, let $G[W]$ denote the induced subgraph in $G$ by $W\subseteq V$.
In other words, $G[W] := (W, E \cap \binom{W}{2})$.
The cut-rank and rank-width are monotone with respect to induced subgraphs, i.e.,
\begin{itemize}
\item For any $S\subseteq W\subseteq V$, $\cutrank_{G[W]}(S) \le \cutrank_G(S)$.
\item For any $W\subseteq V$, $\rw(G[W])\le\rw(G)$.
\end{itemize}
\end{fact}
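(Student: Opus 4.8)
The plan is to prove the two claims in turn, obtaining the rank-width bound as a consequence of the cut-rank bound together with a tree-surgery argument.

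\textbf{Cut-rank.} I would first note that this is just the statement that a submatrix has rank no larger than the whole matrix. Fix $S\subseteq W\subseteq V$. Since $S\cap(V\setminus W)=\varnothing$, the bipartite adjacency matrix $A_{S,W\setminus S}$ of $G[W]$ is precisely the submatrix of the bipartite adjacency matrix $A_{S,V\setminus S}$ of $G$ obtained by keeping all of the rows (indexed by $S$) and discarding the columns indexed by $V\setminus W$. Hence $\cutrank_{G[W]}(S)=\mathrm{rank}_{\mathbb{F}_2}(A_{S,W\setminus S})\le\mathrm{rank}_{\mathbb{F}_2}(A_{S,V\setminus S})=\cutrank_G(S)$.

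\textbf{Rank-width.} If $|W|\le 1$ then $\rw(G[W])=0\le\rw(G)$ by definition, so assume $|W|\ge 2$. Fix a rank-decomposition $(T,L)$ of $G$ of width $\rw(G)$. I would build a rank-decomposition of $G[W]$ by restriction: let $T_W$ be the minimal subtree of $T$ containing the leaves $\{L(w):w\in W\}$, and let $T'$ be obtained from $T_W$ by suppressing every degree-two vertex, i.e. repeatedly replacing such a vertex and its two incident edges by a single edge. One checks that $T'$ is a subcubic tree with at least two vertices whose leaf set is exactly $\{L(w):w\in W\}$ — by minimality of $T_W$ every leaf of $T_W$ is one of these, and suppression neither creates nor destroys leaves — so $(T',L|_W)$ is a rank-decomposition of $G[W]$.

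To bound its width, consider an edge $e'$ of $T'$; it came from a path $P$ in $T_W$ whose internal vertices all have degree two in $T_W$, and removing $e'$ partitions $W$ into some pair $(S,W\setminus S)$. The crucial point is that no subtree of $T$ hanging off an internal vertex of $P$ can contain a leaf $L(w)$ with $w\in W$, since that would have forced $T_W$ to branch there; therefore, removing any single edge $e$ of $T$ lying on $P$ yields a partition $(\tilde S,V\setminus\tilde S)$ of $V$ with $\tilde S\cap W=S$, identifying $w$ with $L(w)$ throughout. By the cut-rank claim applied with $S\subseteq\tilde S$, we get $\cutrank_{G[W]}(S)\le\cutrank_G(\tilde S)$, which is at most the width of $e$ in $(T,L)$ and hence at most $\rw(G)$. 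Taking the maximum over all edges $e'$ gives that the width of $(T',L|_W)$ is at most $\rw(G)$, so $\rw(G[W])\le\rw(G)$.

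The only real obstacle is the bookkeeping inside the tree surgery: confirming that $T'$ is a legitimate rank-decomposition tree (subcubic, at least two vertices, correct leaf set) and, above all, that each $W$-partition induced by an edge of $T'$ is the restriction of a $V$-partition induced by an edge of $T$ — this is where the observation about pendant subtrees carrying no $W$-leaves does the work. Everything else is routine linear algebra.
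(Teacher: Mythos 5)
Your proof is correct. The paper states this as a Fact without proof (it is standard in the rank-width literature), and your argument — submatrix rank for the cut-rank claim, then restricting an optimal rank-decomposition to the minimal subtree spanning the $W$-leaves and suppressing degree-two vertices — is the standard one; the key observation that pendant subtrees attached to suppressed vertices carry no $W$-leaves, so each $W$-partition of $T'$ is the trace of a $V$-partition of $T$, is exactly right. One small imprecision: the step $\cutrank_{G[W]}(S)\le\cutrank_G(\tilde S)$ is not literally an application of the first bullet (which compares the same set $S$ in both graphs, and cut-rank is not monotone in the set, so you cannot chain through $\cutrank_G(S)$); rather, it follows because $A_{S,W\setminus S}$ is a submatrix of $A_{\tilde S,V\setminus\tilde S}$ in both rows and columns, which is the same rank-of-a-submatrix observation and is immediate, but deserves to be said explicitly.
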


The rank-width can be computed in time $\mathrm{poly}(n) 2^n$~\cite{OUM2009745}.
There also exist fixed parameter-tractable algorithms computing the rank-width~\cite{COURCELLE200791,doi:10.1137/22M153937X,10.1145/3618260.3649732,korhonen2026branch}.

Here, we provide the proof of Lemma~\ref{lem:inf} using the information-theoretic counting argument.
\leminf*
\begin{proof}
Any bipartite graph with $r$ vertices at one side and $n-r$ vertices at the other side has rank-width at most $r$ since any bipartition has cut-rank at most $r$.
Hence, the number of graphs with $n$ vertices and rank-width at most $r$ is at least $2^{r(n-r)}$.
All graph states have different generating matrices.
Since quantum algorithms in Definition~\ref{def:comp} have to generate $\ket{G}$ with probability 1, the quantum algorithm generates $\ket{G}$ when the all measurement outcomes for the $s$ working qubits are zero.
In this case, we can consider a corresponding Clifford circuit with the projectors $\ket{0}\bra{0}$ for the $s$ working qubits.
The output of the Clifford circuit with the projectors must be proportional to $\ket{G}$.
In the following, we count the number of Clifford circuits with the projectors.
There are $324$ two-qubit Clifford operators acting on the generating matrices from Figure~\ref{fig:2clifford}~(b) in Appendix~\ref{apx:cz}.
Non-working qubits that do not appear in any two-qubit operation must be $\ket{+}$.
Then, the number of Clifford circuits including at most $t$ two-qubit Clifford gates is at most
\begin{align*}
\left(1+324\binom{n+s}{2}\right)^t.
\end{align*}
We can assume without loss of generality that all working qubits ``connect'' to some non-working qubit by two-qubit Clifford operations.
This observation implies $s\le t\le \binom{n}{2}$.
If $t$ two-qubit Clifford gates are sufficient for all graph states with $n$ vertices and rank-width $r$,
\begin{align*}
\left(1+324\binom{n+n^2}{2}\right)^{t} &\ge 2^{r(n-r)} \ge 2^{r\lfloor\frac23 n\rfloor}
\end{align*}
since $r\le \lceil n/3\rceil$.
Then, we obtain
\begin{align*}
t \ge \frac{r}{\log_2\left(1+324\binom{n+n^2}{2}\right)}\left\lfloor\frac23n\right\rfloor.
\end{align*}
\end{proof}

Here, we also provide the proof of Lemma~\ref{lem:tlower} using the rank-width of random sparse graph analyzed in~\cite{https://doi.org/10.1002/jgt.20620}.
\lemtlower*
\begin{proof}[Proof of Lemma~\ref{lem:tlower}]

Let $G(n, p)$ denote the Erd\H{o}s--R\'enyi random graph.
The rank-width of $G(n, c/n)$ with constant $c>1$ was analyzed in~\cite{https://doi.org/10.1002/jgt.20620}.
\begin{lemma}[Rank-width of random sparse graph~\cite{https://doi.org/10.1002/jgt.20620}]
\label{lem:random}
For any constant $c>1$, there exist constants $d>0$ and $e>0$ such that $G(n,c/n)$ contains a connected subgraph $H$ satisfying $|V(H)|\ge dn$ and $\rw(H)\ge en$ asymptotically almost surely.
\end{lemma}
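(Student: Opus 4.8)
The plan is to deduce the linear rank-width lower bound from two ingredients: a structural statement that the sparse random graph contains a linear-size bounded-degree \emph{expander}, and a self-contained combinatorial argument showing that expanders have rank-width linear in their size. Throughout I work with a subgraph $H$ and aim to prove $\rw(H)\ge e'\,|V(H)|$ for a constant $e'>0$; combined with $|V(H)|\ge dn$ this yields $\rw(H)\ge e'd\,n$, so the lemma follows after renaming the constant $e:=e'd$.

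The first step is the structural input, which I expect to be the main obstacle. For $c>1$ the classical Erd\H{o}s--R\'enyi phase transition guarantees a giant component of linear size a.a.s., but a sparse random graph is not itself an expander: it has vertices of superconstant degree and tree-like appendages hanging off the core. So I would isolate, inside the giant component, a connected subgraph $H$ on at least $dn$ vertices with bounded maximum degree $\Delta$ and constant edge expansion, i.e. a constant $h>0$ such that $|E_H(S,V(H)\setminus S)|\ge h|S|$ for every $S\subseteq V(H)$ with $|S|\le |V(H)|/2$. Producing such an expanding, bounded-degree core is exactly where the cited analysis of $G(n,c/n)$ does its real work (prune the $o(n)$ high-degree vertices, pass to the $2$-core, and apply a sprinkling/core-expansion argument); I would import this, or reprove it by the standard techniques, since it is the technically heavy part.

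The second step converts expansion into a cut-rank bound using induced matchings. Fix any bipartition $(S,V(H)\setminus S)$ with both sides of size at least $|V(H)|/3$. Expansion gives $\Omega(|V(H)|)$ crossing edges; since $\Delta$ is bounded, a maximal matching among the crossing edges has size $\Omega(|V(H)|)$, because in a matching each vertex lies in at most one edge, so covering all crossing edges forces $|M|\ge |E_H(S,\cdot)|/(2\Delta)$. From such a matching I extract an induced matching greedily: repeatedly add an edge $e$ and delete every matching edge with an endpoint that is $H$-adjacent to an endpoint of $e$, which discards only $O(\Delta)$ edges per step, yielding an induced matching of size $k=\Omega(|V(H)|)$. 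Writing the induced matching as pairs $\{a_i,b_i\}$ with $a_i\in S$, $b_i\in V(H)\setminus S$, the condition $\{a_i,b_j\}\notin E(H)$ for $i\ne j$ makes the submatrix of $A_{S,V(H)\setminus S}$ on rows $\{a_i\}$ and columns $\{b_i\}$ the $k\times k$ identity, so $\cutrank_H(S)\ge k=\Omega(|V(H)|)$. Hence there is a constant $e'>0$ with $\cutrank_H(S)\ge e'\,|V(H)|$ for every $\tfrac13$-balanced cut.

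The final step combines this with the balanced-edge property of subcubic trees. In any rank-decomposition $(T,L)$ of $H$, viewing the leaves as unit weights, there is an edge of $T$ whose removal splits $V(H)$ into two parts each of size between $\tfrac13|V(H)|$ and $\tfrac23|V(H)|$. The width of that edge equals $\cutrank_H(S)$ for the corresponding $\tfrac13$-balanced $S$, which is at least $e'\,|V(H)|$ by the previous paragraph, so the width of every rank-decomposition is at least $e'\,|V(H)|$, giving $\rw(H)\ge e'\,|V(H)|\ge e'd\,n$. The only points requiring care beyond the structural step are verifying that the pruning preserves bounded degree (so the matching-to-induced-matching extraction is valid) and that the balanced-edge lemma applies with equal leaf weights.
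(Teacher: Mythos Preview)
This lemma is not proved in the paper: it is quoted from the cited reference and used as a black box inside the proof of Lemma~\ref{lem:tlower}. There is therefore no paper proof to compare your proposal against.

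Your outline is nonetheless a correct route to the result. Steps~2 and~3 are sound: in a bounded-degree graph with constant edge expansion, every $\tfrac13$-balanced cut carries a linear-size induced matching in the bipartite crossing graph (your greedy extraction works, losing only a factor $O(\Delta^2)$), hence linear cut-rank; and the balanced-edge property of subcubic trees---essentially the paper's own Lemma~\ref{lem:part}---then forces linear width in every rank-decomposition. The substantive gap is Step~1, which you correctly flag: extracting a linear-size, connected, bounded-degree expander from $G(n,c/n)$ is not routine---naively pruning high-degree vertices and passing to the $2$-core does not by itself yield constant edge expansion---and this structural input is precisely what the cited reference supplies. So your proposal unpacks the easy reduction while still importing the hard structural step, which is no more and no less than what the present paper does.
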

Here, $G(n, c/n)$ contains at most $(c/2)n-1$ edges with probability asymptotically $1/2$.
Hence, for $c > 1$,
\begin{align}
\NCO(H)\le (c/2)n - 1 &= dn - 1 + (c/2-d) n
\le |V(H)| - 1 + [(c/2-d)/e] \rw(H).
\label{eq:H}
\end{align}
holds with probability asymptotically at least $1/2$.

Then, fix $c>1$ arbitrarily.
For any $r'$, let $n'$ be a minimum integer such that there exists a connected graph $H$ satisfying~\eqref{eq:H}, $|V(H)|=n'$ and $\rw(H)\ge r'$.
Note that such graph $H$ always exists since asymptotically at least half of $H$ in Lemma~\ref{lem:random} satisfy~\eqref{eq:H} and have linear rank-width.
Adding a new vertex of degree one to $H$ gives a new graph $H'$ such that $|V(H')|=|V(H)|+1$, $\NCO(H')\le\NCO(H)+1$ and $\rw(H')=\rw(H)$.
Hence,~\eqref{eq:H} still holds for $H'$.
This argument implies that for any $r'$, there exist $n'$ and $r\ge r'$ such that for any $n\ge n'$ there exists a graph $H$ with $n$ vertices and rank-width $r$ satisfying $\NCO(H)\le n - 1 + [(c/2-d)/e] r$.
\end{proof}


Determining the minimum value of the constant $(c/2-d)/e$ is of interest, particularly whether it can attain the lower bound of 1 given in Theorem~\ref{thm:lower}.

\section{Lower bound: Proof of Theorem~\ref{thm:lower}}\label{sec:lower}
In this section, we prove Theorem~\ref{thm:lower}.
\thmlower*
The following lemma states that the cut-rank is unchanged by an elementary edge-complementation that does not cross the cut, and is changed by at most one by an elementary edge-complementation that crosses the cut.
\begin{lemma}\label{lem:cost1}
For a graph $G=(V,E)$ and a pair $\{v,w\}\in \binom{V}{2}$ of vertices,
let $G'$ be a graph obtained by an elementary edge-complementation for $\{v,w\}$ from $G$.
Then,
\begin{itemize}
\item For any $W\subseteq V$, $\cutrank_G(W)=\cutrank_{G'}(W)$ when $(v\in W\land w\in W)\lor(v\notin W\land w\notin W)$.
\item For any $W\subseteq V$, $|\cutrank_G(W)-\cutrank_{G'}(W)|\le 1$ when $(v\in W\land w\notin W)\lor(v\notin W\land w\in W)$.
\end{itemize}
\end{lemma}
\begin{proof}
First, let us consider the case $v\in W\land w\in W$.
Since the bipartite adjacency matrix $A_{W,V\setminus W}$ does not change by the edge complementation between $v$ and $w$, its rank does not change.
For the edge complementation between $v$ and $N_G(w)$, the new bipartite adjacency matrix is obtained by adding the $w$-th row to the $v$-th row for $A_{W,V\setminus W}$.
Hence, the rank does not change.
For the edge complementation between $N_G(v)$ and $N_G(w)$ for vertices $v$ and $w$ non-adjacent in $G$, the new bipartite adjacency matrix is obtained by adding the $v$-th row to rows in $N_G(w)\cap W$ and adding the $w$-th row to rows in $N_G(v)\cap W$.
Hence, the rank does not change.
Similarly, for the case $v\notin W\land w\notin W$, the rank of the bipartite adjacency matrix does not change.

Next, let us consider the case $v\in W\land w\notin W$.
For the edge complementation between $v$ and $w$, the new bipartite adjacency matrix is obtained by flipping the $(v,w)$-entry.
Hence, the rank changes by at most one.
For the edge complementation between $v$ and $N_G(w)$, the new bipartite adjacency matrix is obtained by flipping $(v,u)$-entry for $u\in N_G(w)\setminus W$.
Hence, the rank changes by at most one.
For the edge complementation between $N_G(v)$ and $N_G(w)$ for vertices $v$ and $w$ non-adjacent in $G$, the new bipartite adjacency matrix is obtained by adding the $v$-th row to rows in $N_G(w)\cap W$ and adding a row vector $1_{N_G(w)\setminus W}$ to rows in $N_G(v)\cap W$ where $1_{N_G(w)\setminus W}$ is a row vector whose $u\in V\setminus W$-th element is 1 if and only if $u\in N_G(w)\setminus W$.
Hence, the rank changes by at most one.
Similarly, for the case $v\notin W\land w\in W$, the rank of the bipartite adjacency matrix changes by at most one.
\end{proof}

\begin{theorem}[Equivalent to Theorem~\ref{thm:lower}]\label{thm:lowerEC}
For any connected graph $G$ with $n\ge 2$ vertices and rank-width at least $r$, $\NCO(G)\ge n+r-2$.
\end{theorem}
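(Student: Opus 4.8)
The plan is to track, along the algorithm realizing $\NCO(G)$, a single well-chosen cut and argue that both the connectivity requirement (which forces $n-1$ cost-one operations) and the rank-width requirement (which forces an extra $r-1$) can be charged to disjoint-enough parts of the operation sequence. First I would set up the following picture. An optimal algorithm builds a graph $G^*$ on $n+s$ vertices, of which $G$ is the induced subgraph on the $n$ surviving vertices, using $t=\NCO(G)$ cost-one operations together with local complementations and vertex deletions; since deletions and local complementations are free, we may postpone all deletions to the very end, so the cost-one operations and local complementations act on the full $(n+s)$-vertex graph. Because local complementations do not change any cut-rank (Lemma~\ref{lem:cutrank}), and because $\rw(G^*)\ge\rw(G)\ge r$ is \emph{not} automatic, I would instead reason directly on $G$: pick a rank-decomposition-witnessing cut, i.e.\ a set $W\subseteq V(G)$ with $\cutrank_G(W)\ge r-1$ (such a $W$ exists because for any rank-decomposition the edge incident to a leaf already has width $\le 1$, and if \emph{every} cut had cut-rank $\le r-2$ then, by a standard balanced-separator/branch-decomposition argument, $\rw(G)\le r-1$, a contradiction — this is the point I would need to state carefully, perhaps citing the submodularity of cut-rank and the standard fact that rank-width $\ge r$ forces a cut of cut-rank $\ge r-1$).

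Next I would lift $W$ to a partition of the working graph and run a two-part charging argument. For the ``$n-1$'' part: the empty graph on $n+s$ vertices is totally disconnected, $G$ is connected, vertex deletion and local complementation cannot increase the number of connected components among the surviving vertices by merging them, and a single cost-one operation reduces the number of connected components (restricted to the eventual survivors) by at most one; hence at least $n-1$ cost-one operations are needed just to connect the $n$ surviving vertices — this recovers the trivial bound. For the extra ``$r-1$'': by Lemma~\ref{lem:cost1}, a cost-one operation changes $\cutrank$ of a fixed set by at most one, and only if it ``crosses'' that set; the initial cut-rank (empty graph) is $0$ and the final cut-rank across $W$ is $\ge r-1$, so at least $r-1$ cost-one operations must cross the cut defined by $W$ (lifted appropriately to the working graph).

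The main obstacle — and the crux of the whole proof — is that these two lower bounds overlap: a cost-one operation that crosses the cut $W$ might also be one of the $n-1$ operations doing connectivity work, so naively adding gives only $\max(n-1,r-1)$, not $n+r-2$. The resolution I would pursue is to refine the connectivity count so that it does not use the crossing operations: one wants to show that $G[W]$ and $G[V\setminus W]$ must \emph{each} be built up, and the operations internal to each side contribute at least $|W|-c_1$ and $|V\setminus W|-c_2$ respectively, where $c_1+c_2$ is controlled by the number of crossing operations; combined with the $\ge r-1$ crossing operations and the fact that $G$ connected forces the two sides to be linked, this should yield $(|W|-1)+(|V\setminus W|-1)+\text{(something)}\ge n+r-2$. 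Concretely I expect the clean way is: let $k$ be the number of cost-one operations crossing $W$; the operations \emph{not} crossing $W$ act within one side, and to make $G[W]\cup G[V\setminus W]$ have the right component structure one still needs $\ge (n - k) - (\text{number of components joined by crossing ops})$ non-crossing operations, while $k\ge r-1$; bookkeeping the components carefully so that each crossing operation is counted once should close the gap to exactly $n+r-2$. Getting this accounting exactly right, and making sure the $s$ working qubits and the final vertex deletions do not leak any savings, is where the real work lies; everything else is assembled from Lemma~\ref{lem:cutrank}, Lemma~\ref{lem:cost1}, and the monotonicity facts already in hand.
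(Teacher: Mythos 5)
Your proposal assembles the right ingredients (Lemma~\ref{lem:cutrank}, Lemma~\ref{lem:cost1}, a spanning-forest count for connectivity) and correctly isolates the crux, but the resolution you sketch for that crux does not close. Fix a set $W\subseteq V(G)$ with large cut-rank (note, incidentally, that $\rw(G)\ge r$ yields some $W$ with $\cutrank_G(W)\ge r$, not merely $r-1$: if every subset had cut-rank at most $r-1$, every rank-decomposition would have width at most $r-1$). Let $F$ be the set of component-merging cost-one operations, so $|F|\ge n-1$, and split $F$ into $f_c$ crossing and $f_n$ non-crossing operations with respect to $W$. Your charging gives at best $\NCO(G)\ge f_n+\max(f_c,\,r)\ge (n-1)-f_c+\max(f_c,\,r)$, which reaches $n+r-2$ only when $f_c\le 1$. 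But for a cut $W$ chosen from the graph alone, the merge forest can cross $W$ many times --- $f_c$ can be as large as $\min(|W|,|V\setminus W|)$ --- and then every crossing merge is double-counted and the bound collapses to the trivial $n-1$. There is no way to ``bookkeep the components so that each crossing operation is counted once'' for a fixed $W$: the operations doing the cut-rank work across $W$ may literally be the same operations doing the connectivity work across $W$.

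The paper escapes this by inverting the quantifiers. Rather than fixing a high-cut-rank $W$ and lower-bounding crossings, it takes the forest $T$ of merging operations (WLOG a spanning tree of $V\cup S$), converts it into a subcubic rank-decomposition via degree gadgets, and upper-bounds the cut-rank of \emph{every} cut of that decomposition: each such cut is crossed by at most one edge of $F$ plus the $k-|F|$ surplus operations, so by Lemmas~\ref{lem:cutrank} and~\ref{lem:cost1} its cut-rank is at most $k-|F|+1\le r-1$ whenever $k\le n+r-3$. This exhibits a rank-decomposition of width at most $r-1$, contradicting $\rw(G)\ge r$. Your approach would succeed precisely if you could choose $W$ to be one of the cuts induced by the merge tree (forcing $f_c=1$) while still guaranteeing $\cutrank_G(W)\ge r$ --- but proving that some tree-induced cut has high cut-rank is exactly the statement that the tree does not yield a width-$(r-1)$ rank-decomposition, i.e., you are led back to the paper's construction.
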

\begin{proof}
Let $G=(V,E)$ be a connected graph with at least two vertices.
Then, $\rw(G)\ge 1$ so that we assume $r\ge 1$ in the following.
From the connectivity of $G$, $\NCO(G)\ge n-1$.
Hence, Theorem holds for $r=1$.

In the following, we assume $r\ge 2$ and show that
if $\NCO(G)\le n+r-3$, then, $\rw(G)\le r-1$.
Let $S$ be the set of vertices of size $s$ that are deleted in an algorithm in Definition~\ref{def:compG}.
Without loss of generality, we can assume that the vertices are deleted at the end of the algorithm.
Let $G'=(V\cup S, E')$ be a graph just before the vertex deletion.
Then, $G'[V] = G$.
From Fact~\ref{fact:rwm}, it is sufficient to show $\rw(G')\le r-1$.
In the following, we will show $\rw(G')\le r-1$.
Let $k:=\NCO(G)$.
Let $e_1,e_2,\dotsc,e_k$ be a sequence of pairs of vertices of $G'$ for which the elementary edge-complementations are applied in this order.
Furthermore, let
\begin{align*}
F:=\Bigl\{ e_i=\{v,w\} &\mid
i\in\{1,\dotsc,k\}\quad
\text{$v$ and $w$ belong to different connected components}\\
&\qquad \text{in a graph $(V\cup S, \{e_1,\dotsc,e_{i-1}\})$}
\Bigr\}.
\end{align*}
Let $T:=(V\cup S, F)$.
From the definition of $F$, $T$ is a forest.
From the connectivity of $G$, all vertices in $V$ must belong to a common connected component of $T$.
Since all other vertices that do not belong to the connected component including $V$ does not affect the generation of the graph $G$, we can assume that there does not exist such vertices without loss of generality.
In the following, we assume that $|F|=n+s-1$, and $T$ is a tree.

\begin{figure}[t]
  \centering
  \begin{tikzpicture}[inner sep = 0, minimum size = 20pt, very thick, scale=1.5]
    \node[circle,radius=100pt,draw] (c) at (4,0) {$v$};
    \draw (c) -- node[above=0.0em,at end] {$e_{v_1}$} (5.5,2);
    \draw (c) -- node[above=0.0em,at end] {$e_{v_2}$} (5.5,1);
    \draw (c) -- node[above=0.0em,at end] {$e_{v_3}$} (5.5,0);
    \node at (6.5,0) {$\bm{\longmapsto}$};
    \node[circle,draw] (c) at (7.7,0) {$l_v$};
    \node[circle,draw] (d) at (9.2,0) {$b_{v,3}$};
    \node[circle,draw] (e) at (9.2,1) {$b_{v,2}$};
    \node[circle,draw] (f) at (9.2,2) {$b_{v,1}$};
    \draw (c) -- (d) -- (e) -- (f);
    \draw (f) -- (10.7,2);
    \draw (e) -- (10.7,1);
    \draw (d) -- (10.7,0);
  \end{tikzpicture}
  \caption{Construction of rank-decomposition on the basis of the tree $T$.
  Left: A vertex $v$ in $T$.
  Right: A leaf of the rank-decomposition corresponding to $v$.}
  \label{fig:gadget}
\end{figure}

We use $T$ to construct a rank-decomposition of $G'$ of width at most $r-1$ as follows.
For each $v\in V\cup S$, let $e_{v_1},\dotsc,e_{v_{d_v}} \in F$ be edges of $T$ incident to $v$ where $v_1< v_2< \dotsb< v_{d_v}$.
A gadget $H_v$ corresponding to the vertex $v$ is defined as
\begin{align*}
V(H_v) &:= \{l_v\}\cup\{b_{v,i}\mid i=1,\dotsc,d_v\}\\
E(H_v) &:= \{\{l_v,b_{v,d_v}\}\}\cup \{\{b_{v,i},b_{v,i+1}\}\mid i= 1,\dotsc,d_v-1\}
\end{align*}
Figure~\ref{fig:gadget} shows the correspondence between $v$ and $H_v$.
We now define the rank-decomposition $H$ of $G'$ as
\begin{align*}
V(H) &:= \bigcup_{v\in V\cup S} V(H_v)\\
E(H) &:= \bigcup_{v\in V\cup S} E(H_v) \cup \bigcup_{\{v,w\}\in F} \bigl\{\{b_{v,i},b_{w,j}\}\mid v_i=w_j, i\in\{1,\dotsc,d_v\}, j\in\{1,\dotsc,d_w\}\bigr\}.
\end{align*}
Then, it is obvious that $H$ is a subcubic tree and the set of leaves of $H$ is $\{l_v\mid v\in V\cup S\}$. 
Hence, $(H, L\colon v \mapsto l_v)$ is a rank-decomposition of $G'$.

We will show that the width of the rank-decomposition $(H, L)$ is at most $r-1$.
Each edge $e\in E(H)$ defines a partition $P_e:=\{V_e, (V\cup S)\setminus V_e\}$ of vertices of $G'$.
In the following, we will show $\cutrank_{G'}(V_e)\le r-1$ for all $e\in E(H)$.

If $e=\{l_v, b_{v,d_v}\}$ for some $v\in V\cup S$, then $P_e=\{\{v\}, (V\cup S)\setminus\{v\}\}$. This implies $\cutrank_{G'}(V_e)=1$.

If $e=\{b_{v,i},b_{w,j}\}$ for some $v\ne w$, edges in $F\setminus\{v,w\}$ do not cross the cut $V_e$.
The number of elementary edge-complementations that cross the cut $V_e$ is at most $k-|F|+1=k-n-s+2\le (n+r-3)-n-s+2 = r-s-1\le r-1$.
From Lemmas~\ref{lem:cutrank} and \ref{lem:cost1}, $\cutrank_{G'}(V_e)\le r-1$.

Finally, assume $e=\{b_{v,i},b_{v,i+1}\}$, for some $v\in V\cup S$.
Without loss of generality, we can assume $v\in V_e$.
Before the elementary edge-complementation corresponding to $e_{v_{i+1}}$ is applied,
$v$ is the unique vertex in $V_e$ that may connect to $(V\cup S)\setminus V_e$.
This means that the cut-rank of $V_e$ is at most one at this time.
After $e_{v_{i+1}}$, the number of elementary edge-complementations that cross the cut $V_e$
is at most $k-|F|\le r-2$.
From Lemmas~\ref{lem:cutrank} and~\ref{lem:cost1}, $\cutrank_{G'}(V_e)\le r-1$.

From these observations, $\cutrank_{G'}(V_e)\le r-1$ for all $e\in E(H)$, which implies that the width of the rank-decomposition $H$ is at most $r-1$.
\end{proof}

\section{Upper bound: Proof of Theorem~\ref{thm:upper}}\label{sec:upper}

In this section, we present algorithms that generate $G$ with EC-complexity $O(rn)$ and prove the following simplified version of Theorem~\ref{thm:upper}.
\begin{theorem}[Simplified version of Theorem~\ref{thm:upper}]\label{thm:simpu}
Let $G$ be a graph with $n$ vertices and rank-width at most $r\ge 1$.
Then,
\begin{align*}
\NCZ(\ket{G}) &\le
\begin{cases}
\frac{5r^2-1}{4r} n + O(1)&\text{if $r$ is odd}\\
\frac{5r}{4} n + O(1)&\text{otherwise.}
\end{cases}
\end{align*}
\end{theorem}
The constant terms $O(1)$ with respect to $n$ are calculated in Appendix~\ref{apx:ind}, which
proves Theorem~\ref{thm:upper}.

The following is the key lemma for the algorithms in this section.

\begin{lemma}\label{lem:grow}
For any graph $G$ and $S\subseteq V(G)$, if $\cutrank_G(S)<|S|$, i.e., the rows of $A_{S,V\setminus S}$ are linearly dependent, there exists $v\in S$ such that $\NCO(G)\le\NCO(G-v)+|S|-1$.
\end{lemma}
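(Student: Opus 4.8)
The plan is to realize the generation of $G$ from the empty graph in a way that builds up the vertex set $S$ last, spending only $|S|-1$ cost-one operations on the edges incident to $S$ (beyond what is needed to generate $G-v$ for a cleverly chosen $v\in S$). The starting point is the hypothesis that the rows of $A_{S,V\setminus S}$ are linearly dependent over $\mathbb{F}_2$: there is a nonempty $T\subseteq S$ with $\sum_{v\in T} (\text{row } v)=0$, i.e.\ for every $u\in V\setminus S$ the number of neighbors of $u$ in $T$ is even. Pick any $v\in T$. The key combinatorial observation I would establish is that, using local complementations supported on $S$ together with cost-one operations (edge complementations, and the ``edge complementation between $x$ and $N(y)$'' of Definition~\ref{def:cost1}), one can synchronize the connections between $v$ and $V\setminus S$ so that $v$'s external neighborhood equals the symmetric difference of the external neighborhoods of the other vertices of $T$ — which, by the parity condition, can be produced \emph{for free} once those other connections are in place.

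Concretely, here is the order of operations I would carry out. First, run the optimal algorithm for $G-v$, which uses $\NCO(G-v)$ cost-one operations and produces (a graph LC-equivalent to) $G-v$; crucially this never touches $v$, so $v$ sits as an isolated vertex. Now I need to attach $v$. Inside $S$ the graph $G[S]$ has $|S|$ vertices, and the subgraph induced on $S$ restricted to edges incident to $v$ has at most $|S|-1$ edges; I would add those $|S|-1$ internal edges $\{v,w\}$, $w\in S\setminus\{v\}$, using $|S|-1$ edge complementations (cost-one operations of type~1). The point of the linear-dependence hypothesis is to handle the edges between $v$ and $V\setminus S$ \emph{without} spending further cost: having already built all edges between $S\setminus\{v\}$ and $V\setminus S$ (these were created during the $\NCO(G-v)$ phase since they belong to $G-v$), I use ``edge complementation between $v$ and $N(w)$'' for each $w\in T\setminus\{v\}$ — but this is a type-2 cost-one operation, so it is not free either. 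The right move is instead to exploit local complementations: a local complementation at $w\in T\setminus\{v\}$ toggles the edge $\{v,u\}$ for exactly those $u$ adjacent to $w$. Performing $\tau_w$ for all $w\in T\setminus\{v\}$ toggles $\{v,u\}$ a number of times equal to $|N(u)\cap (T\setminus\{v\})|$, which by the parity condition equals $|N(u)\cap T| \bmod 2$ shifted appropriately — so after these local complementations $v$ is connected to exactly the correct set $N_G(v)\cap(V\setminus S)$, at the cost of zero cost-one operations (local complementations are free). One then cleans up the side effects of these $\tau_w$'s on the rest of the graph by further local complementations, again free, since all of $G-v$ and the target $G$ are only defined up to local complementation equivalence anyway.

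The main obstacle — and the part that needs genuine care rather than hand-waving — is the bookkeeping of side effects: a local complementation $\tau_w$ does not only toggle edges incident to $v$; it complements the whole neighborhood clique of $w$, which may destroy edges already correctly placed among $V\setminus S$ and among $S\setminus\{v\}$. I would handle this by working, throughout, modulo local complementation equivalence (Lemma~\ref{lem:lc}) and reformulating the claim as: the graph obtained after the $\NCO(G-v)$ phase, plus $|S|-1$ edge complementations incident to $v$ inside $S$, plus some local complementations, is local-complement equivalent to $G$. Equivalently, I would track the cut-rank of $\{v\}$ (or of $S$) rather than the exact adjacency: after attaching $v$ with $|S|-1$ cost-one operations, the parity condition guarantees that the bipartite adjacency between $v$ and $V\setminus S$ lies in the row space already spanned, so the full generating matrix of the target $G$ is reachable. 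Making this last sentence precise — showing that ``the row of $v$ is in the span of the already-present rows'' translates into ``$\ket{G}$ is reachable by local Clifford operations from what we have built'' — is where I expect to invoke the combinatorial characterization of Theorem~\ref{thm:c1} / Definition~\ref{def:compG} carefully, and it is the crux of the argument.
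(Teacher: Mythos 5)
There is a genuine gap, and it is fatal to the construction as you describe it. Your setup is the right one (pick $v$ in a dependent set $T\subseteq S$, generate $G-v$ first with $v$ left isolated, then attach $v$ using at most $|S|-1$ cost-one operations), but the step where you create the edges between $v$ and $V\setminus S$ ``for free'' by local complementations cannot work. First, the specific claim ``a local complementation at $w$ toggles the edge $\{v,u\}$ for exactly those $u$ adjacent to $w$'' is false unless $v$ is already a neighbor of $w$ (and even then $\tau_w$ also complements every other pair inside $N(w)$). Second, and more fundamentally: local complementations are invertible and preserve the local-complement equivalence class, so after you have spent your entire budget of $|S|-1$ cost-one operations on type-1 toggles \emph{inside} $S$, the graph you hold lies in a fixed LC-class, and no amount of free operations can move it to the class of $G$. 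A concrete counterexample: $G$ the path $w\text{--}u\text{--}v$ with $S=\{v,w\}$, so $A_{S,V\setminus S}$ has two equal rows. Here $\{v,w\}\notin E(G)$, so your internal-edge phase adds nothing, $v$ stays isolated, and since local complementation preserves connectivity you can never attach $v$. Your closing paragraph correctly identifies this bookkeeping as ``the crux,'' but the cut-rank reformulation you sketch cannot rescue it, because LC-equivalence is strictly finer than agreement of cut-ranks.

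The fix is to reallocate the budget: the operation you considered and discarded because ``it is not free either'' is exactly the one to use. The paper applies, for each $w\in T$, a single cost-one operation that complements the edges between $v$ and $N_G(w)$ (either $\zeta^{(2)}_{v,w}$ or $\tau_w\circ\zeta^{(1)}_{v,w}\circ\tau_w$, the choice depending on whether the edge $\{v,w\}$ itself needs correcting); by the linear dependence, the composition over $w\in T$ gives $v$ exactly the right neighborhood on $(V\setminus S)\cup T$. The at most $|S|-1-|T|$ remaining adjacencies of $v$ inside $S\setminus(T\cup\{v\})$ are then fixed by plain edge toggles $\zeta^{(1)}_{v,w}$. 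This costs $|T|+(|S|-1-|T|)=|S|-1$ cost-one operations in total, with no side effects to clean up, since each of these operations touches only edges incident to $v$.
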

\begin{proof}
From the linear dependence of rows of $A_{S,V\setminus S}$,
there exists $v\in S$ and $T\subseteq S\setminus\{v\}$ such that the $v$-th row of $A_{S,V\setminus S}$ is a sum of rows of $A_{S,V\setminus S}$ in $T$.
Let $A$ be the adjacency matrix of $G$.
Let $A_v$ be the $v$-th row of $A$.
Let $A_T:=\sum_{w\in T} A_w\bmod 2$.
Then, for any $u\in V\setminus S$, an $u$-th element of $A_v$ and $A_T$ are equal.
Let
\begin{align*}
U := \{ u\in T\mid \text{$u$-th elements of $A_v$ and $A_T$ are different}\}.
\end{align*}
Then, we can generate $G$ from $G-v$ by the following algorithm.
First, for $w\in U$, we apply $\tau_w\circ\zeta_{v,w}^{(1)}\circ\tau_w$,
and for $w\in T\setminus U$, we apply $\zeta_{v,w}^{(2)}$.
Then, the neighborhood of $v$ is correctly generated for all $(V\setminus S)\cup T$.
Finally, we correct the neighborhood of $v$ for $S\setminus (T\cup\{v\})$ by $\zeta^{(1)}_{v,w}$ if needed.
The EC-complexity of this algorithm is at most $|S|-1$.
\end{proof}


Algorithm~\ref{alg:gengen} provides the algorithmic framework for generating graphs that we will consider in the rest of this section.
The EC-complexity of Algorithm~\ref{alg:gengen} is determined by the size of $S$ in Line 4.
\begin{definition}\label{def:dep}
For a graph $G$,
a subset $S\subseteq V(G)$ of vertices satisfying $\cutrank_G(S) < |S|$ is called a \emph{dependent set}.
\end{definition}
Indeed, $S$ in Definition~\ref{def:dep} is a dependent set of the independence system in Appendix~\ref{apx:ind}.
In the following, we consider upper bounds on the size of the minimum dependent sets.

\begin{algorithm}[t!]
\caption{A framework of algorithms to generate a graph.}\label{alg:gengen}
\Input{A graph $G$}
\Output{A graph $G$ generated by the operations in Definition~\ref{def:compG}.}
\Procedure{\emph{\Call{GenGraph}{$G$}}}{
\If{$G$ has at most two vertices}{
\Return \text{a graph $G$ generated in a trivial way}\;
}
Find a small set $S\subseteq V(G)$ with $\cutrank_G(S) < |S|$\;
Find $v\in S$ such that the $v$-th row of $A_{S,V\setminus S}$ is spanned by other rows\;
$G'\gets \Call{GenGraph}{G[V(G)\setminus\{v\}]}$\;
Add $v$ to $G'$ with EC-complexity at most $|S|-1$ by the algorithm in Lemma~\ref{lem:grow}\;
\Return $G'$\;
}
\end{algorithm}


The following lemma provides an upper bound on the size of the minimum dependent set when a graph $G$ has rank-width $r$.
\begin{lemma}\label{lem:part}
For any positive integer $r$ and a subcubic tree $T$ with $n\ge 3r$ leaves, there exists an edge $e \in E(T)$ such that one of the two trees obtained by removing $e$ from $T$ contains $k\in\{r+1,r+2,\dotsc,2r\}$ leaves of $T$.
\end{lemma}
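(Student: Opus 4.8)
The plan is to root the subcubic tree $T$ and walk down from the root, tracking at each edge the number of leaves lying on the ``far'' side, until we first cross below the threshold $2r$; I claim the edge where this first happens (or, if the drop is too steep, one of the two children edges at that node) witnesses the lemma. Concretely, first I would handle the degenerate case: if $T$ has exactly two vertices it has $2$ leaves, so $n \ge 3r$ forces $r = 0$, a case excluded since $r$ is a positive integer; hence every internal vertex of $T$ has degree exactly three and we may pick any internal vertex as a root $\rho$. For each edge $e$, removing $e$ splits the leaves into two parts; orient $e$ so that one side is the ``subtree below $e$'' in the rooted picture, and let $\ell(e)$ denote the number of leaves in that lower subtree. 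The root is incident to (at least) two edges whose lower subtrees together contain all $n \ge 3r > 2r$ leaves of $T$ (the root is a leaf only in the excluded two-vertex case), so at least one edge $e_0$ at the root has $\ell(e_0) \ge \lceil n/2 \rceil \ge \lceil 3r/2 \rceil \ge 2r$ when $r\ge 1$ — wait, for small $r$ this needs the cleaner bound $\ell(e_0)\ge n - (\text{other child}) $; I would instead just note $\ell(e_0) \ge n/2 \ge 3r/2$, which is $\ge r+1$ already, and more carefully: some edge at the root has $\ell \ge \lceil n/2\rceil$.

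The main step is a descent argument. Start at an edge $e$ with $\ell(e) \ge 2r$ (exists by the previous paragraph since $n \ge 3r \ge 2r+1$... actually $3r \ge 2r+1$ for $r\ge 1$, so $\lceil n/2 \rceil \ge 2r$ whenever $n \ge 4r-1$; to avoid fussing, take $e$ to be any edge maximizing $\ell(e)$ among edges with $\ell(e) \ge 2r$, noting the root-incident edge with the larger lower count satisfies $\ell \ge n/2 \ge 3r/2 \ge r+1$, and then separately argue). Let me restructure: define a ``good'' edge as one with $r+1 \le \ell(e) \le 2r$; we want to show a good edge exists. Consider the set of edges $e$ with $\ell(e) \ge r+1$; this set is nonempty because the root-incident edge $e_0$ with $\ell(e_0)\ge \lceil n/2\rceil \ge \lceil 3r/2\rceil \ge r+1$ lies in it. Among such edges pick one, $e^\star$, minimizing $\ell(e^\star)$. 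If $\ell(e^\star) \le 2r$ we are done. Otherwise $\ell(e^\star) \ge 2r+1$; the lower endpoint $x$ of $e^\star$ is internal (it has $\ge 2r+1 \ge 3$ leaves below), so $x$ has exactly two child-edges $f_1, f_2$ with $\ell(f_1) + \ell(f_2) = \ell(e^\star) \ge 2r+1$. By minimality of $\ell(e^\star)$, each $f_i$ has either $\ell(f_i) \le r$ or $\ell(f_i) \ge \ell(e^\star) > 2r$; the latter is impossible since $\ell(f_i) < \ell(e^\star)$. So $\ell(f_1), \ell(f_2) \le r$, forcing $\ell(f_1) + \ell(f_2) \le 2r < 2r+1 \le \ell(e^\star)$, a contradiction. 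Hence $\ell(e^\star) \le 2r$ and $e^\star$ is the desired edge.

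The one place to be careful — and the main (minor) obstacle — is the base of the descent: confirming that the ``$\ge r+1$'' edge set is nonempty and that the root has exactly two children so that $\ell(f_1)+\ell(f_2)=\ell(e^\star)$ holds at every internal vertex $x$ strictly below the root. For the former, I would simply root $T$ at a \emph{leaf} $\rho$ instead: then $\rho$'s unique incident edge $e_\rho$ has $\ell(e_\rho) = n-1 \ge 3r-1 \ge r+1$ (using $r\ge1$), so the set is nonempty with no case analysis; and every internal vertex, having degree $3$, has exactly two child-edges in this rooting, so the additivity $\ell(f_1)+\ell(f_2) = \ell(e^\star)$ is automatic. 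With rooting at a leaf, the contradiction argument above goes through verbatim, and the lemma follows. I expect the whole argument to be about half a page once the rooting convention is fixed up front.
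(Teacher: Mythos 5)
Your overall strategy---root $T$ at a leaf, let $\ell(e)$ count the leaves of $T$ on the far side of $e$, and take an edge minimizing $\ell$ among all edges with $\ell(e)\ge r+1$---is sound and is essentially the extremal reformulation of the paper's argument: Algorithm~\ref{alg:treer} greedily descends into the child subtree with the most leaves until that subtree has at most $2r$ leaves, which is exactly the edge your minimality argument pins down. Rooting at a leaf is a nice touch, since the starting edge has $n-1\ge 3r-1\ge r+1$ leaves below it and you thereby avoid the paper's separate treatment of the boundary case $n=3r$.

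There is, however, one step that fails as written. You assert that, once the two-vertex case is excluded, ``every internal vertex of $T$ has degree exactly three.'' This does not follow: by the paper's definition a subcubic tree only has maximum degree \emph{at most} three, so internal vertices of degree two are allowed (subdivide an edge of the three-leaf star, for instance). Your contradiction at the lower endpoint $x$ of $e^\star$ needs $x$ to have exactly two child-edges, so that $\ell(f_1)+\ell(f_2)=\ell(e^\star)$ with both summands strictly smaller than $\ell(e^\star)$ and hence both at most $r$. If $x$ has a single child-edge $f_1$, then $\ell(f_1)=\ell(e^\star)$, so $f_1$ is again a minimizer and no contradiction arises; the descent stalls at every degree-two vertex. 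The repair is short: either suppress degree-two vertices beforehand (this changes neither the leaf set nor the family of leaf bipartitions induced by edges), or break ties among the minimizers of $\ell$ by choosing the edge whose lower subtree is inclusion-minimal, so that a single child-edge carrying the same $\ell$-value would contradict the tie-breaking rule. With either fix the argument is complete and matches the paper's conclusion.
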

\begin{proof}
We will show that the edge $e \in E(T)$ selected by Algorithm~\ref{alg:treer} satisfies the conditions of the lemma.

We first assume $n\ge 3r+1$.
Consider the tree containing the vertex $b$ when Algorithm~\ref{alg:treer} outputs $\{a,b\}$.
The tree has at most $2r$ leaves of $T$ from the stopping condition of the algorithm.
We will show that the tree has at least $r+1$ leaves of $T$.
If $a=a^*$, the tree with root $a$ is the entire tree $T$, and since $a$ has at most three children, the tree with root $b$ contains at least $\lceil n/3\rceil\ge \lceil (3r+1)/3\rceil= r+1$ leaves of $T$.
If $a \ne a^*$, the tree with root $a$ contains at least $2r+1$ leaves of $T$ from the stopping condition.
Since $a$ has at most two children, the tree with root $b$ contains at least $\lceil (2r+1)/2\rceil=r+1$ leaves of $T$.

We assume $n=3r$. Then, the above proof works except in the case where three subtrees incident to $a^*$ each have exactly $r$ leaves.
In this case, by removing the edge $\{a^*, b\}$, the other tree rooted at $a^*$ rather than $b$ has $2r$ leaves.
\end{proof}
\begin{algorithm}[t!]
\caption{Algorithm to find an edge that satisfies the condition in Lemma~\ref{lem:part}.}\label{alg:treer}
\Input{$r\colon$ A positive integer. $T\colon$ A subcubic tree with $n\ge 3r$ leaves. }
\Output{An edge $e$ in $T$ that satisfies the condition in Lemma~\ref{lem:part}.}
\Procedure{\emph{\Call{FindEdge}{$T$, $r$}}}{
\Procedure{\emph{\Call{aux}{$T, a$}}}{
$(T', b)\leftarrow$ The rooted tree obtained by removing $a$ from $T$ that has the largest number of leaves where the root $b$ is a child of $a$ in $T$\;
\If{the number of leaves in $T'\le 2r$}{
  \Return{$\{a,b\}$}\;
}
\Return\Call{aux}{$T',b$}\;
}
$a^*\leftarrow $ Arbitrary non-leaf vertex of $T$\;
\Return\Call{aux}{$T,a^*$}\;
}
\end{algorithm}

Lemma~\ref{lem:part} implies an upper bound on the size of the minimum dependent sets.
\begin{corollary}\label{cor:2r1}
For any graph $G$ with $n$ vertices and rank-width $r\le n/3$, there exists $S\subseteq V(G)$ such that $|S|\in\{r+1,\dotsc,2r\}$ and $\cutrank_G(S)\le r$.
\end{corollary}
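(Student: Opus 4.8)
The plan is to derive this directly from Lemma~\ref{lem:part} together with the definition of rank-width. Let $G$ be a graph with $n$ vertices and rank-width $r \le n/3$, and fix an optimal rank-decomposition $(T, L)$ of $G$, so that every edge of $T$ has width at most $r$. Note $T$ is a subcubic tree whose leaves are in bijection with $V(G)$ via $L$, so $T$ has exactly $n \ge 3r$ leaves. First I would apply Lemma~\ref{lem:part} to $T$ and $r$ to obtain an edge $e \in E(T)$ such that one of the two components of $T - e$ contains $k$ leaves of $T$ with $k \in \{r+1, \dotsc, 2r\}$. Let $S \subseteq V(G)$ be the set of vertices mapped by $L$ to those $k$ leaves; then $|S| = k \in \{r+1, \dotsc, 2r\}$.

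Next I would observe that $S$ (together with its complement $V(G) \setminus S$) is precisely the bipartition of $V(G)$ determined by the edge $e$ of the rank-decomposition. By the definition of the width of $e$, we have $\cutrank_G(S) = \mathrm{rank}_{\mathbb{F}_2}(A_{S, V(G)\setminus S})$, and this equals the width of $e$, which is at most the width of the rank-decomposition, which is at most $r$. Hence $\cutrank_G(S) \le r < r+1 \le |S|$, so $S$ is a dependent set of the required size, completing the proof.

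I do not expect a genuine obstacle here: the corollary is essentially a translation of the combinatorial statement of Lemma~\ref{lem:part} into the language of cut-rank via the rank-decomposition, and all the needed machinery (the definition of rank-width, the correspondence between tree edges and vertex bipartitions, and the bound on the width of each edge) is already in place. The only point requiring a little care is the degenerate case where $G$ has no rank-decomposition (when $n \le 1$), but this is excluded by the hypothesis $n \ge 3r \ge 3$ since $r$ is a positive integer; one should also confirm that $\rw(G) = r$ (rather than $\rw(G) \le r$) is not actually needed—only $\rw(G) \le r$ is used, and if $\rw(G) < r$ the bound $\cutrank_G(S) \le r$ holds a fortiori, though then one might prefer to invoke the corollary with the true rank-width. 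I would state it for rank-width \emph{exactly} $r$ as written and simply use the optimal decomposition.
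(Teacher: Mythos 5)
Your proposal is correct and follows exactly the route the paper intends: the paper gives no separate proof of Corollary~\ref{cor:2r1} beyond the sentence introducing it, precisely because it is the immediate translation of Lemma~\ref{lem:part} applied to an optimal rank-decomposition, with $\cutrank_G(S)$ equal to the width of the chosen edge and hence bounded by $r$. Your handling of the side conditions ($n\ge 3r$ so the lemma applies, and only $\rw(G)\le r$ being needed) is also fine.
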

Hence, if $n\ge 3r$, there exists a vertex $v\in V(G)$ such that $G$ can be generated from $G-v$ with EC-complexity at most $2r-1$.
This argument implies that for any fixed $r$, the EC-complexity of graphs with $n$ vertices and rank-width $r$ is at most $(2r-1)n + O(1)$.
In the following, we will improve the constant factor $2r-1$.

First, we show the size of a dependent set $S$ can be reduced while maintaining $\cutrank_G(S) < |S|$.
\begin{lemma}\label{lem:shrink}
Let $S\subseteq V(G)$ and $k=\cutrank_G(S)$.
Assume $k < |S|$.
Then, any $T\subseteq S$ of size at least $\left\lceil\frac{k+|S|+1}2\right\rceil$ satisfies $\cutrank_G(T) < |T|$.
\end{lemma}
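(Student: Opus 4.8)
The plan is to prove the contrapositive-free direct statement by giving a clean upper bound on $\cutrank_G(T)$ in terms of $k=\cutrank_G(S)$ and $|S\setminus T|$, and then to observe that the size hypothesis on $T$ already forces this bound to be strictly below $|T|$. Concretely, I claim $\cutrank_G(T)\le \cutrank_G(S)+|S\setminus T| = k + |S| - |T|$, and then a one-line arithmetic check finishes it.

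First I would split the complement of $T$ relative to $S$. Write $\bar T := V\setminus T$; since $T\subseteq S$, we have the disjoint decomposition $\bar T = (S\setminus T)\sqcup(V\setminus S)$, so after permuting columns the bipartite adjacency matrix is a horizontal block matrix $A_{T,\bar T} = [\,A_{T,S\setminus T}\mid A_{T,V\setminus S}\,]$. Using the elementary inequality $\mathrm{rank}_{\mathbb F_2}[\,M_1\mid M_2\,]\le \mathrm{rank}_{\mathbb F_2}M_1 + \mathrm{rank}_{\mathbb F_2}M_2$, I would bound the two blocks separately. The first block has only $|S\setminus T| = |S|-|T|$ columns, so $\mathrm{rank}_{\mathbb F_2}A_{T,S\setminus T}\le |S|-|T|$. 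The second block $A_{T,V\setminus S}$ is exactly the row-submatrix of $A_{S,V\setminus S}$ obtained by keeping the rows indexed by $T\subseteq S$, and deleting rows never increases rank, so $\mathrm{rank}_{\mathbb F_2}A_{T,V\setminus S}\le \mathrm{rank}_{\mathbb F_2}A_{S,V\setminus S} = k$. Hence $\cutrank_G(T) = \mathrm{rank}_{\mathbb F_2}A_{T,\bar T}\le (|S|-|T|) + k$.

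The remaining step is pure arithmetic: the hypothesis $|T|\ge \bigl\lceil\frac{k+|S|+1}{2}\bigr\rceil$ gives $2|T|\ge k+|S|+1$, hence $(|S|-|T|)+k\le |T|-1 < |T|$, and therefore $\cutrank_G(T)<|T|$, as required. There is no real obstacle here; the only points to handle carefully are the disjointness of the column split $\bar T = (S\setminus T)\sqcup(V\setminus S)$ and the subadditivity of rank under horizontal concatenation over $\mathbb F_2$. (One could instead derive the same bound $\cutrank_G(T)\le \cutrank_G(S)+|S\setminus T|$ from the symmetry and submodularity of the cut-rank by peeling off the vertices of $S\setminus T$ one at a time, but the direct matrix argument above is shorter and self-contained.)
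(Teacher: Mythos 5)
Your proof is correct, but it takes a more direct route than the paper's. You establish the single inequality $\cutrank_G(T)\le \cutrank_G(S)+|S\setminus T|$ in one shot by splitting the columns of $A_{T,V\setminus T}$ into the block $A_{T,S\setminus T}$ (rank at most $|S|-|T|$ since it has that many columns) and the block $A_{T,V\setminus S}$ (a row-submatrix of $A_{S,V\setminus S}$, hence rank at most $k$), then finish with the arithmetic $2|T|\ge k+|S|+1\Rightarrow k+|S|-|T|\le |T|-1$. The paper instead peels off the vertices of $S\setminus T$ one at a time: removing a single vertex $v$ from $S$ deletes one row of the bipartite adjacency matrix and adds one column, so the cut-rank increases by at most one while the set size decreases by one, and the deficiency $|S|-\cutrank_G(S)$ therefore drops by at most two per step; iterating until the deficiency would hit zero yields the same threshold $\bigl\lceil\frac{k+|S|+1}{2}\bigr\rceil$. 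The two arguments prove the same intermediate bound, but yours avoids the induction and the slightly delicate termination bookkeeping, at the cost of invoking rank subadditivity under horizontal concatenation (which is elementary). Both are fully valid; your parenthetical remark about the one-vertex-at-a-time alternative is essentially a description of the paper's actual proof.
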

\begin{proof}
If $|S| - k \in \{1, 2\}$, $\left\lceil\frac{k+|S|+1}2\right\rceil = |S|$.
Hence, $T=S$ and the condition is satisfied.
Assume $|S|-k> 2$.
For any $v\in S$, let $S'=S\setminus\{v\}$.
$A_{S', V\setminus S'}$ is obtained from $A_{S, V\setminus S}$ by deleting $v$-th row and adding $v$-th column.
Hence, $\mathrm{rank}_{\mathbb{F}_2}(A_{S', V\setminus S'})\le \mathrm{rank}_{\mathbb{F}_2}(A_{S, V\setminus S})+1$.
In the process $S\to S'$, the size of the subset decreases by one while the cutrank of the subset increases by at most one.
Then,
\begin{align*}
0<
|S|-\cutrank_G(S)-2
\le
|S'|-\cutrank_G(S')
\le
|S|-\cutrank_G(S).
\end{align*}
We can repeatedly apply this process until $|S|-\cutrank_G(S) > 2$.
\end{proof}

Corollary~\ref{cor:2r1} and Lemma~\ref{lem:shrink} imply the following improved upper bound on the size of the minimum dependent sets.
\begin{corollary}\label{cor:23r}
For any graph $G$ with $n$ vertices and rank-width $r\le n/3$, there exists $S\subseteq V(G)$ such that $|S|\le\left\lceil\frac{3r+1}2\right\rceil$ and $\cutrank_G(S)< |S|$.
\end{corollary}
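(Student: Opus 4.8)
The plan is to obtain $S$ by chaining Corollary~\ref{cor:2r1} with Lemma~\ref{lem:shrink}; no new ideas are needed beyond these two results. Since $r\le n/3$, Corollary~\ref{cor:2r1} furnishes a set $S_0\subseteq V(G)$ with $|S_0|=:s\in\{r+1,\dots,2r\}$ and $k:=\cutrank_G(S_0)\le r$. In particular $k\le r<r+1\le s$, so $k<|S_0|$ and the hypothesis of Lemma~\ref{lem:shrink} is met for $S_0$.

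Next I would take any $T\subseteq S_0$ with $|T|=\bigl\lceil\frac{k+s+1}{2}\bigr\rceil$. Such a $T$ exists as a subset of $S_0$: from $k<s$ we get $k+s+1\le 2s$, hence $\bigl\lceil\frac{k+s+1}{2}\bigr\rceil\le s=|S_0|$ because $s$ is an integer; and the value is obviously at least $1$. By Lemma~\ref{lem:shrink}, every such $T$ satisfies $\cutrank_G(T)<|T|$, so $S:=T$ is a dependent set in the sense of Definition~\ref{def:dep}.

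It remains to bound $|S|$. Using $k\le r$ and $s\le 2r$ we have $k+s+1\le 3r+1$, and since $x\mapsto\lceil x/2\rceil$ is nondecreasing, $|S|=\bigl\lceil\frac{k+s+1}{2}\bigr\rceil\le\bigl\lceil\frac{3r+1}{2}\bigr\rceil$, which is exactly the claimed bound.

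There is essentially no obstacle in this step: the entire combinatorial weight sits in Corollary~\ref{cor:2r1} (which rests on the subcubic-tree splitting of Lemma~\ref{lem:part} applied to an optimal rank-decomposition) and in the one-vertex-at-a-time shrinking argument of Lemma~\ref{lem:shrink}. The only thing to be careful about is the two elementary ceiling inequalities above and the fact that $T$ can be chosen inside $S_0$ of the prescribed size, both of which follow from $k<s\le 2r$ and $k\le r$.
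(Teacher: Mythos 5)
Your proposal is correct and is exactly the argument the paper intends: the paper states Corollary~\ref{cor:23r} as an immediate consequence of chaining Corollary~\ref{cor:2r1} with Lemma~\ref{lem:shrink}, and your verification of the two ceiling inequalities ($\lceil(k+s+1)/2\rceil\le s$ so that $T$ fits inside $S_0$, and $k+s+1\le 3r+1$ for the final bound) fills in precisely the routine details the paper leaves implicit.
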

Corollary~\ref{cor:23r} implies that for any fixed $r$, the EC-complexity of a graph with $n$ vertices and rank-width $r$ is at most $\left\lceil\frac{3r-1}2\right\rceil n + O(1)$.

We can further improve the constant factor.
Let
\begin{align*}
C(n,r)&:= \max_{G\colon\text{$n$-vertex graph with rank-width at most $r$}}\NCO(G).
\end{align*}

\begin{lemma}\label{lem:54r}
For odd $r$ and any $n\ge 3r$, there exists $k\in\{1,2,\dotsc,r\}$ such that
\begin{align*}
C(n, r) &\le \frac{5r^2-1}{4r} k + C(n-k, r).
\end{align*}
For even $r$ and any $n\ge 3r$, there exists $k\in\{1,2,\dotsc,r\}$ such that
\begin{align*}
C(n, r) &\le \frac{5r}{4} k + C(n-k, r).
\end{align*}
\end{lemma}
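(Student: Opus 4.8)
The plan is to obtain a better recursive bound than the naive ``remove one vertex per dependent set'' argument by removing an \emph{entire} small dependent set at once (or more precisely, removing several vertices in one ``batch'' while amortizing the cost). The starting point is Corollary~\ref{cor:2r1}: for any graph $G$ with $n \ge 3r$ vertices and rank-width at most $r$, there exists a subset $S \subseteq V(G)$ with $|S| = k' \in \{r+1,\dots,2r\}$ and $\cutrank_G(S) \le r$. The key observation is that once such an $S$ is found, the rows of $A_{S,V\setminus S}$ span a space of dimension $\le r$, so they are highly dependent: one can peel off vertices $v_1, v_2, \dots$ from $S$ one at a time, each time applying Lemma~\ref{lem:grow} to $G$ restricted to the current $S$. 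Crucially, after deleting a vertex the cut-rank of the shrinking set $S$ only \emph{drops} (deleting a row cannot increase the rank, and although a column is gained, Fact~\ref{fact:rwm} / the monotonicity of cut-rank under induced subgraphs keeps it bounded), so at every stage the relevant set is still dependent as long as its size exceeds its cut-rank. This lets us continue peeling until $|S|$ has shrunk down to $\cutrank \le r$.

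Concretely, I would set $k := k' - r \in \{1,\dots,r\}$ (the number of vertices we peel off before the set becomes ``rank-saturated''), and bound the total EC-cost of peeling these $k$ vertices. When we remove the $j$-th vertex, the current set has size $k' - (j-1)$ and cut-rank at most $r$, so by Lemma~\ref{lem:grow} the cost of that step is at most $(k'-(j-1)) - 1$. Summing over $j = 1,\dots,k$ gives a telescoping-type sum
\begin{align*}
\sum_{j=1}^{k}\bigl(k' - j\bigr) = k'k - \frac{k(k+1)}{2},
\end{align*}
and since $k' = k + r$ this equals $k^2 + rk - \frac{k(k+1)}{2} = \frac{k^2}{2} + \bigl(r - \tfrac12\bigr)k = \frac{k}{2}\bigl(k + 2r - 1\bigr)$. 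To make this a bound of the stated form I would take the worst case over the allowed range $k \in \{1,\dots,r\}$: the per-vertex amortized cost is $\frac{1}{2}(k + 2r - 1) \le \frac{1}{2}(r + 2r - 1) = \frac{3r-1}{2}$, which recovers Corollary~\ref{cor:23r}, but a sharper accounting — noting that we get to \emph{choose} the best $k$ in the range and that Lemma~\ref{lem:shrink} lets us further shrink $S$ before peeling — should push the constant down to $\frac{5r^2-1}{4r}$ for odd $r$ and $\frac{5r}{4}$ for even $r$. The precise optimization is: combine Corollary~\ref{cor:2r1} with Lemma~\ref{lem:shrink} to first replace $S$ by a dependent set of size $\lceil \frac{3r+1}{2}\rceil$, then peel from there; the arithmetic of $\lceil\frac{3r+1}{2}\rceil$ splits into the odd/even cases of the statement, and averaging the per-vertex costs $|S|-1, |S|-2, \dots$ over the peel gives exactly $\frac{5r^2-1}{4r}$ (odd) or $\frac{5r}{4}$ (even).

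The main obstacle I anticipate is controlling the cut-rank of the shrinking set $S$ through the peeling process \emph{within the graph $G$ itself}, as opposed to within $G$ minus the already-removed vertices. Lemma~\ref{lem:grow} is stated for a single vertex removal with a cost governed by $|S|$, and chaining $k$ applications requires that after removing $v_1,\dots,v_{j-1}$ the set $S \setminus \{v_1,\dots,v_{j-1}\}$ is still a dependent set of the \emph{current} graph $G - \{v_1,\dots,v_{j-1}\}$; I would verify this via Fact~\ref{fact:rwm} (cut-rank is monotone under taking induced subgraphs, so $\cutrank_{G - \{v_1,\dots,v_{j-1}\}}(S') \le \cutrank_G(S') \le \cutrank_G(S) + (\text{corrections})$) together with the row-deletion argument already used in the proof of Lemma~\ref{lem:shrink}. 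Getting the constants to come out to exactly $\frac{5r^2-1}{4r}$ and $\frac{5r}{4}$ rather than something slightly weaker will require being careful about which vertex to choose at each step (always choosing $v$ so that the cut-rank of the remaining set is as small as possible, e.g.\ a vertex whose row is spanned by others) and about whether $k$ ranges over $\{1,\dots,r\}$ tightly; this bookkeeping, while routine, is where the proof will need the most care.
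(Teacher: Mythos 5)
Your overall strategy---find a set $S$ with $|S|\in\{r+1,\dotsc,2r\}$ and $\cutrank_G(S)\le r$ via Corollary~\ref{cor:2r1}, peel off $k=|S|-r$ vertices, and amortize the per-step costs from Lemma~\ref{lem:grow}---is the right one, and your first accounting (step costs $|S|-1, |S|-2,\dotsc$) is correct but, as you note, only yields the per-vertex constant $\frac{3r-1}{2}$. The gap is in how you then try to reach $\frac{5r^2-1}{4r}$ and $\frac{5r}{4}$: you propose to apply Lemma~\ref{lem:shrink} \emph{once} up front, replacing $S$ by a dependent set $T$ of size $\lceil\frac{3r+1}{2}\rceil$, and then peel from $T$ with costs $|T|-1,|T|-2,\dotsc$. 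This does not work. Lemma~\ref{lem:shrink} only guarantees that $T$ is dependent, i.e.\ $\cutrank_G(T)<|T|$; it does \emph{not} preserve the bound $\cutrank_G(T)\le r$ (in its proof, each deleted vertex can raise the cut-rank by one, so $\cutrank_G(T)$ may be as large as $|T|-1$). Hence after peeling a single vertex from $T$ the residual set may already be independent, and none of the subsequent cheap peels is justified. The arithmetic that "averaging $|T|-1,|T|-2,\dotsc$ gives $\frac{5r}{4}$" happens to come out numerically if one could peel $T$ down to size $r$, but that peeling is exactly what is not licensed.

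The correct combination, which is what the paper does, is to keep the \emph{original} set $S$ and apply Lemma~\ref{lem:shrink} \emph{afresh at every step}. The point you already half-identified in your "main obstacle" paragraph is the key: when a vertex $v_j\in S$ is deleted from the graph, the bipartite adjacency matrix of the residual $S^{(j)}=S\setminus\{v_1,\dotsc,v_{j-1}\}$ against $V\setminus S$ loses a row and gains \emph{no} column (the deleted vertices all lie in $S$), so $\cutrank_{G^{(j)}}(S^{(j)})\le\cutrank_G(S)\le r$ throughout, while $|S^{(j)}|=|S|-(j-1)\ge r+1$ keeps it dependent for $j\le|S|-r$. Applying Lemma~\ref{lem:shrink} to $S^{(j)}$ (with cut-rank bound $r$) and then Lemma~\ref{lem:grow} to the resulting subset gives step cost $\bigl\lceil\frac{|S|+r-j}{2}\bigr\rceil$, a \emph{decreasing} sequence; summing over $j=1,\dotsc,|S|-r$ and maximizing over $|S|\le 2r$ gives per-vertex cost $\frac1r\sum_{k=r+1}^{2r}\bigl\lceil\frac{k+r-1}{2}\bigr\rceil$, which equals $\frac{5r^2-1}{4r}$ for odd $r$ and $\frac{5r}{4}$ for even $r$. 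So the fix is not more careful vertex selection but reordering the two lemmas: shrink inside each peeling step, not before the peeling begins.
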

\begin{proof}
Let $G$ be a graph with $n$ vertices and rank-width at most $r$ that satisfies $\NCO(G) = C(n, r)$.
There exists a set $S\subseteq V(G)$ with $|S|\in\{r+1,\dotsc,2r\}$ and $\cutrank_G(S)\le r$.
There also exists $v_1\in S$ such that $G$ can be generated from $G'=G[V(G)\setminus\{v_1\}]$ with EC-complexity at most $\left\lceil\frac{|S|+r-1}2\right\rceil$.
If $|S|\ge r+2$, $G'$ has a vertex set $S'=S\setminus\{v_1\}$ with $\cutrank_{G'}(S') \le \cutrank_{G}(S) \le r < |S'|$.
Hence, there exists $v_2\in S'$ such that $G'$ can be generated from $G'[V(G')\setminus\{v_2\}]$ with EC-complexity at most $\left\lceil\frac{|S|+r-2}2\right\rceil$.
By repeating this argument, there exist vertices $v_1,\dotsc,v_{|S|-r}$ such that $G$ can be generated from $G[V(G)\setminus\{v_1,\dotsc,v_{|S|-r}\}]$ with EC-complexity per vertex at most
\begin{align}
\frac1{|S|-r}\sum_{k=1}^{|S|-r} \left\lceil\frac{|S|+r-k}2\right\rceil
&= \frac1{|S|-r}\sum_{k=r+1}^{|S|} \left\lceil\frac{k+r-1}2\right\rceil.
\label{eq:pv}
\end{align}
Since~\eqref{eq:pv} is an average of $|S|-r$ integers, this is maximized when $|S|=2r$.
In this case, if $r$ is odd, \eqref{eq:pv} is
\begin{align*}
\frac1{r}\sum_{k=r+1}^{2r} \left\lceil\frac{k+r-1}2\right\rceil
&= \frac1r\left(r + (r+1) + (r+1) + (r+2) + (r+2) + \dotsb + \frac{3r-1}2 + \frac{3r-1}2\right)\\
&=\frac{5r^2-1}{4r},
\end{align*}
and if $r$ is even,
\begin{align*}
\frac1{r}\sum_{k=r+1}^{2r} \left\lceil\frac{k+r-1}2\right\rceil
&= \frac1r\left(r + (r+1) + (r+1) + (r+2) + (r+2) + \dotsb + \frac{3r-2}2 + \frac{3r-2}2 + \frac{3r}2\right)\\
&=\frac{5r}{4}.
\end{align*}
Hence,
\begin{align*}
C(n,r) = \NCO(G) &\le 
\begin{cases}
\frac{5r^2-1}{4r} (|S|-r) + C(n-|S|+r, r)&\text{if $r$ is odd}\\
\frac{5r}{4} (|S|-r) + C(n-|S|+r, r)&\text{otherwise.}
\end{cases}
\end{align*}
\end{proof}

Lemma~\ref{lem:54r} implies that for any fixed $r$,
\begin{align*}
C(n,r)&\le
\begin{cases}
\frac{5r^2-1}{4r}n + O(1)&\text{if $r$ is odd}\\
\frac{5r}{4}n + O(1)&\text{otherwise}
\end{cases}
\end{align*}
and hence Theorem~\ref{thm:simpu} is proved.
The constant terms $O(1)$ with respect to $n$ are calculated in Appendix~\ref{apx:ind}, which
completes the proof of Theorem~\ref{thm:upper}.
\if0
For $3r+1\le n\le 3r-1+ \lfloor n/2\rfloor - r$,
\begin{align*}
C(n,r)&\le \frac{(3r-1)^2}4 + (\lfloor n/2\rfloor -1) + \dotsb + (\lfloor n/2\rfloor - n+(3r-1))\\
&= \frac{(3r-1)^2}4 + (n-3r+1)(3r-1-\lceil n/2\rceil) + \frac{(n-3r)(n-3r+1)}2\\
&= \frac{(3r-1)^2}4 + \frac{(n-3r+6r-2-2\lceil n/2\rceil)(n-3r+1)}2\\
&\le \frac{(3r-1)^2}4 + \frac{(3r-2)(n-3r+1)}2\\
\end{align*}
\begin{align*}
C(n,r)&\le \max_{k\in\{\lceil n/3\rceil-r,\dotsc,\lfloor n/2\rfloor-r\}}C(n-k,r) + rk + \frac{k(k-1)}2\\
&\le C(\lceil n/2\rceil+r, r) + \frac{(\lfloor n/2\rfloor -r)(r+\lfloor n/2\rfloor -1)}2\\
&\le \frac{(\lceil n/2\rceil+r)^2}4 + \frac{(\lfloor n/2\rfloor -r)(r+\lfloor n/2\rfloor -1)}2\\
&\le \frac{\lceil n/2\rceil^2 + 2\lfloor n/2\rfloor^2 - r^2 + 2r\lceil n/2\rceil - 2(\lfloor n/2\rfloor -r)}4\\
&= \frac{3r-1}2\left(n-\frac{3r-1}2\right) - \frac{k(r-k)}2\\
&\le \frac{3r-1}2\left(n-\frac{3r-1}2\right)
\end{align*}

\begin{theorem}[(Equivalent to Theorem~\ref{thm:upper})]\label{thm:upperEC}
For $r\in\{1,2,\dotsc,\lceil n/3\rceil\}$,
$C(n,r)\le \frac{3r-1}2\left(n-\frac{3r-1}2\right)$.
\end{theorem}
\begin{proof}
First, we prove that $\frac{3r-1}2\left(n-\frac{3r-1}2\right)$ is monotonically increasing with respect to $r\in\{1,2,\dotsc,\lceil n/3\rceil\}$ for fixed $n$.
It is obvious that it is monotonically increasing for $r\le (n+1)/3$ since the function is quadratic with the maximum at $r=(n+1)/3$.
Here, $\lceil n/3\rceil = \lfloor (n+2)/3\rfloor$ is greater than $(n+1)/3$ only when $n\equiv 1 \mod 3$.
In this case, $\lceil n/3\rceil = (n+2)/3$, which is the integer closest to the maximum.
Hence, the function $\frac{3r-1}2\left(n-\frac{3r-1}2\right)$ is monotonically increasing with respect to $r\in\{1,2,\dotsc,\lceil n/3\rceil\}$.

For $n=3r-2$,
\begin{align*}
\frac{3r-1}2\left(3r-2-\frac{3r-1}2\right)
&=
\frac{3r-1}2\left(\frac{3r-1}2-1\right)
\end{align*}

From Lemma~\ref{lem:part}, if we find a cut $S\subseteq V(G)$ of size $r+k\in\{r+1,\dotsc,2r\}$ in the rank-decomposition,
there exists $T\subseteq S$ of size $k$ such that
\begin{align*}
\NCO(G) &\le \NCO(G[V(G)\setminus T]) + (r+k-1) + (r+k-2) + \dotsb + r\\
&\le C(n-k,r) + rk + \frac{k(k-1)}2\\
&\le \frac{3r-1}2\left(n-k-\frac{3r-1}2\right) + rk + \frac{k(k-1)}2\\
&= \frac{3r-1}2\left(n-\frac{3r-1}2\right) - \frac{k(r-k)}2\\
&\le \frac{3r-1}2\left(n-\frac{3r-1}2\right)
\end{align*}

If $n\le 4 r-1$ and $n-k\le 3r-3$,
\begin{align*}
\NCO(G) &\le \NCO(G[V(G)\setminus T]) + (r+k-1) + (r+k-2) + \dotsb + r\\
&\le C\left(n-k,\left\lceil\frac{n-k}3\right\rceil\right) + rk + \frac{k(k-1)}2\\
&\le \frac{(n-k)^2}4 + \frac{k(2r+k-1)}2\\
&= \frac{(3r-1)n}2 - \frac{2(3r-1)n - (n-k)^2 - 4rk - 2k^2+2k}4\\
&= \frac{(3r-1)n}2 - \frac{2(3r-1)n - n^2  - k(3k - 2n + 4r - 2)}4\\
&\le \frac{(3r-1)n}2 - \frac{2(3r-1)n - n^2  + k(k + 2)}4\\
\end{align*}

If $n\ge \frac72 r-1$,
\begin{align*}
\NCO(G) &\le \NCO(G[V(G)\setminus T]) + (r+k-1) + (r+k-2) + \dotsb + r\\
&\le C\left(n-k,\left\lceil\frac{n-k}3\right\rceil\right) + rk + \frac{k(k-1)}2\\
&\le \frac{(n-k)^2}4 + rk + \frac{k(k-1)}2\\
&= \frac{(3r-1)n}2 - \frac{2(3r-1)n - (n-k)^2 - 4rk - 2k^2+2k}4\\
&= \frac{(3r-1)n}2 - \frac{2(3r-1)(n-k) - (n-k)^2 +2(3r-1)k - 4rk - 2k^2+2k}4\\
&= \frac{(3r-1)n}2 - \frac{2(3r-1)(n-k) - (n-k)^2 +2k(r-k)}4\\
&\le \frac{(3r-1)n}2 - \frac{2(3r-1)(\frac52r-1) - (\frac52r-1)^2}4\\
&\le \frac{(3r-1)n}2 - \frac{(7r-2)(5r-2)}{16}.
\end{align*}
If $n<\frac72r - 1$,
\begin{align*}
&\frac{(3r-1)n}2 - \frac{(7r-2)(5r-2)}{16}-\frac{n^2}4\\
&\ge \frac{(6r-2-n)n}4 - \frac{(7r-2)(5r-2)}{16}\\
&> \frac{(6r-2-\frac72r+1)(\frac72r-1)}4 - \frac{(7r-2)(5r-2)}{16}\\
&= \frac{(5r-2)(7r-2)}{16} - \frac{(7r-2)(5r-2)}{16}=0.
\end{align*}
\end{proof}
\fi

\section{Interval graphs and circle graphs}\label{sec:special}

\subsection{Definitions of graph classes}
\begin{definition}[Interval graph and circle graph]
Let $V= \{v_1,v_2,\dotsc,v_n\}$.
A \emph{double occurrence word} over the alphabet $V$ is a word in which every letter in $V$ appears exactly twice.
The \emph{interval graph} $G=(V,\,E)$ for a double occurrence word $m$ is defined as
\begin{align*}
E &= \left\{\{a, b\}\in\binom{V}{2}\,\middle|\, \text{$a$ and $b$ appear in $m$ in the order $abba$ or $abab$}\right\}.
\end{align*}
The \emph{circle graph} $G=(V,\,E)$ for a double occurrence word $m$ is defined as
\begin{align*}
E &= \left\{\{a, b\}\in\binom{V}{2}\,\middle|\, \text{$a$ and $b$ appear in $m$ in the order $abab$}\right\}.
\end{align*}
\end{definition}

It is known that the classes of interval graphs and circle graphs have unbounded rank-width~\cite{doi:10.1142/S0129054100000260,OUM2006514}.

A circle graph defined by a double occurrence word $m$ is an intersection graph of a \emph{chord diagram} determined by $m$.
Figure~\ref{fig:circle} shows an example of a double occurrence word and a corresponding chord diagram.
A circle graph for $m$ has an edge $\{v,w\}$ if and only if the chords for $v$ and $w$ intersect.
Note that the class of circle graphs is closed under vertex-minors.
The class of circle graphs, in the context of vertex-minors and rank-width, is analogous to the class of planar graphs in the context of graph-minors and branch-width.

\begin{theorem}[The grid minor theorem for vertex-minors~\cite{GEELEN202393}]
For any circle graph $H$, there exists an integer $r$ such that every graph with rank-width at least $r$ contains a vertex-minor isomorphic to $H$.
\end{theorem}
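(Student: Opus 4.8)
The statement is best read as the hard (``only if'') half of a tight dichotomy: a graph $H$ has the property that excluding $H$ as a vertex-minor bounds rank-width \emph{exactly} when $H$ is a circle graph, since the class of circle graphs is vertex-minor-closed and (by~\cite{doi:10.1142/S0129054100000260}) has unbounded rank-width, so a non-circle $H$ cannot work. With that in mind, the plan is two-staged. First I would reduce the arbitrary circle graph $H$ to a single universal family, the \emph{$n$-comparability grids} $\mathcal{G}_n$ (the comparability graph of the product order $[n]\times[n]$), which are themselves circle graphs. The reduction rests on the purely combinatorial lemma that every circle graph on $m$ vertices is a vertex-minor of $\mathcal{G}_{g(m)}$ for some function $g$: this is a statement about double occurrence words, proved by embedding the chord diagram of $H$ into that of a sufficiently fine comparability grid and then recovering $H$ exactly by deleting the unused chords (an induced-subgraph operation) and applying a bounded number of local complementations to normalize the adjacencies. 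After this it suffices to produce, for every $n$, a finite $r(n)$ such that $\rw(G)\ge r(n)$ forces $\mathcal{G}_n$ as a vertex-minor.

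For the second stage I would invoke the connectivity machinery of the cut-rank function. The function $\cutrank_G$ is symmetric and submodular, and by Lemma~\ref{lem:cutrank} it is invariant under local complementations, so rank-width is precisely the branch-width of $\cutrank_G$; hence $\rw(G)\ge r$ is equivalent to the existence of a tangle of large order in $\cutrank_G$. From a high-order tangle one extracts a large \emph{well-linked} set $A\subseteq V(G)$ — every balanced bipartition separating $A$ has large cut-rank — and then, exploiting that circle graphs and bipartite graphs both form vertex-minor-closed classes, one aims to pass to a structured, highly connected vertex-minor in which the desired grid can be routed.

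The model case is the bipartite one: for bipartite graphs, pivot-minors correspond bijectively to minors of binary matroids via fundamental graphs, and the Geelen--Gerards--Whittle grid theorem for binary matroids then directly supplies large grid-like minors. To obtain the full vertex-minor statement one must leave the bipartite world, and the natural setting is Bouchet's isotropic systems (equivalently, an $\mathbb{F}_4$-linear picture), where a graph modulo local complementation is an isotropic system and vertex-minors become ``minors'' whose combinatorial shadow is an (even) $\Delta$-matroid. The target then becomes a grid theorem for these $\mathbb{F}_4$-linear objects, proved by adapting the matroid argument: from a high-order tangle, run a Menger/Tutte-style linkage argument to route many pairwise-disjoint ``rows'' and ``columns'' of the intended $\mathcal{G}_n$, apply Ramsey to make the interaction pattern between a row and a column uniform, and finally use local complementations to force each crossing into the canonical comparability-grid adjacency while deleting everything else.

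The main obstacle, I expect, is precisely this last passage from ``large and highly connected'' to ``rigidly $\mathcal{G}_n$-shaped,'' together with the jump from pivot-minors to vertex-minors. Getting \emph{some} large highly connected vertex-minor out of large rank-width is comparatively soft (tangles plus Ramsey), but extracting the \emph{exact} comparability-grid pattern is delicate because local complementation and vertex deletion act globally — one cannot repair one crossing without disturbing others — and, unlike the graph-minor setting, there is no Robertson--Seymour well-quasi-ordering of vertex-minors to fall back on, so every step must be constructive and extremal. A secondary but pervasive difficulty is quantitative control: one needs a \emph{finite} bound $r(n)$ at every stage (tangle order, linkage width, Ramsey, the $g(m)$ in the grid-universality lemma), not merely ``unbounded rank-width,'' which forces explicit parameter bookkeeping throughout.
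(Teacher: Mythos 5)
First, a point of context: the paper does not prove this theorem at all. It is quoted as a known deep result from \cite{GEELEN202393} (Geelen, Kwon, McCarty, and Wollan) solely to motivate why circle graphs play the role of planar graphs in vertex-minor theory, so there is no in-paper argument to compare yours against; the only fair benchmark is the published proof, which is a long and technically heavy paper in its own right.

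Measured against that, your proposal is a well-informed reconstruction of the actual architecture---reduction to comparability grids as a universal family of circle graphs, tangles for the symmetric submodular cut-rank function (using its invariance under local complementation, as in Lemma~\ref{lem:cutrank}), the bipartite case via fundamental graphs of binary matroids and the Geelen--Gerards--Whittle grid theorem, and isotropic systems to escape the bipartite world---but it is a roadmap, not a proof. Two steps carry essentially all of the content and are left as stated intentions. (1) The universality lemma, that every circle graph is a vertex-minor of a sufficiently large comparability grid, is itself a nontrivial theorem; ``embed the chord diagram and normalize with a bounded number of local complementations'' does not obviously work, because each local complementation perturbs all other crossings---exactly the global-interference problem you yourself flag later. (2) The passage from a high-order tangle to an actual comparability-grid vertex-minor, i.e.\ ``adapting the matroid argument'' to the $\mathbb{F}_4$/isotropic-system setting, \emph{is} the theorem; naming the obstacle is not the same as surmounting it, and no mechanism is given for routing and then rigidifying the grid pattern under operations that act globally. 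Finally, your opening observation (that the admissible $H$ are exactly the circle graphs, by vertex-minor-closure and unboundedness of rank-width within that class) is correct but establishes only the easy converse direction, not the statement being proved.
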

Hence, the class of circle graphs plays a key role within vertex-minor theory~\cite{KIM202454}.

\subsection{Interval graphs}
\begin{algorithm}[t!]
\caption{Interval graph preparation.}\label{alg:intervalgraph}
\Input{A double occurrence word $m$ over $\{v_1,v_2,\dotsc,v_n\}$.}
\Output{An interval graph defined by $m$ generated by the operations in Definition~\ref{def:compG}.}
\Procedure{\emph{\Call{GenIntervalGraph}{$m$}}}{
    Prepare a vertex $z$\;
    Prepare vertices $\{u_i\}_{1,\dotsc,n}$\;
    \For{$t=1,\dotsc, 2n-2$}{
        \If{the $t$-th element of $m$ is the first appearance in $m$}{
            Local complementation on $z$\;
            Edge complementation between $u_{p_1^{-1}(t)}$ and $z$ (elementary edge-complementation)\;
            Local complementation on $z$\;
        }
        \Else{
            Edge complementation between $u_{p_2^{-1}(t)}$ and $z$ (elementary edge-complementation)\;
        }
    }
    Delete $z$\;
    \Return{$\{u_1,\dotsc, u_n\}$}\;
}
\end{algorithm}

\if0
\begin{definition}[Interval graph]
$G=(V,E)$ is an \emph{interval graph} if there exists an interval $S_v=[l_v,r_v]$ on $\mathbb{R}$ for each $v\in V$ such that
\begin{align*}
E &= \left\{\{v, w\}\in\binom{V}{2}\,\middle|\, S_v\cap S_w \ne \varnothing\right\}.
\end{align*}
\end{definition}
For a simple representation of a interval graph, we can use a permutation $p$ of multiset $\{1,1,2,2,3,3,\dotsc,n,n\}$.
Then, we can define the interval graph defined by $p$ as $V=\{1,2,\dotsc,n\}$ and
\begin{align*}
E &= \left\{\{a, b\}\in\binom{V}{2}\,\middle|\, \text{$a$ and $b$ appear in $p$ in the order $abba$, $baab$, $abab$ or $baba$}\right\}.
\end{align*}

It is known that the class of interval graphs with the restriction $r_v-l_v=1$ for all $v\in V$ has unbounded rank-width~\cite{doi:10.1142/S0129054100000260}.

For the permutation $p$ and $a\in V$, let $p_1(a)$ and $p_2(a)$ denote the positions of the first and second appearance of $a$ in $p$.
For example, $V = \{1,2,3\}$, and $p$ that determines an order $3,2,2,1,3,1$,
\begin{align*}
p_1(1) &= 4,&
p_2(1) &= 6,&
p_1(2) &= 2,&
p_2(2) &= 3,&
p_1(3) &= 1,&
p_2(3) &= 5.
\end{align*}
\fi

For a double occurrence word $m$ and $a\in \{1,\dotsc,n\}$, let $p_1(a)$ and $p_2(a)$ denote the positions of the first and second appearances of $v_a$ in $m$.
For example, $V = \{v_1,v_2,v_3\}$, and $m=v_3 v_2 v_2 v_1 v_3 v_1$,
\begin{align*}
p_1(1) &= 4,&
p_2(1) &= 6,&
p_1(2) &= 2,&
p_2(2) &= 3,&
p_1(3) &= 1,&
p_2(3) &= 5.
\end{align*}

\begin{theorem}
For any interval graph $G$ with $n$ vertices, Algorithm~\ref{alg:intervalgraph} generates $G$ with EC-complexity $2n-2$.
\end{theorem}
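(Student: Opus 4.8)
The plan is to follow the graph that the algorithm maintains on the vertex set $\{z\}\cup\{u_1,\dots,u_n\}$ and to show that, at the moment $z$ is deleted, the induced subgraph on $\{u_1,\dots,u_n\}$ is already equal to $G$ under the identification $u_a\leftrightarrow v_a$. Reading ``edge complementation between $u_a$ and $z$'' as the cost-one operation that complements the single edge $\{u_a,z\}$, each iteration of the loop performs exactly one cost-one operation (the two local complementations on $z$ are free), so the EC-complexity of the algorithm is exactly $2n-2$; it remains to prove correctness.

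The first step is to isolate the effect of one iteration on the $u$-vertices. Suppose that at the start of an iteration the set $K:=N_{\cdot}(z)\cap\{u_1,\dots,u_n\}$ is known. If position $t$ is a first appearance of $c$, then $u_c\notin K$ and $u_c$ currently has no $u$-neighbour, and the composite $\tau_z\circ(\text{toggle }\{u_c,z\})\circ\tau_z$ complements all edges inside $K$ once before and once after $u_c$ is inserted into $N(z)$, so these cancel, while every edge $\{u_c,u_b\}$ with $u_b\in K$ is complemented exactly once; hence the net effect on the $u$-subgraph is to make $u_c$ adjacent to exactly $K$, and $u_c$ is added to $N(z)$. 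If position $t$ is a second appearance of $c$, the iteration merely toggles $\{u_c,z\}$, leaving the $u$-subgraph unchanged and removing $u_c$ from $N(z)$.

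With this in hand I would prove, by induction on $t\in\{0,1,\dots,2n-2\}$, the invariant that after processing position $t$: (i) $N(z)\cap\{u_1,\dots,u_n\}=\{u_b : p_1(b)\le t<p_2(b)\}$, the vertices of intervals open at time $t$; and (ii) for $a\ne b$, $u_a\sim u_b$ if and only if $\{v_a,v_b\}\in E(G)$ and $\max(p_1(a),p_1(b))\le t$. The base case $t=0$ is the empty graph. For a first-appearance step of $c$ at position $t$, (i) gives $K=\{u_b : p_1(b)\le t-1<p_2(b)\}$, and by the previous paragraph $u_c$ becomes adjacent to exactly those $u_b$ with $p_1(b)<p_1(c)=t$ and $p_2(b)>t$; since $c$ opens after all such $b$, this is precisely the set of $b$ whose interval overlaps $c$'s, i.e.\ $\{v_b,v_c\}\in E(G)$, which establishes (ii) for pairs containing $c$. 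No other $u$-edge changes, the threshold condition gains no further pair because $c$ is the only interval with first appearance at $t$, and $u_c$ enters $N(z)$ with $p_1(c)=t<p_2(c)$, so (i) holds as well. A second-appearance step of $c$ changes no $u$-edge and removes $u_c$ (for which $p_2(c)=t$) from $N(z)$; since the processed position is the first appearance of no interval, both (i) and (ii) are preserved.

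Finally I would specialise to $t=2n-2$. Position $2n$ of a double-occurrence word is never a first appearance, and if position $2n-1$ were the first appearance of some $c$ then $c$'s second appearance would be forced to position $2n$, making $v_c$ disjoint from every other interval and hence isolated in $G$; consequently every edge $\{v_a,v_b\}\in E(G)$ has $\max(p_1(a),p_1(b))\le 2n-2$. By invariant (ii), $u_a\sim u_b$ if and only if $\{v_a,v_b\}\in E(G)$, so deleting $z$ produces exactly $G$ on $\{u_1,\dots,u_n\}$, and since the algorithm used $2n-2$ cost-one operations the proof is complete. The only part that is not routine bookkeeping is the cancellation of the two local complementations on $z$ in a first-appearance step, together with the observation that the last two positions of $m$ may be skipped; I expect that to be the main point to get right.
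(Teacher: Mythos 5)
Your proposal is correct and follows essentially the same route as the paper: the paper simply asserts that correctness is ``obvious'' if the loop ran to $t=2n$ and then gives exactly your argument for why positions $2n-1$ and $2n$ may be skipped (position $2n$ is never a first appearance; a first appearance at $2n-1$ forces an isolated vertex). The only difference is that you supply the inductive invariant — $N(z)$ tracks the currently open intervals and the two local complementations on $z$ cancel except on pairs involving the newly inserted $u_c$ — which the paper leaves implicit.
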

\begin{proof}
The EC-complexity of Algorithm~\ref{alg:intervalgraph} is obviously $2n-2$.
The correctness of Algorithm~\ref{alg:intervalgraph} is obvious if $t$ runs from $1$ to $2n$.
We can omit the case $t=2n$ since the $(2n)$-th element of $m$ is the second appearance in $m$.
We can also omit the case $t=2n-1$ since if the $(2n-1)$-th element of $m$ is the second appearance in $m$,
we can simply ignore this, and if the $(2n-1)$-th element of $m$ is the first appearance, then the $(2n-1)$-th and $(2n)$-th elements are equal and represent an isolated vertex in $G$.
\end{proof}
Although Algorithm~\ref{alg:intervalgraph} uses the single working vertex $z$, we can use $u_a$ in place of $z$ where $a=\arg\max p_1(a)$ without any overhead.
Hence, the EC-complexity $2n-2$ is achievable without working vertices.

\subsection{Circle graphs}
\begin{figure}[t]
\centering
\begin{tikzpicture}[scale=0.20]
\def\Radius{11}
\draw[very thick] (0, 0) circle (\Radius);
\node[draw,circle,fill, inner sep=2pt, label={[label distance=2]  0:{\LARGE $a$}}] (a0) at (  0:\Radius) {};
\node[draw,circle,fill, inner sep=2pt, label={[label distance=2] 30:{\LARGE $b$}}] (b0) at ( 30:\Radius) {};
\node[draw,circle,fill, inner sep=2pt, label={[label distance=2] 60:{\LARGE $c$}}] (c0) at ( 60:\Radius) {};
\node[draw,circle,fill, inner sep=2pt, label={[label distance=2] 90:{\LARGE $a$}}] (a1) at ( 90:\Radius) {};
\node[draw,circle,fill, inner sep=2pt, label={[label distance=2]120:{\LARGE $d$}}] (d0) at (120:\Radius) {};
\node[draw,circle,fill, inner sep=2pt, label={[label distance=2]150:{\LARGE $e$}}] (e0) at (150:\Radius) {};
\node[draw,circle,fill, inner sep=2pt, label={[label distance=2]180:{\LARGE $f$}}] (f0) at (180:\Radius) {};
\node[draw,circle,fill, inner sep=2pt, label={[label distance=2]210:{\LARGE $d$}}] (d1) at (210:\Radius) {};
\node[draw,circle,fill, inner sep=2pt, label={[label distance=2]240:{\LARGE $b$}}] (b1) at (240:\Radius) {};
\node[draw,circle,fill, inner sep=2pt, label={[label distance=2]270:{\LARGE $e$}}] (e1) at (270:\Radius) {};
\node[draw,circle,fill, inner sep=2pt, label={[label distance=2]300:{\LARGE $c$}}] (c1) at (300:\Radius) {};
\node[draw,circle,fill, inner sep=2pt, label={[label distance=2]330:{\LARGE $f$}}] (f1) at (330:\Radius) {};
\draw[-,very thick] (a0) -- (a1);
\draw[-,very thick] (b0) -- (b1);
\draw[-,very thick] (c0) -- (c1);
\draw[-,very thick] (d0) -- (d1);
\draw[-,very thick] (e0) -- (e1);
\draw[-,very thick] (f0) -- (f1);
\end{tikzpicture}
\hfill
\begin{tikzpicture}[scale=0.20]
\node[draw,circle,minimum size=25,very thick] (a) at (10,20) {\Large $a$};
\node[draw,circle,minimum size=25,very thick] (b) at (10,10) {\Large $b$};
\node[draw,circle,minimum size=25,very thick] (c) at ( 0,20) {\Large $c$};
\node[draw,circle,minimum size=25,very thick] (d) at ( 0, 0) {\Large $d$};
\node[draw,circle,minimum size=25,very thick] (e) at (10, 0) {\Large $e$};
\node[draw,circle,minimum size=25,very thick] (f) at ( 0,10) {\Large $f$};
\draw[-,very thick] (a) -- (c) -- (f) -- (d) -- (e) -- (b) -- (a);
\draw[-,very thick] (c) -- (b) -- (f) -- (e);
\end{tikzpicture}
\hfill
\begin{tikzpicture}[scale=0.20]
\node[draw,circle,minimum size=25,very thick] (a) at (30:4) {\Large $a$};
\node[draw,circle,minimum size=25,very thick] (b) at (150:4) {\Large $b$};
\node[draw,circle,minimum size=25,very thick] (c) at (270:4) {\Large $c$};
\node[draw,circle,minimum size=25,very thick] (d) at (90:14) {\Large $d$};
\node[draw,circle,minimum size=25,very thick] (e) at (210:14) {\Large $e$};
\node[draw,circle,minimum size=25,very thick] (f) at (330:14) {\Large $f$};

\draw[-,very thick] (a) -- (b) -- (c) -- (a);
\draw[-,very thick] (d) -- (e) -- (f) -- (d);
\draw[-,very thick] (d) -- (b) -- (e) -- (c) -- (f) -- (a) -- (d);
\end{tikzpicture}
\caption{Left: The chord diagram of a double occurrence word \texttt{acbafcebdfed}. Middle: The circle graph. Right: The tour graph.}
\label{fig:circle}
\end{figure}

Recently, Davies and Jena proved that for any circle graph $G$ with $n$ vertices and the chromatic number $k$, $\NCZ(\ket{G}) \le 2n\lceil \log_2 k\rceil$~\cite{davies2025preparing, jena2024graph}.
Their algorithm is a simple divide-and-conquer algorithm.
In this section, we show an alternative algorithm based on the algorithmic framework in Algorithm~\ref{alg:gengen}.

For a double occurrence word $m$ over $V$, we define a \emph{tour graph} that is a connected
(not necessarily simple) 4-regular graph that has the Eulerian cycle visiting the vertices in the order in $m$.
Figure~\ref{fig:circle} shows an example of a chord diagram, a circle graph and a tour graph corresponding to a double occurrence word.
A local complementation on a vertex $v$ in a circle graph $G$ corresponds to a reversion of the subsequence in $m$ between two occurrences of $v$.
The corresponding transformation of an Eulerian cycle in a tour graph is called the \emph{$\kappa$-transformation}~\cite{MR0248043}.
Kotzig proved that the $\kappa$-transformation is strong enough to transform an arbitrary Eulerian cycle of a connected 4-regular graph into another arbitrary Eulerian cycle.
\begin{lemma}[\cite{MR0248043}]\label{lem:kotzig}
Let $G$ be a connected 4-regular graph.
Let $m$ and $m'$ be arbitrary Eulerian cycles of $G$.
Then, $m$ can be transformed to $m'$ by a sequence of $\kappa$-transformations.
\end{lemma}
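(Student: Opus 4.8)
The plan is to prove this classical fact of Kotzig \cite{MR0248043} by working with \emph{transition systems} rather than double occurrence words (the word model is ambiguous for multigraphs, whereas transition systems are not). An Eulerian cycle of a connected $4$-regular multigraph $G$ is the same data as a choice, at each vertex $v$, of one of the three perfect matchings on the four half-edges at $v$ --- telling which pair of half-edges each of the two passages of the cycle through $v$ uses --- subject to the condition that the resulting collection of closed trails is a single trail covering $E(G)$. The $\kappa$-transformation at $v$ reverses the portion of the cycle between the two passages through $v$; this changes only the transition at $v$ and visibly produces another Eulerian cycle of the \emph{same} $G$, which is the easy direction and also the exchange move we use. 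A short local computation (splitting the cycle at $v$ into its two $v$-to-$v$ arcs and re-gluing) shows that, with the transitions at all other vertices frozen, exactly two of the three matchings at $v$ yield a single Eulerian cycle, and $\kappa_v$ interchanges these two, the third one always splitting the cycle into at least two trails.

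I would then argue by induction on $n=|V(G)|$, the cases $n\le 2$ being a direct check. For the step, let $m,m'$ be Eulerian cycles of $G$. If $m$ and $m'$ induce the same transition at some vertex $v$, smooth $v$ along that common transition: split $v$ into two degree-two vertices according to the chosen pairs and suppress them, obtaining a connected $4$-regular multigraph $G'$ on $n-1$ vertices in which $m$ and $m'$ still induce Eulerian cycles; by induction these are $\kappa$-equivalent in $G'$, and every $\kappa_w$-move in $G'$ lifts verbatim to $G$ (the frozen transition at $v$ is merely carried along), so $m$ and $m'$ are $\kappa$-equivalent in $G$. Thus it remains to handle the case where $m$ and $m'$ disagree at every vertex and to show that some single $\kappa$-move on $m$ restores agreement at a vertex. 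Here I would write $m,m'$ as cyclic edge-sequences, rotate (and, if needed, globally reverse, a harmless symmetry) so they begin with the same edge, take $j$ to be the first position of disagreement and $x$ the common vertex reached just before it; by following where $m$ later traverses the edge $m'$ uses at step $j$, one locates a vertex $u$ whose $u$-to-$u$ segment in $m$ is exactly the closed sub-walk whose reversal installs that edge at step $j$. Then $\kappa_u(m)$ agrees with $m'$ on a strictly longer prefix, and iterating eventually produces a common transition, reducing to the previous case.

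The main obstacle is exactly this last reduction: an arbitrary closed sub-walk of $m$ is \emph{not} a legal $\kappa$-move --- only the reversal of the maximal $v$-to-$v$ segment for some $v$ is --- so the segment to be reversed must be forced into that shape while still strictly increasing the common prefix with $m'$, and the bookkeeping breaks into subcases according to whether the second passage of $x$ lies inside or outside the candidate segment. A clean alternative that avoids this is to invoke Bouchet's theory of isotropic systems \cite{BOUCHET1987231}, in which the statement becomes the connectedness (base-exchange) property of the set of Eulerian transition systems. Since Lemma~\ref{lem:kotzig} is used only as a cited black box, reproducing the full argument is unnecessary, but the induction-plus-smoothing route sketched above is the one I would carry out.
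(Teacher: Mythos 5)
The paper does not prove Lemma~\ref{lem:kotzig} at all: it is imported as a black box from Kotzig \cite{MR0248043}, so there is no ``paper proof'' to match your argument against. Judged on its own terms, your write-up gets the easy parts right --- the transition-system formulation, the local computation showing that with all other transitions frozen exactly two of the three pairings at $v$ yield a single Eulerian circuit and that $\kappa_v$ swaps them, and the smoothing/induction step when $m$ and $m'$ share a transition at some vertex (reversal preserves unordered consecutive pairs, so $\kappa$-moves in the smoothed graph do lift). But the proposal is not a proof, because the one step that carries all the content of Kotzig's theorem is left open, as you yourself acknowledge.

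Concretely: in the case where $m$ and $m'$ disagree at every vertex, you need a \emph{legal} $\kappa$-move (reversal of the maximal $u$-to-$u$ arc for some $u$) that creates agreement at some vertex, and your prefix-extension argument does not deliver this. If $x$ is the vertex reached just before the first disagreement and $e$ is the edge $m'$ uses there, then $e$ is incident to $x$ and $\kappa_x$ installs at that position the edge by which $m$ \emph{enters} $x$ at its second passage; this equals $e$ only in one of the subcases (if $e$ is instead the edge by which $m$ \emph{leaves} $x$ at its second passage, $\kappa_x$ installs the wrong edge, and no single move at $x$ need work). The assertion that ``$\kappa_u(m)$ agrees with $m'$ on a strictly longer prefix'' is therefore unjustified, and no termination argument is given for the iteration; moreover a $\kappa$-move at $u$ changes which pair of transitions at \emph{other} vertices is circuit-compatible, so ``agreement at $v$'' is not a monotone quantity under your moves. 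The fallback via isotropic systems \cite{BOUCHET1987231} does not rescue this: the connectedness of Eulerian transition systems there is itself established by a Kotzig-type exchange argument, so invoking it here is essentially circular. Since the lemma is used only as a citation, the clean fix is to do the same --- cite \cite{MR0248043} --- rather than present an argument whose decisive step is missing.
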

Lemma~\ref{lem:kotzig} implies that the tour graph for $G$ without a specific choice of Eulerian cycle represents the class of circle graphs local-complement equivalent to $G$.

\begin{theorem}\label{thm:circle}
For any circle graph $G$ with $n$ vertices, $\NCO(G)\le 2\left\lfloor\log_3 (n+1)\right\rfloor(n-1)\le 1.262 \cdot (n-1)\log_2 (n+1)$.
There exists an algorithm within the algorithmic framework in Algorithm~\ref{alg:gengen} satisfying the upper bound.
\end{theorem}
\begin{proof}
We will show that for any circle graph $G$ with $n$ vertices, there exists a circle graph $G'$ that is local-complement equivalent to $G$ and includes a vertex $v$ of degree at most $2\left\lfloor\log_3 (n+1)\right\rfloor$.
This implies $\NCO(G)\le 2\left\lfloor\log_3(n+1)\right\rfloor(n-1)$.
The algorithm falls within the algorithmic framework in Algorithm~\ref{alg:gengen} since $S=\{v\}\cup N_{G'}(v)$ has a cutrank at most $|S|-1$.

Let $G=(V,\,E)$ be a circle graph and $T=(V,\,F)$ be a tour graph of $G$.
Since $T$ is a 4-regular graph, $T$ has a cycle.
Let $C$ be a smallest cycle in $T$ and $v\in V(C)$ be an arbitrary vertex in $C$.
Here, a self-loop and double edges are regarded as cycles of length one and two, respectively.
Let $T'$ be the connected component in $H = (V, F-E(C))$ including $v$.
Let $m'$ be an arbitrary Eulerian cycle on $T'$ starting and ending at $v$.
There exists an Eulerian cycle on $T''=(V\setminus (V(T')\setminus V(C)),\,F-E(T'))$ since $T''$ is connected and all vertices have even degrees.
Here, $T''$ is connected since every connected component of $H$ contains at least one vertex of $C$.
Let $m''$ be an arbitrary Eulerian cycle on $T''$ starting and ending at $v$.
Let $m$ be the Eulerian cycle on $T$ obtained by the concatenation of $m'$ and $m''$.
Let $G'$ be the circle graph corresponding to the Eulerian cycle $m$ on $T$.
From Lemma~\ref{lem:kotzig}, $G'$ is local-complement equivalent to $G$.
In the Eulerian cycle for $G'$, a vertex $w\in V\setminus\{v\}$ alternates with $v$ if and only if $w$ is included in both $m'$ and $m''$.
This implies that the neighborhood of $v$ on $G'$ is a subset of $V(C)\setminus\{v\}$.
Hence, the degree of $v$ is at most $|V(C)|-1$ on $G'$.

From the Moore bound~\cite{Biggs_1974}, any 4-regular graph with $n$ vertices and girth at least $g$ must satisfy
\begin{align*}
n\ge \begin{cases}
 2\cdot 3^{\frac{g-1}2}-1&
\text{if $g$ is odd}\\
3^{\frac{g}2}-1&
\text{if $g$ is even.}
\end{cases}
\end{align*}
Hence, any graph with $n$ vertices and girth $g$ satisfies $g\le 2\left\lfloor\log_3 (n+1)\right\rfloor+1$.
This implies that the degree of $v$ is at most $2\left\lfloor\log_3(n+1)\right\rfloor$ on $G'$.
Hence, the EC-complexity of $G$ is at most $2\left\lfloor\log_3(n+1)\right\rfloor (n-1)$.
\end{proof}

Davies and Jena proved that for any circle graph $G$ that is local-complement equivalent to a bipartite graph, the EC-complexity of $G$ is at most $2n$ using a single auxiliary vertex and $3(n-1)$ without using an auxiliary vertex~\cite{davies2025preparing, jena2024graph}.
We will prove that the algorithmic framework in Algorithm~\ref{alg:gengen} gives the EC-complexity $2n$ without using an auxiliary vertex.
Every connected 4-regular planar graph has an A-trail, which is an Eulerian cycle without straight-ahead transitions~\cite{MR0248043}.
Furthermore, de Fraysseix and Ossona de Mendez proved that an A-trail always represents a bipartite circle graph, and vice versa~\cite{de_fraysseix_characterization_1999}.
\begin{lemma}[\cite{MR0248043,de_fraysseix_characterization_1999}]\label{lem:planar}
A circle graph $G$ is local-complement equivalent to a bipartite graph if and only if $G$ has a planar tour graph.
\end{lemma}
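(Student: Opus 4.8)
The plan is to obtain Lemma~\ref{lem:planar} by combining the two results quoted immediately above it — Kotzig's existence of an A-trail in a connected $4$-regular planar graph, and de Fraysseix and Ossona de Mendez's characterization of bipartite circle graphs via A-trails — with Lemma~\ref{lem:kotzig}. The conceptual bridge is that, by Lemma~\ref{lem:kotzig}, the circle graphs local-complement equivalent to $G$ are exactly the circle graphs arising from the various Eulerian cycles of one fixed tour graph $T$ of $G$ (a local complementation at $v$ being realized by the $\kappa$-transformation that reverses the arc between the two occurrences of $v$, which leaves the underlying $4$-regular graph unchanged). In particular the isomorphism type of the tour graph is an invariant of the local-complement equivalence class, and the class of circle graphs is closed under local complementation; this is what lets us apply the two cited theorems to whichever Eulerian cycle of $T$ is convenient.

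For the direction $(\Leftarrow)$: assume $G$ has a planar tour graph $T$. Since $T$ is a connected $4$-regular planar (multi)graph, Kotzig's theorem yields an A-trail $m$ of $T$, i.e., an Eulerian cycle with no straight-ahead transition. By the theorem of de Fraysseix and Ossona de Mendez, the circle graph $H$ determined by $m$ is bipartite, and by Lemma~\ref{lem:kotzig} $H$ is local-complement equivalent to $G$; hence $G$ is local-complement equivalent to a bipartite graph.

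For the direction $(\Rightarrow)$: assume $G$ is local-complement equivalent to a bipartite graph $H$. As circle graphs are closed under local complementation, $H$ is itself a bipartite circle graph, and by Lemma~\ref{lem:kotzig} it is realized by an Eulerian cycle $m$ of a tour graph $T$ that is simultaneously a tour graph of $G$. By the converse part of the de Fraysseix--Ossona de Mendez characterization, $m$ is an A-trail of $T$; since an A-trail is defined relative to a planar embedding, $T$ is planar, so $G$ has a planar tour graph.

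The step needing the most care is the bookkeeping of the first paragraph: one must make precise what "a tour graph of $G$" means when $G$ is only pinned down up to local-complement equivalence (equivalently, up to the choice of double occurrence word), and verify that "having a planar tour graph" is genuinely a property of the equivalence class. Concretely I would spell out that every local complementation corresponds to a $\kappa$-transformation preserving the underlying $4$-regular graph, while Lemma~\ref{lem:kotzig} shows every Eulerian cycle of that graph is reachable, so the two cited theorems may be applied to any convenient Eulerian cycle. A couple of minor checks remain: the tour graph is always connected, being the image of a single closed walk, so Kotzig's A-trail theorem applies; and self-loops and double edges (cycles of length one and two in $T$) are admissible, as already granted in the definition of tour graph.
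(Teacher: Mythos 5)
Your proposal is correct and takes essentially the same route the paper intends: the paper does not prove this lemma itself but cites it, and the two sentences immediately preceding it sketch exactly the assembly you carry out (Kotzig's A-trail existence for connected $4$-regular planar graphs, the de Fraysseix--Ossona de Mendez A-trail/bipartite characterization, and Lemma~\ref{lem:kotzig} to identify the local-complement equivalence class of $G$ with the set of Eulerian cycles of a single tour graph). Your closing remarks on the invariance of the tour graph under $\kappa$-transformations and on connectivity are precisely the right bookkeeping.
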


The girth of a 4-regular planar graph is upper bounded by three, which is much better than the Moore bound for general 4-regular graphs $O(\log n)$.

\begin{lemma}\label{lem:bcircle}
For any circle graph $G$ with $n$ vertices that is local-complement equivalent to a bipartite graph, $\NCO(G)\le 2(n-1)$.
There exists an algorithm within the algorithmic framework in Algorithm~\ref{alg:gengen} satisfying the upper bound.
\end{lemma}
\begin{proof}
The proof is almost the same as the proof of Theorem~\ref{thm:circle}.
Let $G$ be a circle graph with $n$ vertices that is local-complement equivalent to a bipartite graph.
From Lemma~\ref{lem:planar}, $G$ has a planar tour graph $T$.
Let $C$ be a smallest cycle of $T$.

We will show that the size of $C$ is at most three.
From Euler's formula, a planar graph with $v$ vertices, $e$ edges and $f$ faces satisfies
\begin{align*}
v-e+f=2.
\end{align*}
For 4-regular graph, $4v = 2e$.
Let $g$ be the girth of the planar graph. Then, $fg\le 2e$.
It implies
\begin{align*}
g&\le\frac{2e}{f}
=\frac{4v}{2+e-v}
=\frac{4v}{2+v}< 4.
\end{align*}
Hence, the girth of a 4-regular planar graph is at most three.

This implies that there exists a graph $G'$ that is local-complement equivalent to $G$ and includes a vertex of degree at most two.
When we remove $v$ in $T$, we can choose one of the two transitions of the Eulerian cycle at $v$ that correspond to $G'$ and $\tau_v(G')$.
The tour graphs of $G'-v$ and $\tau_v(G')-v$ are described in Figure~\ref{fig:tourp}.
Since the left resulting graph is planar, at least one of $G'-v$ and $\tau_v(G')-v$ has a planar tour graph.
It implies $\NCO(G)\le 2(n-1)$.
\end{proof}

\begin{figure}[t]
  \centering
  \begin{tikzpicture}[inner sep = 0, minimum size = 17pt, very thick, scale=1]
    \node[circle,draw] (v) at (4,0) {$v$};
    \node[circle,draw] (u) at (5,1) {$u$};
    \node[circle,draw] (w) at (5,-1) {$w$};
    \draw[->] (3,1) -> (v);
    \draw[->] (v) -> (3,-1);
    \draw[->] (v) -> (u);
    \draw[->] (u) -> (w);
    \draw[->] (w) -> (v);
    \draw (u) -- ++(45:0.6);
    \draw (u) -- ++(75:0.6);
    \draw (w) -- ++(-45:0.6);
    \draw (w) -- ++(-75:0.6);
    \draw[->] (5.8,0) -> (6.6,0);
    \node[circle,draw] (uu) at (9,1) {$u$};
    \node[circle,draw] (ww) at (9,-1) {$w$};

    \draw[->] (7,1) .. controls (8,-0.15) .. (uu);
    \draw[->] (ww) .. controls (8,0.15) .. (7,-1);
    \draw[->] (uu) -> (ww);

    \draw (uu) -- ++(45:0.6);
    \draw (uu) -- ++(75:0.6);
    \draw (ww) -- ++(-45:0.6);
    \draw (ww) -- ++(-75:0.6);
  \end{tikzpicture}
  \hfill
  \begin{tikzpicture}[inner sep = 0, minimum size = 17pt, very thick, scale=1]
    \node[circle,draw] (v) at (4,0) {$v$};
    \node[circle,draw] (u) at (5,1) {$u$};
    \node[circle,draw] (w) at (5,-1) {$w$};
    \draw[->] (3,1) -> (v);
    \draw[->] (v) -> (3,-1);
    \draw[->] (v) -> (w);
    \draw[->] (w) -> (u);
    \draw[->] (u) -> (v);
    \draw (u) -- ++(45:0.6);
    \draw (u) -- ++(75:0.6);
    \draw (w) -- ++(-45:0.6);
    \draw (w) -- ++(-75:0.6);
    \draw[->] (5.8,0) -> (6.6,0);
    \node[circle,draw] (uu) at (9,1) {$u$};
    \node[circle,draw] (ww) at (9,-1) {$w$};

    \draw[->] (7,1) -> (ww);
    \draw[white, line width=5pt] (uu) -> (7, -1);
    \draw[->] (uu) -> (7,-1);
    \draw[->] (ww) -> (uu);

    \draw (uu) -- ++(45:0.6);
    \draw (uu) -- ++(75:0.6);
    \draw (ww) -- ++(-45:0.6);
    \draw (ww) -- ++(-75:0.6);
  \end{tikzpicture}
  \caption{Two possible tour-graph reductions obtained by deleting $v$ of degree two in the circle graph. In the left Eulerian cycle, the vertices are visited in the order $v,\,u,\,w,\,v$. In the right Eulerian cycle, the vertices are visited in the order $v,\,w,\,u,\,v$. The two resulting graphs are tour graphs of $G'-v$ and $\tau_v(G')-v$.}
  \label{fig:tourp}
\end{figure}

The same proof for Lemma~\ref{lem:bcircle} does work if $T$ can be embedded in a surface of Euler genus $1$.
\begin{lemma}
For any circle graph $G$ with $n$ vertices whose tour graph can be embedded in a surface of Euler genus one, $\NCO(G)\le 2(n-1)$.
There exists an algorithm within the algorithmic framework in Algorithm~\ref{alg:gengen} satisfying the upper bound.
\end{lemma}
\begin{proof}
The proof is almost the same as Lemma~\ref{lem:bcircle}.
We use generalized Euler's formula $v-e+f\ge 1$ instead.
\end{proof}

For general Euler genus, we obtain a similar upper bound.
\begin{lemma}
Let $\mathcal{C}$ be a class of circle graphs whose tour graph can be embedded in a surface of Euler genus $\epsilon$.
The EC-complexity of graphs in $\mathcal{C}$ is at most $3 n + O(\epsilon\log\epsilon)$.
There exists an algorithm within the algorithmic framework in Algorithm~\ref{alg:gengen} satisfying the upper bound.
\end{lemma}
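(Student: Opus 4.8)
The plan is to imitate the proof of Lemma~\ref{lem:bcircle}, but peeling off vertices of degree at most $3$ (rather than at most $2$) as long as the graph is large compared to $\epsilon$, and then falling back on the generic bound of Theorem~\ref{thm:circle} once the graph becomes small. Throughout, the invariant maintained is that the current circle graph has a tour graph embeddable in a surface of Euler genus at most $\epsilon$, and every step stays inside the framework of Algorithm~\ref{alg:gengen}.

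First I would record a girth bound for embedded $4$-regular graphs. If $T$ is a connected $4$-regular graph with $v\ge 3$ vertices that embeds in a surface of Euler genus at most $\epsilon$, then its girth $g$ satisfies $g\le 4$ whenever $v>5(\epsilon-2)$. Indeed, a self-loop or a pair of parallel edges already gives $g\le 2$, so we may assume $T$ is simple; then the standard inequality $e\le\frac{g}{g-2}(v-2+\epsilon)$ for simple graphs of girth $g\ge 3$ embeddable in Euler genus $\epsilon$, combined with $e=2v$, yields $v(g-4)\le g(\epsilon-2)$, and for $g\ge 5$ the right side is maximized (over $g$) at $g=5$, giving $v\le 5(\epsilon-2)$; contraposition gives the claim. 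For $\epsilon\le 1$ this already forces small girth, which is exactly the content of Lemma~\ref{lem:bcircle} and the Euler-genus-one case (the preceding lemma), so in the assembly below I only need $\epsilon\ge 2$.

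Next I would prove the stability step. Suppose the circle graph $H$ has a tour graph $T$ of Euler genus at most $\epsilon$ with girth $g\le 4$, and let $C$ be a smallest cycle of $T$ and $v\in V(C)$. As in the proof of Theorem~\ref{thm:circle}, rerouting the Eulerian circuit of $T$ so that $C$ is isolated produces a circle graph $H'$ local-complement equivalent to $H$ with $\deg_{H'}(v)\le|V(C)|-1\le 3$, and $\deg_{\tau_v(H')}(v)=\deg_{H'}(v)$ since local complementation at $v$ does not touch the edges incident to $v$. The key point is that at least one of $H'-v$ and $\tau_v(H')-v$ again has a tour graph of Euler genus at most $\epsilon$: this is the generalization of the planarity argument in Lemma~\ref{lem:bcircle}. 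Fix an embedding of $T$ in a surface $\Sigma$ of Euler genus at most $\epsilon$; the rotation at $v$ makes exactly one of the three transition systems at $v$ ``crossing'' and the other two ``non-crossing'', and smoothing $T$ at $v$ along a non-crossing transition can be carried out inside a small disk around $v$, hence inside $\Sigma$. Deleting a vertex of a circle graph corresponds to smoothing the corresponding vertex of the tour graph along the transition used by the Eulerian circuit, and the Eulerian circuits realizing $H'$ and $\tau_v(H')$ use two \emph{distinct} transitions at $v$, so at least one of them is non-crossing; the resulting smoothed graph embeds in $\Sigma$ (if it disconnects, each component embeds in $\Sigma$, hence has Euler genus at most $\epsilon$). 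Finally, applying Lemma~\ref{lem:grow} to $S=\{v\}\cup N_{H'}(v)$ — whose bipartite adjacency matrix has a zero row at $v$, so $\cutrank_{H'}(S)\le|S|-1$ — gives $\NCO(H)=\NCO(H')\le\NCO(H'-v)+|S|-1\le\NCO(H'-v)+3$, and this is one step of Algorithm~\ref{alg:gengen} preceded by the free local complementations taking $H$ to $H'$.

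With these pieces the assembly is immediate for $\epsilon\ge 2$. While the current circle graph has at least $5\epsilon$ vertices, its tour graph has girth at most $4$ by the first step (since $5\epsilon>5(\epsilon-2)$ and $5\epsilon\ge 10\ge 3$), so the stability step peels one vertex at cost at most $3$; this phase removes at most $n$ vertices and costs at most $3n$. Once the graph has fewer than $5\epsilon$ vertices, Theorem~\ref{thm:circle} generates it with EC-complexity at most $2\lfloor\log_3(5\epsilon)\rfloor(5\epsilon-1)=O(\epsilon\log\epsilon)$, again within the framework of Algorithm~\ref{alg:gengen}. Summing the two contributions gives $\NCO(G)\le 3n+O(\epsilon\log\epsilon)$. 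The main obstacle is the stability step, and specifically the verification that smoothing the tour graph at $v$ along one of the two transitions attached to $H'$ and $\tau_v(H')$ does not raise the Euler genus; once that is in hand, the girth estimate is a routine Euler-formula computation and the rest is a direct reuse of Theorem~\ref{thm:circle} and Lemma~\ref{lem:grow}.
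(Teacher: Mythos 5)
Your proposal is correct and follows essentially the same route as the paper: a girth bound of at most four for $4$-regular tour graphs with more than about $5\epsilon$ vertices via the generalized Euler formula, peeling a vertex of degree at most three at cost at most $3$ per step while preserving an Euler-genus-$\epsilon$ embedding by choosing a non-crossing transition (the same device used in Lemma~\ref{lem:bcircle}), and falling back on Theorem~\ref{thm:circle} for the remaining $O(\epsilon)$ vertices to absorb the $O(\epsilon\log\epsilon)$ term. The only difference is cosmetic — you derive the girth bound from the $e\le\frac{g}{g-2}(v-2+\epsilon)$ inequality after first disposing of loops and parallel edges, while the paper applies $fg\le 2e$ directly — and your explicit verification that smoothing along a non-crossing transition stays on the surface is exactly what the paper leaves implicit by saying the proof is ``almost the same'' as Lemma~\ref{lem:bcircle}.
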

\begin{proof}
The proof is almost the same as the proof of Lemma~\ref{lem:bcircle}.
By generalized Euler's formula,
\begin{align*}
2-\epsilon\le v-e+f.
\end{align*}
For $v\ge \max\{0,\,\epsilon-1,\,5\epsilon-9\}=\max\{0,\,5\epsilon-9\}$, $fg\le 2e$ implies
\begin{align*}
g&\le\frac{2e}{f}
\le\frac{4v}{2-\epsilon+e-v}
=\frac{4v}{2-\epsilon+v}
=5 - \frac{10-5\epsilon+v}{2-\epsilon+v}
< 5.
\end{align*}
Hence, the girth of $T$ is at most four if $n\ge \max\{0,\,5\epsilon-9\}$.
From Theorem~\ref{thm:circle}, EC-complexity of a circle graph with $5\epsilon-10$ vertices is $O(\epsilon\log\epsilon)$.
This implies the upper bound $3n+O(\epsilon\log\epsilon)$.
\end{proof}

Lemma~\ref{lem:planar} implies the equivalence of the planarity of tour graph and the bipartiteness of circle graph up to the local complementation.
It would be interesting to see how the Euler genus of tour graph is reflected to properties of corresponding local-complement equivalence class of circle graphs.

\if0
\begin{figure}[t]
\centering
\begin{tikzpicture}[scale=0.25,baseline=23pt]
\node[draw,ellipse,minimum width=30,minimum height=40,very thick] (a0) at (0,4) {\LARGE $A$};
\node[draw,ellipse,minimum width=30,minimum height=40,very thick] (b0) at (10,4) {\LARGE $B$};
\draw[-,very thick] (a0.50) -- (b0.-230);
\draw[-,very thick] (a0.-50) -- (b0.230);
\node[draw,ellipse,minimum width=30,minimum height=40,very thick] (a1) at (0,-4) {\LARGE $A$};
\node[draw,ellipse,minimum width=30,minimum height=40,very thick] (b1) at (10,-4) {\LARGE $B$};
\node[draw,circle,minimum size=10,very thick] (v) at (5,-4) {};
\draw[very thick] (a1.50) to[out=0, in=135] (v.135);
\draw[very thick] (a1.-50) to[out=0, in=225] (v.225);
\draw[very thick] (b1.-230) to[out=180, in=45] (v.45);
\draw[very thick] (b1.230) to[out=180, in=-45] (v.-45);
\node[draw,ellipse,minimum width=30,minimum height=40,very thick] (a) at (20,0) {\LARGE $A$};
\node[draw,ellipse,minimum width=30,minimum height=40,very thick] (b) at (30,0) {\LARGE $B$};
\draw[->,very thick, shorten <=5pt, shorten >=8pt] (b0.east) -- (a.west);
\draw[->,very thick, shorten <=5pt, shorten >=8pt] (b1.east) -- (a.west);
\draw[-,very thick] (a.50) to[out=0, in=0, looseness=1.8] (a.-50);
\draw[-,very thick] (b.-230) to[out=180, in=180, looseness=1.8] (b.230);
\end{tikzpicture}
\caption{A 2-bridge and a cut vertex in a tour graph.}
\label{fig:2bridge}
\end{figure}

\begin{figure}[t]
\centering
\begin{tikzpicture}
\node[draw,ellipse,minimum width=80,minimum height=40,very thick] (C) at (0,0) {};
\node[fill=white, draw, circle, minimum size=10, very thick] (0) at (C.0) {};
\node[fill=white, draw, circle, minimum size=10, very thick] (1) at (C.90) {};
\node[fill=white, draw, circle, minimum size=10, very thick] (1) at (C.145) {};
\node[fill=white, draw, circle, minimum size=10, very thick] (1) at (C.180) {$\bm{v}$};
\node[fill=white, draw, circle, minimum size=10, very thick] (1) at (C.215) {};
\node[fill=white, draw, circle, minimum size=10, very thick] (1) at (C.270) {};
\node[fill=white, draw, circle, minimum size=10, very thick] (1) at (C.325) {};
\end{tikzpicture}
\caption{A cycle in a tour graph.}
\label{fig:Tcycle}
\end{figure}
\fi

Algorithm~\ref{alg:intervalgraph} for interval graphs is the unique graph generation algorithm in this paper that is not included within the algorithmic framework in Algorithm~\ref{alg:gengen}.
It is an open question whether Algorithm~\ref{alg:gengen} gives algorithms with EC-complexity $O(n)$ for interval graphs.

\section{Conclusion}

In this work, we studied the complexity of graph-state preparation in a general model of quantum algorithms that allows measurements in the computational basis, single-qubit Clifford operations, and two-qubit Clifford operations, and introduced the CZ-complexity, defined as the minimum number of two-qubit Clifford operations required to generate a target graph state.
We first showed that this complexity measure is well justified, since every optimal graph-state preparation algorithm can be assumed to use only \CZ operations as its two-qubit gates.
We then established a combinatorial characterization of graph-state transformations and proved that the CZ-complexity of a graph state coincides with the edge-complementation-complexity of the corresponding graph.
More precisely, a graph state $\ket{G}$ can be generated from $\ket{H}$ with CZ-complexity at most $t$ if and only if $G$ can be obtained from $H$ by vertex deletions, local complementations, and at most $t$ elementary edge-complementations.
This yields a purely graph-theoretic description of graph-state preparation.

Based on this characterization, we related graph-state preparation complexity to rank-width.
For an $n$-vertex graph of rank-width at most $r$, we obtained an upper bound of $O(rn)$ on the CZ-complexity, while for connected graphs of rank-width at least $r$ we proved the lower bound $n+r-2$.
We also showed that these bounds are close to optimal in two complementary senses.
There exist graphs of rank-width at most $r$ whose CZ-complexity is $\Omega(rn/\log n)$, and there exist connected graphs for which the additive lower bound beyond $n-1$ is tight up to a constant factor.
In particular, for connected graphs, rank-width one exactly characterizes the graph states with CZ-complexity $n-1$.

We also considered special graph classes with unbounded rank-width.
For interval graphs and circle graphs, we presented preparation algorithms with CZ-complexities $O(n)$ and $O(n\log n)$, respectively.
These examples show that although rank-width provides a useful general framework for bounding graph-state preparation complexity, finer combinatorial descriptions can yield substantially better bounds for specific graph families.


Measurement-based quantum computation (MBQC) is one of the main applications of graph states.
It is known that MBQC on graph states of logarithmically bounded rank-width can be simulated efficiently on a classical computer~\cite{PhysRevA.75.012337}.
Efficient classical simulation is also known for MBQC on circle graph states~\cite{PhysRevA.76.022304,harrison2025fermion,hahn2026structure}.
From the perspective of MBQC, a natural goal is therefore to identify classes of graph states that admit efficient preparation but for which efficient classical simulation of MBQC is not known.
In view of the results of this paper, natural candidate classes include interval graphs and graph classes of moderate rank-width, for example rank-width $r=\Theta(\sqrt{n})$.

Several other problems remain open.
It would be worthwhile to sharpen the constants in the rank-width-based bounds and to identify other local-complementation-invariant graph parameters that yield meaningful lower bounds on CZ-complexity.
On the algorithmic side, it remains open whether the general framework developed in this paper can also achieve $O(n)$ EC-complexity for interval graphs.
For circle graphs, it would be natural to understand how the Euler genus of a tour graph is reflected in the structure of the corresponding local-complementation equivalence class.
All preparation algorithms developed in this paper avoid the use of working qubits. It remains unclear whether allowing working qubits can reduce the number of \CZ operations. Clarifying this question, and more generally understanding the trade-off between the number of working qubits and the number of \CZ operations, would be important for practical implementations.
Another natural direction is to enlarge the class of single-qubit operations that do not contribute to the two-qubit count and allow arbitrary single-qubit unitaries.
In this direction, Claudet and Perdrix recently gave a graphical characterization of local unitary equivalence in terms of generalized local complementation~\cite{claudet_et_al:LIPIcs.STACS.2025.27}.
Developing an analogous graphical characterization for transformations generated by an arbitrary number of local unitary operations together with at most $t$ \CZ operations would substantially broaden the scope of the present framework.
We hope that the combinatorial viewpoint developed in this paper will be useful for further studies on graph-state preparation, stabilizer-state transformations, and resource states for MBQC.

\section*{Acknowledgments}
The work of RM was supported by JST FOREST Program Grant Number JPMJFR216V and JSPS KAKENHI Grant Numbers JP20H04138, JP20H05966 and JP22H00522.
We thank Simon Martiel and Tristan Cam for identifying errors in the algorithms for generating permutation graphs and circle graphs in the earlier version of the paper.
We also thank James Davies and Andrew Jena for identifying the same errors independently.

\paragraph{Author contributions.}

The notion of CZ-complexity and the problem of minimizing it were proposed by Yusei Yoshimura.
Yusei Yoshimura and Ryuhei Mori proved preliminary versions of some of the results in this paper.
Later, Soh Kumabe and Ryuhei Mori strengthened these results using the cut-rank and rank-width.
Large language models were used only to assist with the writing of the manuscript,
including language editing, phrasing, and clarity improvements. They were not used
to produce mathematical results, proofs, calculations, code, or figures. All authors
approved the final manuscript and are fully responsible for its content.

\bibliographystyle{quantum}
\bibliography{biblio}

\appendices

\section{Proofs of Facts~\ref{fa:sgn} and~\ref{fa:L}}\label{apx:sgnL}
\subsection{Proof of Fact~\ref{fa:sgn}}
\fasgn*
\begin{proof}
Let $G$ be the common generating matrix for $\ket{\psi}$ and $\ket{\varphi}$.
Let $v\in\mathbb{F}_2^n$ be a vector where $v_i=1$ if and only if the sign of the $i$-th stabilizer for $\ket{\psi}$ and $\ket{\varphi}$ differs.
There exists a row vector $a\in\mathbb{F}_2^{2n}$ satisfying
\begin{align*}
G
\begin{bmatrix}
O_n&I_n\\
I_n&O_n
\end{bmatrix}
a^T
&=v
\end{align*}
since
$G
\begin{bmatrix}
O_n&I_n\\
I_n&O_n
\end{bmatrix}\in\mathbb{F}_2^{n\times 2n}$ has linearly independent rows.
Then, $S_i \sigma_a\ket{\varphi} = (-1)^{v_i} \sigma_a\ket{\varphi}$ where $S_i$ is the $i$-th stabilizer of $\ket{\varphi}$.
Hence, $\sigma_a \ket{\varphi}$ has the same stabilizer as $\ket{\psi}$.
\end{proof}

\subsection{Proof of Fact~\ref{fa:L}}
\faL*
\begin{proof}
First, we prove that $L_C = I_{\mathbb{F}_2^{2n}}$ implies $C$ is in $\mathcal{G}_n$ up to a constant factor.
Assume $L_C = I_{\mathbb{F}_2^{2n}}$.
Then, for any $h\in\mathcal{G}_n$, there exists $\alpha_h\in\{\pm1\}$ such that $ChC^\dagger = \alpha_h h$.
For any $h_1,h_2\in\mathcal{G}_n$, $\alpha_{h_1h_2} h_1 h_2 = Ch_1h_2C^\dagger = C h_1 C^\dagger C h_2 C^\dagger= \alpha_{h_1}\alpha_{h_2} h_1 h_2$.
Hence, $\alpha_{h_1h_2} = \alpha_{h_1}\alpha_{h_2}$ for any $h_1,h_2\in\mathcal{G}_n$.
This implies that the map $h\mapsto \alpha_h$ is a group homomorphism from $\mathcal{G}_n$ to $\{\pm1\}$.

Let $a\in\mathbb{F}_2^{2n}$ be a vector satisfying
\begin{align*}
\alpha_{X_i}&=(-1)^{a_{n+i}},&\alpha_{Z_i}&=(-1)^{a_{i}}\qquad\forall i\in\{1,2,\dotsc,n\}.
\end{align*}
Then, $g=\sigma_a$ satisfies
\begin{align*}
gX_i &= \alpha_{X_i} X_i g,&
gZ_i &= \alpha_{Z_i} Z_i g\qquad\forall i\in\{1,2,\dotsc,n\}.
\end{align*}
Since every element of $\mathcal{G}_n$ is a product of the generators
$X_1,\dotsc,X_n,Z_1,\dotsc,Z_n$ up to a constant factor,
$gh=\alpha_h hg$
for any $h\in\mathcal{G}_n$ from the homomorphism.
Then, $(Cg)h (Cg)^\dagger = \alpha_h ChC^\dagger = \alpha^2_h  h = h$.
Hence, $Cg$ commutes with any matrix in $\mathcal{G}_n$, and hence commutes with any matrix in $U(2^n)$.
This means that $Cg$ is equal to the identity matrix up to a constant factor.
Hence, $C$ is in $\mathcal{G}_n$ up to a constant factor.

Next, assume $L_C=L_D$. Then, $L_{C D^{-1}} = L_{D^{-1}} L_C = L_D^{-1} L_C = I_{\mathbb{F}_2^{2n}}$.
This means that there exists $g\in\mathcal{G}_n$ such that $C D^{-1}$ is equal to $g$ up to a constant factor.
\end{proof}

\section{Sufficiency of the \CZ operators: Proof of Proposition~\ref{prop:cz0}}\label{apx:cz}
In this section, we prove Proposition~\ref{prop:cz0}.
\propcz*
We first prove that any two-qubit Clifford gate can be replaced by single-qubit gates, at most one \CZ gate and at most one SWAP gate.

\begin{lemma}\label{lem:cz}
Any two-qubit Clifford operator is represented by a quantum circuit consisting of arbitrary number of single-qubit Clifford gates, at most one \CZ gate and at most one SWAP gate.
\end{lemma}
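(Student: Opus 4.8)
The plan is to work entirely at the level of the $\mathbb{F}_2$-matrix representation $L_C \in \mathbb{F}_2^{4\times 4}$ of a two-qubit Clifford operator $C$, which is legitimate by Fact~\ref{fa:L}: if two Clifford operators have the same matrix representation they differ only by a Pauli operator, and since $X$, $Y$, $Z$ are single-qubit Pauli gates (hence single-qubit Clifford gates), absorbing a two-qubit Pauli correction at the end of the circuit costs only single-qubit gates. So it suffices to show that every symplectic matrix $M \in \mathbb{F}_2^{4\times 4}$ (i.e.\ every matrix preserving the form $\begin{bmatrix}O_2&I_2\\I_2&O_2\end{bmatrix}$) can be written as a product of matrices from $\{L_{C_1\otimes C_2} : C_i \in \mathcal{C}_1\}$, at most one $L_\CZ$, and at most one $L_{\mathrm{SWAP}}$, where recall $L_{DC} = L_C L_D$ so the circuit order reverses the matrix product.

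First I would recall that the group generated by the single-qubit local Clifford matrices acting on two qubits is $\mathrm{SL}_2(\mathbb{F}_2) \times \mathrm{SL}_2(\mathbb{F}_2)$ (order $6 \times 6 = 36$), sitting block-diagonally inside the symplectic group; by \eqref{eq:1clifford} these six $2\times2$ matrices are exactly $\mathrm{GL}_2(\mathbb{F}_2) = \mathrm{SL}_2(\mathbb{F}_2)$, generated by $L_H, L_S$. The target group $\mathrm{Sp}_4(\mathbb{F}_2)$ has order $720$. The clean approach is a double-coset / Bruhat-type argument: I want to show every element of $\mathrm{Sp}_4(\mathbb{F}_2)$ lies in a double coset $(\mathrm{SL}_2 \times \mathrm{SL}_2) \cdot g \cdot (\mathrm{SL}_2 \times \mathrm{SL}_2)$ where $g$ ranges over $\{I, L_\CZ, L_{\mathrm{SWAP}}, L_\CZ L_{\mathrm{SWAP}} \text{ or } L_{\mathrm{SWAP}}L_\CZ, \dots\}$ — and crucially that the word in $\{L_\CZ, L_{\mathrm{SWAP}}\}$ picked as the representative uses each at most once. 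A cardinality check makes this plausible: $|\mathrm{Sp}_4(\mathbb{F}_2)| = 720$ while $36 \cdot 36 \cdot 4 = 5184$, comfortably larger, so there is room for a small set of double-coset representatives.

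The concrete way I would carry this out: given $L_C$ written in $2\times2$ blocks $\begin{bmatrix}A&B\\mathbb{C}&D\end{bmatrix}$, use the block-diagonal $\mathrm{SL}_2\times\mathrm{SL}_2$ freedom on the left and right (i.e.\ pre/post-composition by $C_1\otimes C_2$ operations) to put the block structure into a normal form governed by the ranks of $B$ and $C$ (each rank $\in\{0,1,2\}$). The rank-$0$ cases are handled by at most one SWAP (the case where $C$ is already block-diagonal, or anti-block-diagonal). When $B$ or $C$ has rank $1$ or $2$, apply one $L_\CZ$ (which adds a controlled-$Z$ coupling, changing the ranks in a controlled way) and possibly one SWAP to reduce to the block-diagonal case; one shows no configuration needs two CZ's or two SWAP's by checking the finitely many rank patterns. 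The main obstacle will be the bookkeeping of these cases — confirming that after using the local-Clifford freedom, every residual block pattern is reached with at most one CZ and one SWAP, rather than needing, say, a CZ–SWAP–CZ sandwich. One could instead just invoke that $\{H, S, \CZ\}$ generates the Clifford group and that SWAP $= \CZ \cdot (H\otimes H) \cdot \CZ \cdot (H\otimes H) \cdot \CZ$ up to local Cliffords, but that gives no bound on the number of CZ's; the finite-case analysis (aided, if desired, by an explicit enumeration over $\mathrm{Sp}_4(\mathbb{F}_2)$) is what actually delivers the "at most one each" claim, so that is where the real work sits.
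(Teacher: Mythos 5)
Your reduction to $\mathrm{Sp}_4(\mathbb{F}_2)$ via Fact~\ref{fa:L} is sound, and the statement you reduce to --- that $\mathrm{Sp}_4(\mathbb{F}_2)$ is the union of the four double cosets of the local subgroup $\mathrm{GL}_2(\mathbb{F}_2)\times\mathrm{GL}_2(\mathbb{F}_2)$ (order $36$) with representatives $I$, $L_{\CZ}$, $L_{\mathrm{SWAP}}$, $L_{\CZ}L_{\mathrm{SWAP}}$ --- is true; it is precisely the classification in Figure~\ref{fig:2clifford}, with coset sizes $36+324+324+36=720$. But as written your argument does not establish it. The cardinality estimate $36\cdot 36\cdot 4=5184\ge 720$ runs in the wrong direction: it shows coverage is not ruled out by counting, not that it holds. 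The decisive step --- that after exhausting the left/right local freedom every residual block-rank pattern of a symplectic matrix is reached by a word using each of $\CZ$ and SWAP at most once, and never requires something like $\CZ\cdot\mathrm{SWAP}\cdot\CZ$ --- is exactly the content of the lemma, and you explicitly defer it as ``bookkeeping''. You have correctly located where the work sits, but you have not done it, so there is a genuine gap; to close it you would either carry out the finite rank-pattern analysis or compute the four double-coset sizes and check they sum to $720$.

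For comparison, the paper takes a more elementary rewriting route that never enumerates $\mathrm{Sp}_4(\mathbb{F}_2)$. Starting from an arbitrary word in $H$, $S$, $\CZ$ (which generate the Clifford group), one may assume the single-qubit gates between two consecutive $\CZ$ gates are Hadamards, since $S$ commutes with $\CZ$ and the six local Cliffords of \eqref{eq:1clifford} are words in $H$ and $S$. The case with no Hadamards cancels because $\CZ^2=I$, and the two remaining configurations ($H$ on one wire, $H$ on both wires) are converted by explicit circuit identities into circuits with a single $\CZ$, local gates, and possibly one SWAP. Each rewrite strictly decreases the number of $\CZ$ gates, and on two qubits the accumulated SWAPs collapse to at most one. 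This buys a constructive, verifiable derivation at the cost of exhibiting two gate identities, whereas your double-coset framing, once completed, would give a cleaner structural explanation of why exactly four classes (and hence the $36/324/324/36$ count) appear.
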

\begin{proof}
From Fact~\ref{fa:L}, for a given two-qubit Clifford operator $\mathcal{C}$, it is sufficient to show the existence of the quantum circuit $\mathcal{D}$ satisfying $L_\mathcal{C}=L_\mathcal{D}$ and the conditions of the lemma.
Since the $S$, $H$ and \CZ gates generate the Clifford group, any two-qubit Clifford operator is represented by a Clifford circuit consisting of these gates.
When the Clifford circuit includes more than one \CZ gates, we can reduce the number of \CZ gates by the following procedure.
There are six possible single-qubit Clifford gates (up to succeeding Pauli gates) in~\eqref{eq:1clifford}.
Since $S$ commutes with \CZ, we can assume without loss of generality that single-qubit gates between two specific \CZ gates are all Hadamard gates.
In the two possible cases, the number of \CZ gates can be reduced by using the SWAP gate as follows.
\begin{align*}
\Qcircuit @C=1em @R=.7em {
& \ctrl{1} & \gate{H} & \ctrl{1} & \qw \\
& \control\qw & \qw & \control\qw & \qw \\
}
\quad&\raisebox{-1em}{=}\quad
\Qcircuit @C=1em @R=.7em {
& \gate{H} & \targ & \control\qw & \qw \\
& \qw & \ctrl{-1} & \ctrl{-1} & \qw \\
}
\quad\raisebox{-1em}{=}\quad
\Qcircuit @C=1em @R=.7em {
& \gate{H} & \gate{iY} & \qw \\
& \qw & \ctrl{-1} & \qw \\
}
\quad\raisebox{-1em}{=}\quad
\Qcircuit @C=1em @R=.7em {
& \gate{H} & \gate{Y} & \qw&\qw \\
& \qw & \ctrl{-1} & \gate{S}&\qw  \\
}\\
\quad&\raisebox{-1em}{=}\quad
\Qcircuit @C=1em @R=.7em {
&\gate{H}& \gate{S^\dagger} & \gate{H} & \ctrl{1}    & \gate{H} &\gate{S}& \qw \\
& \qw&\qw      & \qw      & \control\qw & \gate{S} &\qw& \qw \\
}\\
\Qcircuit @C=1em @R=.7em {
& \ctrl{1} & \gate{H} & \ctrl{1} & \qw \\
& \control\qw & \gate{H} & \control\qw & \qw \\
}
\quad&\raisebox{-1em}{=}\quad
\Qcircuit @C=1em @R=.7em {
& \qswap & \gate{H} & \ctrl{1} & \gate{H} & \qw \\
& \qswap\qwx & \gate{H} & \control\qw&\gate{H} & \qw \\
}
\end{align*}
The second relation is obtained from the following Clifford circuit for the SWAP gate.
\begin{equation*}
\Qcircuit @C=1em @R=2em {
&\qswap&\qw\\
&\qswap\qwx&\qw
}
\quad\raisebox{-1.0em}{=}\quad
\Qcircuit @C=1em @R=2em {
&\qw&\qswap&\gate{H}&\qw\\
&\gate{H}&\qswap\qwx&\qw&\qw
}
\quad\raisebox{-1.7em}{=}\quad
\Qcircuit @C=1em @R=2em {
&\qw&\ctrl{1}&\targ&\ctrl{1}&\gate{H}&\qw\\
&\gate{H}&\targ&\ctrl{-1}&\targ&\qw&\qw
}
\quad\raisebox{-1.7em}{=}\quad
\Qcircuit @C=1em @R=2em {
&\ctrl{1}&\gate{H}&\control\qw&\gate{H}&\ctrl{1}&\gate{H}&\qw\\
&\control\qw&\gate{H}&\ctrl{-1}&\gate{H}&\control\qw&\gate{H}&\qw
}
\end{equation*}
By repeating this procedure, the number of the \CZ gates is reduced to at most one.
\end{proof}

Then, by replacing all two-qubit Clifford gates in quantum circuits with Clifford circuits with at most one CZ gate and SWAP gate, we obtain the following theorem.

\begin{lemma}\label{lem:czc}
Let $\mathcal{C}$ be a Clifford circuit that consists of arbitrary number of single-qubit Clifford gates and $t$ two-qubit Clifford gates.
Then, there exists a Clifford circuit $\mathcal{D}$ that consists of arbitrary number of single-qubit Clifford gates and at most $t$ \CZ gates and a permutation of qubits $\mathcal{P}$ such that $\mathcal{D}\mathcal{P} = \mathcal{C}$.
\end{lemma}
\begin{proof}
From Lemma~\ref{lem:cz}, all two-qubit Clifford gates in $\mathcal{C}$ can be replaced by Clifford circuits consisting of arbitrary number of single-qubit Clifford gates, at most one \CZ gates and at most one SWAP gates.
The SWAP gate commutes with an arbitrary quantum gate $U$ by changing the qubits for which $U$ is applied.
Hence, the SWAP gate can be placed at the beginning of the quantum circuit, and obtain Lemma~\ref{lem:czc}.
\end{proof}

Now, we are ready to prove Proposition~\ref{prop:cz0}.

\begin{proof}[Proof of Proposition~\ref{prop:cz0}]
From Lemma~\ref{lem:paulim}, regardless of the measurement outcome, an adaptive local Clifford operation gives some graph state deterministically.
Hence, it is sufficient to show that there is a quantum algorithm that includes at most $\NCZ(\ket{G})$ \CZ operations and no other two-qubit Clifford operations, and produces $\ket{G}$ when all measurement outcomes are zero.
There exist a Clifford circuit $\mathcal{C}$ 
including arbitrary number of single-qubit Clifford gates and $\NCZ(\ket{G})$ two-qubit Clifford gates such that
\begin{align*}
\bra{0}^{\otimes s}\mathcal{C}\ket{0}^{\otimes(n+s)} &\propto \ket{G}
\end{align*}
From Lemma~\ref{lem:czc},
there exists a Clifford circuit $\mathcal{D}$ including some single-qubit Clifford gates and at most $\NCZ(\ket{G})$ \CZ gates such that
\begin{align*}
\bra{0}^{\otimes s}\mathcal{D}\ket{0}^{\otimes(n+s)} &\propto \ket{G}
\end{align*}
This proves Proposition~\ref{prop:cz0}.
\end{proof}

\begin{figure}
\centering
\begin{subfigure}[t]{0.24\textwidth}
\begin{align*}
\Qcircuit @C=1em @R=2em {
&\gate{A}&\qw\\
&\gate{B}&\qw
}
\end{align*}
\caption{Zero \CZ gate.}
\end{subfigure}
\begin{subfigure}[t]{0.24\textwidth}
\begin{align*}
\Qcircuit @C=1em @R=2em {
&\gate{C}&\ctrl{1}&\gate{A}&\qw\\
&\gate{D}&\control\qw&\gate{B}&\qw
}
\end{align*}
\caption{One \CZ gate.}
\label{fig:2clifford1}
\end{subfigure}
\begin{subfigure}[t]{0.24\textwidth}
\begin{align*}
\Qcircuit @C=1em @R=2em {
&\qswap&\gate{C}&\ctrl{1}&\gate{A}&\qw\\
&\qswap\qwx&\gate{D}&\control\qw&\gate{B}&\qw
}
\end{align*}
\caption{Two \CZ gates.}
\end{subfigure}
\begin{subfigure}[t]{0.24\textwidth}
\begin{align*}
\Qcircuit @C=1em @R=2em {
&\qswap&\gate{A}&\qw\\
&\qswap\qwx&\gate{B}&\qw
}
\end{align*}
\caption{Three \CZ gates.}
\end{subfigure}
\caption{Classification of two-qubit Clifford operators (up to succeeding Pauli gates): The gates A and B are chosen from the six single-qubit Clifford operators in~\eqref{eq:1clifford}. The gates C and D are chosen from $I$, $H$ and $HS$.}
\label{fig:2clifford}
\end{figure}
\begin{remark}
In terms of the number of \CZ gates when the SWAP gate is not allowed, two-qubit Clifford operators are classified as in Figure~\ref{fig:2clifford}.
The number of two-qubit Clifford operators of types (a), (b), (c) and (d) are $36=6^2$, $324=3^2\times 6^2$, $324$ and $36$, respectively.
Note that the numbers of required $\mathrm{CNOT}$ gates (equivalently the \CZ gates) for Clifford operators up to six qubits are calculated by computer search in~\cite{bravyi20226}.
\end{remark}

\section{Independence systems, additive codes and the EC-complexity}\label{apx:ind}
\subsection{Isotropic independence systems}
\if0
We first consider general upper bounds that depend only on $n$.
\begin{lemma}\label{lem:ghalf}
For any graph $G$ with $n$ vertices, there exists a vertex $v\in V(G)$ such that
$\NCO(G)\le \NCO(G-v) + \left\lceil\frac{n-1}2\right\rceil$.
\end{lemma}
\begin{proof}
Let $S\subseteq V(G)$ be an arbitrary subset of size $\lceil\frac{n+1}2\rceil$.
Then, the rows of $A_{S,V\setminus S}$ are linearly dependent since the number of rows is greater than the number of columns.
From Lemma~\ref{lem:grow}, we obtain the Lemma.
\end{proof}

We will improve Lemma~\ref{lem:ghalf} using coding theory.
\fi

In this appendix, we consider upper bounds on the minimum dependent set that depend only on $n$, and prove Theorem~\ref{thm:upper}.
For any graph $G$, let
\begin{align*}
\mathcal{I}_G &:=\left\{S\subseteq V(G)\mid \cutrank_G(S)=|S|\right\}.
\end{align*}
We first regard the structure $(V(G),\,\mathcal{I}_G)$ as an independence system in matroid theory.
We then switch the understanding of the structure from matroid theory to coding theory.

\begin{definition}[Independence system]
Let $V$ be a finite set and $\mathcal{I}\subseteq 2^V$.
Then, $(V,\,\mathcal{I})$ is an \emph{independence system} if it satisfies the following conditions:
\begin{enumerate}
\item $\varnothing\in\mathcal{I}$.
\item For any $X\in\mathcal{I}$ and $Y\subseteq X$, $Y\in\mathcal{I}$.
\end{enumerate}
\end{definition}

Then, $(V(G), \mathcal{I}_G)$ is obviously an independence system.
Furthermore, $(V(G), \mathcal{I}_G)$ is an isotropic independence system.

\begin{definition}[Linear 2-extendible system and isotropic independence system]
An independence system $(V, \mathcal{I})$ is a \emph{linear 2-extendible system} on a field $\mathbb{F}$ if
there exists $m\in\mathbb{N}$ and two $m$-dimensional vectors $a^{v}\in\mathbb{F}^m$ and $b^v\in\mathbb{F}^m$ associated to each $v\in V$, and
\begin{align*}
\mathcal{I} &=\left\{S\subseteq V\mid \text{A set of $2|S|$ vectors $a^v$ and $b^v$ for $v\in S$ are linearly independent}\right\}.
\end{align*}
Especially, a linear 2-extendible system on $\mathbb{F}_2$ is a \emph{binary 2-extendible system}.

A binary 2-extendible system is an \emph{isotropic independence system} if
$m=|V|$ and for any $s,\,t\in\{1,2,\dotsc,|V|\}$
\begin{align}
\sum_{v\in V} \left(a^v_s b^v_t + b^v_s a^v_t\right) &= 0\label{eq:isotropic}
\end{align}
where $a^v_s\in\mathbb{F}_2$ denotes an $s$-th element of $a^v\in\mathbb{F}_2^{|V|}$.
\end{definition}

\begin{lemma}
For any graph $G$, $(V(G), \mathcal{I}_G)$ is an isotropic independence system.
\end{lemma}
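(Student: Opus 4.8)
The plan is to exhibit explicit vectors that realize $(V(G),\mathcal{I}_G)$ as a binary 2-extendible system with $m=|V(G)|$, and then to check the isotropy identity by a one-line computation. For each $v\in V(G)$ I would take $a^v$ to be the standard unit vector supported on $v$ and $b^v:=A_v$, the $v$-th row of the adjacency matrix $A$ of $G$ (equivalently, the indicator vector of $N_G(v)$), both viewed as elements of $\mathbb{F}_2^{V(G)}$. This already gives $m=|V(G)|$ as demanded by the definition, so two things remain: (i) that $S\in\mathcal{I}_G$ iff the $2|S|$ vectors $\{a^v,b^v\mid v\in S\}$ are linearly independent, and (ii) that the isotropy condition~\eqref{eq:isotropic} holds.

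For (i), fix $S\subseteq V(G)$ and suppose $\sum_{v\in S}(\alpha_v a^v+\beta_v b^v)=0$ with $\alpha_v,\beta_v\in\mathbb{F}_2$. Set $T:=\{v\in S\mid \beta_v=1\}$. Restricting the identity to the coordinates indexed by $V\setminus S$ kills every $a^v$ term, so it becomes $\sum_{v\in T}(A_v)|_{V\setminus S}=0$, i.e.\ the rows of the bipartite adjacency matrix $A_{S,V\setminus S}$ indexed by $T$ sum to zero. Conversely, given a nonempty $T\subseteq S$ whose rows of $A_{S,V\setminus S}$ sum to zero, the vector $\sum_{v\in T}A_v$ is supported inside $S$, which is precisely a nontrivial dependence $\sum_{v\in T}b^v+\sum_{v\in S}\alpha_v a^v=0$ for an appropriate choice of $\alpha_v$. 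Finally, if $T=\varnothing$ the surviving identity $\sum_{v\in S}\alpha_v a^v=0$ forces all $\alpha_v=0$ because the $a^v$ are distinct unit vectors; hence every nontrivial dependence has $T\neq\varnothing$. Combining these observations, $\{a^v,b^v\mid v\in S\}$ is linearly dependent exactly when the rows of $A_{S,V\setminus S}$ are linearly dependent, i.e.\ when $\cutrank_G(S)<|S|$; equivalently, it is independent exactly when $S\in\mathcal{I}_G$.

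For (ii), note $a^v_s$ equals $1$ if $v=s$ and $0$ otherwise, while $b^v_t=A_{v,t}$, so
\[
\sum_{v\in V}\bigl(a^v_s b^v_t+b^v_s a^v_t\bigr)=A_{s,t}+A_{t,s}=0,
\]
since the adjacency matrix is symmetric over $\mathbb{F}_2$. Together with (i), this shows $(V(G),\mathcal{I}_G)$ is an isotropic independence system, completing the proof.

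I do not expect any genuine obstacle here: the statement is essentially a repackaging of the cut-rank in terms of linear dependence of rows of $A_{S,V\setminus S}$. The only point demanding a little care is the bookkeeping in step (i) — specifically, the observation that the unit vectors $\{a^v\mid v\in S\}$ span a subspace complementary (on the coordinates $V\setminus S$) to the span of the $b^v$, so that a linear dependence among the full collection of $2|S|$ vectors is controlled entirely by the $\beta$-part and thus coincides with a row dependence of $A_{S,V\setminus S}$.
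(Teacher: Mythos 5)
Your proposal is correct and follows essentially the same route as the paper: the same choice of $a^v$ (unit vector) and $b^v$ (neighborhood indicator / adjacency row), the same one-line isotropy check $A_{s,t}+A_{t,s}=0$, and the same exploitation of the block-triangular structure to reduce independence of the $2|S|$ vectors to $\cutrank_G(S)=|S|$ (the paper computes the rank of the block matrix directly, while you analyze dependences coordinate-wise, but these are two phrasings of the same calculation).
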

\begin{proof}
For each $v\in V(G)$, let $a^v,\,b^v\in\mathbb{F}_2^n$ be vectors defined as
\begin{equation}
\begin{split}
\text{A $w$-th element of $a^v$ is 1} &\iff w=v\\
\text{A $w$-th element of $b^v$ is 1} &\iff \{v,w\}\in E.
\end{split}\label{eq:ab}
\end{equation}
We first confirm that $((a^v, b^v))_{v\in V(G)}$ satisfies the isotropic condition~\eqref{eq:isotropic}.
Let $A$ be the adjacency matrix of $G$. Then,
\begin{align*}
\sum_{v\in V(G)} \left(a^v_s b^v_t + b^v_s a^v_t\right) &= 
\sum_{v\in V(G)} \left(\delta_{v, s} A_{v,t} + A_{v,s} \delta_{v, t}\right)\\
&= A_{s,t} + A_{t,s} = 0
\end{align*}
Hence, $((a^v, b^v))_{v\in V(G)}$ satisfies the isotropic condition~\eqref{eq:isotropic}.

We now confirm that the binary 2-extendible system defined by $((a^v, b^v))_{v\in V(G)}$ represents $(V(G),\,\mathcal{I}_G)$.
For any $S\subseteq V(G)$,
\begin{align*}
\dim\mathrm{span}_{\mathbb{F}_2}\left(\bigcup_{v\in S}\{a_v,b_v\}\right)&= 
\mathrm{rank}_{\mathbb{F}_2}\left(\begin{bmatrix}
I_S&A_{S,S}\\
0&A_{V(G)\setminus S,S}
\end{bmatrix}\right)\\
&= |S| + \mathrm{rank}_{\mathbb{F}_2}(A_{V(G)\setminus S,S})\\
&=|S| + \cutrank_G(S).
\end{align*}
Hence, all the $2|S|$ vectors are linearly independent if and only if
\begin{align*}
|S| + \cutrank_G(S) = 2|S|&\iff
\cutrank_G(S) = |S|.
\end{align*}
This implies that the binary 2-extendible system represents $(V(G),\mathcal{I}_G)$.
\end{proof}

The isotropic system was introduced by Bouchet~\cite{BOUCHET1987231}. However, the isotropic system has not been regarded as an independence system.
Note that for a bipartite graph $G=(A\cup B, E)$, the binary 2-extendible system $(A\cup B,\mathcal{I}_G)$
is an intersection of the binary matroid $\mathcal{M}=\mathrm{Bin}(G, A, B)$ and its dual $\mathcal{M}^*=\mathrm{Bin}(G, B, A)$, defined in~\cite{OUM200579},
i.e., $S\in\mathcal{I}_G$ if and only if $S$ is independent both for $\mathcal{M}$ and $\mathcal{M}^*$.
Generally, for any stabilizer state with generating matrix $G\in\mathbb{F}_2^{n\times 2n}$, there exists corresponding isotropic independence system $(V,\, \mathcal{I})$.
Here, $V$ corresponds to the set of qubits, and $\mathcal{I}$ is an independent set defined by $((a^v, b^v))_{v\in V}$ where $a^v$ and $b^v$ are $v$-th column and $(n+v)$-th column of $G$, respectively.

\subsection{Self-dual additive code over $\mathbb{F}_4$ with respect to the Hermitian trace inner product}
Since from Lemma~\ref{lem:grow}, we are interested in small dependent sets of the isotropic independence system $(V(G),\,\mathcal{I}_G)$ rather than large independent sets, we switch our perspective from matroid theory to coding theory.
Indeed, self-dual additive codes over $\mathbb{F}_4$ with respect to the Hermitian trace inner product are essentially equivalent to isotropic systems~\cite{DANIELSEN20061351}.
Let $\mathbb{F}_4 = \{0,1,\omega,\omega^2\}$ where $\omega^2+\omega+1=0$.
The \emph{conjunction} of $x\in\mathbb{F}_4$ is defined as $\overline{x} = x^2$.
The \emph{trace map} $\mathrm{Tr}\colon\mathbb{F}_4\to\mathbb{F}_2$ is defined as $\mathrm{Tr}(x) = x + \overline{x}$.
\begin{definition}[Self-dual additive code over $\mathbb{F}_4$ with respect to the Hermitian trace inner product]
A non-empty subset $\mathcal{C}\subseteq \mathbb{F}_4^n$ is an \emph{additive code} over $\mathbb{F}_4$ if $x + y \in \mathcal{C}$ for all $x,\,y\in\mathcal{C}$.
Each element $x\in\mathcal{C}$ is called a \emph{codeword} of $\mathcal{C}$.
A \emph{support} of a codeword $x\in\mathcal{C}$ is defined as
\begin{align*}
\mathrm{support}(x)&:=\{i\in\{1,2,\dotsc,n\}\mid x_i\ne 0\}.
\end{align*}
For $x,\, y\in\mathbb{F}_4^n$, the Hermitian trace inner product is defined as
\begin{align*}
\langle x,\,y\rangle &= \sum_{i=1}^n \mathrm{Tr}(x_i\overline{y_i}).
\end{align*}
Two vectors $x,\,y\in\mathbb{F}_4^n$ with $\langle x,\,y\rangle=0$ are said to be \emph{orthogonal}.
The dual code $\mathcal{C}^\perp\subseteq\mathbb{F}_4^n$ of additive code $\mathcal{C}$ is defined as
\begin{align*}
\mathcal{C}^\perp &= \left\{y\in\mathbb{F}_4^n\mid \langle x, y\rangle = 0\quad\forall x\in\mathcal{C}\right\}.
\end{align*}
An additive code $\mathcal{C}$ over $\mathbb{F}_4$ is said to be \emph{self-dual with respect to the Hermitian trace inner product} if $\mathcal{C}^\perp=\mathcal{C}$.
Self-dual additive code over $\mathbb{F}_4$ with respect to the Hermitian trace inner product is denoted by $4^{\mathrm{H}+}$-code.

An additive code over $\mathbb{F}_4$ is represented by a \emph{generator matrix} $H\in\mathbb{F}_4^{k\times n}$ satisfying
\begin{align*}
\mathcal{C} &= \mathrm{span}_{\mathbb{F}_2}(\text{Rows of } H).
\end{align*}
When $\mathcal{C}$ is self-dual, $k=n$ and there is another representation using the generator matrix $H$
\begin{align*}
\mathcal{C} &= \left\{y\in\mathbb{F}_4^n\mid \langle x, y\rangle = 0\quad \text{ for any row $x$ of } H\right\}.
\end{align*}
\end{definition}
We introduce an isomorphism between $\mathcal{G}_1/\langle iI_2\rangle$ and $\mathbb{F}_4$ defined as
\begin{align*}
I_2&\longleftrightarrow 0,&
X&\longleftrightarrow \omega,&
Y&\longleftrightarrow \omega^2,&
Z&\longleftrightarrow 1.
\end{align*}
We can also introduce an isomorphism between $\mathcal{G}_n/\langle iI_2^{\otimes n}\rangle$ and $\mathbb{F}_4^n$ by applying the isomorphism for each entry.
It is easy to confirm that two Pauli matrices $x,\,y\in\mathcal{G}_n/\langle iI_2^{\otimes n}\rangle$ commute if and only if the Hermitian trace inner product of their $\mathbb{F}_4^n$ representations is zero.
Hence, there is one-to-one correspondence between stabilizer states and $4^{\mathrm{H}+}$-codes.
An example of the correspondence is shown in Figure~\ref{fig:codes}.

\begin{figure}
\begin{align*}
\begin{bmatrix}
X & X & I & I\\
I & X & X & I\\
I & I & X & X\\
Z & Y & Y & Z\\
\end{bmatrix}
\longleftrightarrow
\left(
\begin{bmatrix}
1\\
0\\
0\\
0\\
\end{bmatrix},
\begin{bmatrix}
0\\
0\\
0\\
1\\
\end{bmatrix}\right),
\left(
\begin{bmatrix}
1\\
1\\
0\\
1\\
\end{bmatrix},
\begin{bmatrix}
0\\
0\\
0\\
1\\
\end{bmatrix}\right),
\left(
\begin{bmatrix}
0\\
1\\
1\\
1\\
\end{bmatrix},
\begin{bmatrix}
0\\
0\\
0\\
1\\
\end{bmatrix}\right),
\left(
\begin{bmatrix}
0\\
0\\
1\\
0\\
\end{bmatrix},
\begin{bmatrix}
0\\
0\\
0\\
1\\
\end{bmatrix}\right)
\longleftrightarrow
\begin{bmatrix}
\omega & \omega & 0 & 0\\
0 & \omega & \omega & 0\\
0 & 0 & \omega & \omega\\
1 & \omega^2 & \omega^2 & 1\\
\end{bmatrix}
\end{align*}
\caption{Correspondence between the stabilizer generators, isotropic independence system and generator matrix of $4^{\mathrm{H}+}$-code.
}\label{fig:codes}
\end{figure}

For a graph state $\ket{G}$ with adjacency matrix $A\in\mathbb{F}_2^{n\times n}$, the corresponding $4^{\mathrm{H}+}$-code with generator matrix $\omega I + A$ is denoted by $\mathcal{C}_G$.

\begin{lemma}\label{lem:md}
For any graph $G$ and non-empty $S\subseteq V(G)$, $S\notin\mathcal{I}_G$ if and only if there exists a non-zero codeword of $\mathcal{C}_G$ whose support is a subset of $S$.
\end{lemma}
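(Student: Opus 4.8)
The plan is to work entirely on the code side: write every codeword of $\mathcal{C}_G$ in closed form as a function of a subset of vertices, read off its support, and then recognize the condition ``support $\subseteq S$'' as a statement about linear dependence of the rows of the bipartite adjacency matrix $A_{S,V\setminus S}$, which by the computation just before this lemma is exactly the negation of $S\in\mathcal{I}_G$.

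First I would recall that $\mathcal{C}_G=\mathrm{span}_{\mathbb{F}_2}(\text{rows of }\omega I+A)$, the span being taken over $\mathbb{F}_2$, so that codewords are indexed by subsets $T\subseteq V(G)$. Writing $e_v$ and $A_v$ for the $v$-th rows of $I$ and $A$ over $\mathbb{F}_2$, and $A_T:=\sum_{v\in T}A_v\bmod 2$, the codeword associated with $T$ is $c_T=\sum_{v\in T}(\omega e_v+A_v)=\omega\mathbf{1}_T+A_T\in\mathbb{F}_4^n$, where $\mathbf{1}_T$ is the indicator vector of $T$. Using $\mathbb{F}_4=\{0,1,\omega,\omega+1\}$, the $w$-th entry of $c_T$ is $\omega[\mathbf{1}_T]_w+[A_T]_w$, which vanishes exactly when $w\notin T$ and $[A_T]_w=0$. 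Hence
\[
\mathrm{supp}(c_T)=T\cup\mathrm{supp}(A_T),
\]
and in particular $c_T=0$ if and only if $T=\varnothing$.

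Next I would combine these observations. A non-zero codeword with support contained in $S$ exists if and only if there is a non-empty $T\subseteq V(G)$ with $T\cup\mathrm{supp}(A_T)\subseteq S$; the inclusion $T\subseteq S$ is already forced, so this is the same as asking for a non-empty $T\subseteq S$ with $[A_T]_w=0$ for every $w\in V(G)\setminus S$. But $\bigl([A_T]_w\bigr)_{w\in V(G)\setminus S}$ is precisely $\sum_{v\in T}(\text{$v$-th row of }A_{S,V\setminus S})$, so this condition says exactly that some non-empty set of rows of $A_{S,V\setminus S}$ sums to zero over $\mathbb{F}_2$, i.e.\ the rows of $A_{S,V\setminus S}$ are linearly dependent. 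Since that is equivalent to $\mathrm{rank}_{\mathbb{F}_2}(A_{S,V\setminus S})<|S|$, i.e.\ $\cutrank_G(S)<|S|$, i.e.\ $S\notin\mathcal{I}_G$, both directions follow at once.

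The argument is short, and the only real opportunity for error is bookkeeping across the chain of identifications (Pauli strings $\leftrightarrow$ binary symplectic vectors $\leftrightarrow$ $\mathbb{F}_4$ vectors). The one genuinely $\mathbb{F}_2$-specific point, which I would state explicitly, is that over $\mathbb{F}_2$ a non-trivial linear dependence among a family of vectors is the same thing as a non-empty subfamily summing to zero; and I would be careful to emphasize that the generator matrix $\omega I+A$ spans $\mathcal{C}_G$ over $\mathbb{F}_2$ (not over $\mathbb{F}_4$), which is exactly what makes the parametrization of codewords by subsets $T$ valid.
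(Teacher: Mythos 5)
Your proof is correct, but it takes a genuinely different route from the paper's. The paper first establishes that $(V(G),\mathcal{I}_G)$ is an isotropic independence system, so that $S\notin\mathcal{I}_G$ becomes a nontrivial $\mathbb{F}_2$-dependence among the $2|S|$ vectors $\{a^v,b^v\}_{v\in S}=\{e_v,A_v\}_{v\in S}$; it then packages the dependence coefficients $(s_v,t_v)$ into an $\mathbb{F}_4$-vector supported exactly on the set where they are nonzero, and certifies membership in $\mathcal{C}_G$ via orthogonality to every row of $\omega I+A$ together with self-duality $\mathcal{C}_G=\mathcal{C}_G^{\perp}$. You instead use only the generator (span) description of the code: every codeword is $c_T=\omega\mathbf{1}_T+A_T$ for a unique $T\subseteq V(G)$, its support is $T\cup\mathrm{supp}(A_T)$ because $\{1,\omega\}$ is an $\mathbb{F}_2$-basis of $\mathbb{F}_4$, and the condition $\mathrm{supp}(c_T)\subseteq S$ collapses to a nonempty subset of rows of $A_{S,V\setminus S}$ summing to zero, i.e.\ $\cutrank_G(S)<|S|$, i.e.\ $S\notin\mathcal{I}_G$. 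Your version is more self-contained --- it needs neither the isotropic-independence-system computation nor self-duality, and both directions of the equivalence fall out of a single chain of identities --- while the paper's version buys the explicit dictionary between symplectic dependences and codewords (the codeword it constructs has support \emph{equal} to the dependent set, matching the Pauli/stabilizer picture used elsewhere in the appendix). The two parametrizations agree in the end precisely because the code is self-dual, so the span and orthogonality descriptions coincide.
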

\begin{proof}
For any $a,b,s,t\in\mathbb{F}_2$,
\begin{align*}
\mathrm{Tr}\left((b+a\omega)\overline{(s+t\omega)}\right)
&= \mathrm{Tr}\left((b+a\omega)(s+t\omega^2)\right)\\
&= \mathrm{Tr}\left(bs + at + bt\omega^2 + as\omega\right)\\
&= sa + tb
\end{align*}
Let $a^v,b^v\in\mathbb{F}_2^n$ be vectors in~\eqref{eq:ab} for each $v\in V(G)$.
Then, $\omega a^v + b^v$ is the $v$-th column of the generator matrix $H$ of $\mathcal{C}_G$.
For any graph $G$ and non-empty $S\subseteq V(G)$,
\begin{align*}
S\notin\mathcal{I}_G&\iff\exists s, t\in\mathbb{F}_2^S,\qquad \lnot(s=t=0^S)\;\land\; \sum_{v\in S} (s_va^v + t_vb^v) = 0^{V(G)}\\
&\iff\exists s, t\in\mathbb{F}_2^S,\qquad \lnot(s=t=0^S)\;\land\; \sum_{v\in S} (s_va_w^v + t_vb_w^v) = 0\qquad\forall w\in V(G)\\
&\iff\exists s, t\in\mathbb{F}_2^S,\qquad \lnot(s=t=0^S)\;\land\; \sum_{v\in S} \mathrm{Tr}\left((b^v_w + a^v_w\omega)\overline{(s_v+t_v\omega)}\right) = 0\qquad\forall w\in V(G)\\
&\iff\exists y\in\mathbb{F}_4^S\setminus\{0^S\},\qquad \sum_{v\in S} \mathrm{Tr}\left(H_{wv}\,\overline{y_v}\right) = 0\qquad\forall w\in V(G)\\
&\iff\exists y\in\mathbb{F}_4^S\setminus\{0^S\}, \qquad \sum_{v\in S}\mathrm{Tr}(x_v\,\overline{y}_v) = 0\qquad\text{for any row $x$ of $H$}\\
&\iff\exists y\in\mathbb{F}_4^{V(G)}\setminus\{0^{V(G)}\}, \qquad \mathrm{support}(y)\subseteq S\;\land\; \langle x,y\rangle = 0\qquad\text{for any row $x$ of $H$}\\
&\iff\exists y\in\mathbb{F}_4^{V(G)}\setminus\{0^{V(G)}\}, \qquad \mathrm{support}(y)\subseteq S\;\land\; y\in\mathcal{C}_G.
\qedhere
\end{align*}
\end{proof}

Since we are interested in small dependent sets of the isotropic independence system $(V(G),\,\mathcal{I}_G)$,
we are interested in low-weight codewords of $4^{\mathrm{H}+}$-code $\mathcal{C}_G$.
Note that the minimum distance of $\mathcal{C}_G$ is one plus the minimum degree of vertices in graphs that are LC-equivalent to $G$~\cite{DANIELSEN20061351,hoyer2006resources,danielsen2009graph}.
Now, we can apply results in coding theory.

\begin{lemma}[\cite{rains2002self,self-dual2006}]\label{lem:mind}
Let $\mathcal{C}$ be a $4^{\mathrm{H}+}$-code.
Then, the minimum distance of $\mathcal{C}$ is at most $2\lfloor\frac{n}6\rfloor +2$ for $n\not\equiv 5\bmod 6$,
and $2\lfloor\frac{n}6\rfloor+3$ for $n\equiv5\bmod6$.
\end{lemma}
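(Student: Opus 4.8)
The plan is to derive the bound from the invariant theory of the Hamming weight enumerator of $\mathcal{C}$ together with its shadow, followed by a coefficient–positivity argument in the Mallows--Sloane / Rains style (this is a ``shadow bound'' type statement). Write $W_{\mathcal{C}}(x,y):=\sum_{c\in\mathcal{C}}x^{\,n-\mathrm{wt}(c)}y^{\,\mathrm{wt}(c)}$ for the weight enumerator of the length-$n$ code $\mathcal{C}$; it is homogeneous of degree $n$, has nonnegative integer coefficients, and its $y^{0}$ coefficient is $1$ (the zero codeword). By definition the minimum distance $d$ is the least $k\ge 1$ with a nonzero $x^{\,n-k}y^{\,k}$ term, so it suffices to bound the largest $d$ for which the coefficients of $y^{1},\dots,y^{d-1}$ can all vanish. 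The first ingredient is the MacWilliams identity for additive codes over $\mathbb{F}_{4}$ combined with self-duality, which gives
\[
W_{\mathcal{C}}(x,y)=\frac{1}{2^{n}}\,W_{\mathcal{C}}(x+3y,\;x-y),
\]
i.e.\ $W_{\mathcal{C}}$ is invariant under the order-two matrix $M=\tfrac{1}{2}\left(\begin{smallmatrix}1 & 3\\ 1 & -1\end{smallmatrix}\right)$. This symmetry alone is too weak, so I would split into two cases according to whether $\mathcal{C}$ is of Type~II (all codeword weights even) or Type~I.

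In the Type~II case one additionally has $W_{\mathcal{C}}(x,-y)=W_{\mathcal{C}}(x,y)$; the group generated by $M$ and $\mathrm{diag}(1,-1)$ is the dihedral group of order $12$, and by a Shephard--Todd/Gleason argument its ring of polynomial invariants is $\mathbb{C}[\phi_{2},\phi_{6}]$ with $\deg\phi_{2}=2$, $\deg\phi_{6}=6$ (one may take $\phi_{2}=x^{2}+3y^{2}$ and an explicit degree-$6$ invariant). Hence $W_{\mathcal{C}}$ is a linear combination of the monomials $\phi_{2}^{(n-6j)/2}\phi_{6}^{j}$, leaving $\lfloor n/6\rfloor+1$ free coefficients; imposing the vanishing of the low-order $y$-coefficients and then using nonnegativity of the remaining coefficients yields $d\le 2\lfloor n/6\rfloor+2$. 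In the Type~I case $W_{\mathcal{C}}$ is only $M$-invariant, but the even subcode $\mathcal{C}^{(0)}$ has index $2$ in $\mathcal{C}$, and I would bring in Rains' shadow $\mathcal{S}$ (the non-$\mathcal{C}$ part of $(\mathcal{C}^{(0)})^{\perp}$): its enumerator $S_{\mathcal{C}}$ has nonnegative integer coefficients and is obtained from $W_{\mathcal{C}}$ by an explicit linear substitution dictated by the coset structure, so that the pair $(W_{\mathcal{C}},S_{\mathcal{C}})$ is jointly constrained and the same ``extremal polynomial'' machinery applies.

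The heart of the argument is the extremality/positivity step. If $d$ exceeded the claimed value, the vanishing of the first several $y$-coefficients of $W_{\mathcal{C}}$ (together with those of $S_{\mathcal{C}}$ in the Type~I case) would pin down $W_{\mathcal{C}}$, and hence $S_{\mathcal{C}}$, uniquely as an explicit combination of the invariants; one then computes a specific coefficient — typically the lowest-weight coefficient of the shadow enumerator — and shows it is strictly negative, contradicting nonnegativity. The residue/generating-function bookkeeping producing this sign, and the arithmetic of $n$ modulo $6$ governing how the last admissible monomial behaves, is the main obstacle, and it is precisely where the exceptional case $n\equiv 5\pmod 6$ (allowing $d$ to be one larger) emerges; this finite but delicate computation is carried out in~\cite{rains2002self,self-dual2006}. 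I expect no conceptual difficulty beyond this bookkeeping, and the $n\equiv 5\pmod 6$ case to cost only the tracking of one extra term in the shadow expansion.
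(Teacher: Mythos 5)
The paper does not prove this lemma at all: it is imported verbatim from the cited references, so there is no internal proof to compare against. Your sketch correctly identifies the machinery those references use -- the MacWilliams identity $W_{\mathcal{C}}(x,y)=2^{-n}W_{\mathcal{C}}(x+3y,\,x-y)$ for self-dual additive codes over $\mathbb{F}_4$, the Gleason-type description of Type~II enumerators as polynomials in invariants of degrees $2$ and $6$ of the dihedral group of order $12$, and Rains' shadow enumerator to handle Type~I codes, with the positivity of the enumerator (and shadow) coefficients supplying the contradiction for $d$ beyond the stated bound and the congruence class of $n$ modulo $6$ producing the exceptional case $n\equiv 5\pmod 6$. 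This is indeed the argument of Rains and of the 2006 update, and your group-theoretic details check out ($M$ and $\mathrm{diag}(1,-1)$ are reflections whose product has order $6$). However, be aware that your proposal, like the paper, ultimately defers the decisive step -- the explicit computation showing that an extremal-plus-one enumerator forces a negative coefficient, typically in the shadow -- to the very references being cited; as written it is a correct roadmap rather than a self-contained proof, which is acceptable here only because the lemma is stated as a quoted external result.
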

It is conjectured that the minimum distance is at most $2\lfloor\frac{n}6\rfloor+1$ for $n\equiv1\bmod 6$,
which was confirmed for $n\le 25$~\cite{self-dual2006}.

We now prove the upper bound of the EC-complexity.
Let
\begin{align*}
C(n)&:= \max_{G\colon\text{$n$-vertex graph}}\NCO(G).
\end{align*}

\begin{lemma}\label{lem:wcost}
$C(n)\le\frac{(n-1)(n+4)}6$.
\end{lemma}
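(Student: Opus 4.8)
The plan is to use the coding-theoretic machinery developed in this appendix, combined with the recursive framework of Algorithm~\ref{alg:gengen}, to bound $C(n)$ by induction on $n$. The key inductive step is: given an $n$-vertex graph $G$, Lemma~\ref{lem:mind} guarantees that the code $\mathcal{C}_G$ has a nonzero codeword of weight at most $2\lfloor n/6\rfloor+3$. By Lemma~\ref{lem:md}, the support $S$ of this codeword satisfies $S\notin\mathcal{I}_G$, i.e., $\cutrank_G(S)<|S|$, so $S$ is a dependent set of size at most $2\lfloor n/6\rfloor+3$. Then Lemma~\ref{lem:grow} yields a vertex $v\in S$ with $\NCO(G)\le\NCO(G-v)+|S|-1\le\NCO(G-v)+2\lfloor n/6\rfloor+2$. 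This gives the recursion $C(n)\le C(n-1)+2\lfloor n/6\rfloor+2$.

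The main step is then to solve this recursion and show the closed form $\frac{(n-1)(n+4)}{6}$ is an upper bound. First I would verify a base case (say $n=1$ or $n=2$, where $C(n)=0$, and check $\frac{(n-1)(n+4)}{6}\ge 0$). Then, assuming $C(n-1)\le\frac{(n-2)(n+3)}{6}$, I would check
\begin{align*}
\frac{(n-2)(n+3)}{6}+2\left\lfloor\frac n6\right\rfloor+2\le\frac{(n-1)(n+4)}{6},
\end{align*}
which after simplification amounts to $2\lfloor n/6\rfloor+2\le\frac{(n-1)(n+4)-(n-2)(n+3)}{6}=\frac{2n+2}{6}=\frac{n+1}{3}$, i.e., $\lfloor n/6\rfloor\le\frac{n-5}{6}$, which holds since $\lfloor n/6\rfloor\le\frac{n-5}{6}$ fails only... actually I should be careful here: $\lfloor n/6\rfloor\le\frac{n-5}{6}$ holds iff $n\bmod 6\ge 5$, so the crude bound $2\lfloor n/6\rfloor+3$ from Lemma~\ref{lem:mind} is only tight enough in that residue class. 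For the other residues Lemma~\ref{lem:mind} gives the better bound $2\lfloor n/6\rfloor+2$, hence $|S|-1\le 2\lfloor n/6\rfloor+1$, and the required inequality becomes $2\lfloor n/6\rfloor+1\le\frac{n+1}{3}$, i.e., $\lfloor n/6\rfloor\le\frac{n-2}{6}$, which holds for all $n\ge 2$. So the induction closes after splitting on whether $n\equiv 5\bmod 6$.

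The main obstacle I anticipate is making the recursion genuinely valid: Lemmas~\ref{lem:mind} and~\ref{lem:md} require $\mathcal{C}_G$ to be a self-dual code, which it is (it is $\mathcal{C}_G$ with generator matrix $\omega I+A$), and they require the ground set to be all of $V(G)$, so the dependent set $S$ really does sit inside the current graph at each recursion level — this is fine since after deleting $v$ we simply recurse on the induced subgraph $G-v$, and $\mathcal{C}_{G-v}$ is again self-dual on $n-1$ coordinates. One should also handle small $n$ where $2\lfloor n/6\rfloor+3$ might exceed $n$ (so the "dependent set" could be all of $V(G)$), but in that regime one can fall back on the trivial bound $\NCO(G)\le\binom n2$ or handle the few small cases directly; the arithmetic in the inductive step already accommodates this since for $n\le 5$ the claimed bound $\frac{(n-1)(n+4)}{6}$ is verified by hand. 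Assembling these pieces — base cases, the residue-class split in Lemma~\ref{lem:mind}, and the telescoping induction — gives $C(n)\le\frac{(n-1)(n+4)}{6}$.
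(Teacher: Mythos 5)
Your overall strategy is the paper's: use Lemma~\ref{lem:mind} to get a low-weight codeword of $\mathcal{C}_G$, Lemma~\ref{lem:md} to turn its support into a dependent set $S$, and Lemma~\ref{lem:grow} to get the recursion $C(n)\le C(n-1)+2\lfloor n/6\rfloor+1+\mathbb{I}\{n\equiv 5\bmod 6\}$. That part is correct. The gap is in closing the induction: your single-step argument requires the increment to be at most $\frac{(n-1)(n+4)-(n-2)(n+3)}{6}=\frac{n+1}{3}$, and you assert that for $n\not\equiv 5\bmod 6$ the needed inequality $\lfloor n/6\rfloor\le\frac{n-2}{6}$ ``holds for all $n\ge 2$.'' It does not: writing $n=6q+s$, it holds iff $s\ge 2$, so it fails for $n\equiv 0$ and $n\equiv 1\pmod 6$. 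Concretely, for $n\equiv 0$ the increment from Lemma~\ref{lem:mind} is $\frac{n}{3}+1$, exceeding the budget $\frac{n+1}{3}$ by $\frac{2}{3}$; for $n\equiv 1$ it is $\frac{n+2}{3}$, exceeding it by $\frac{1}{3}$. So the one-step induction genuinely does not close in those residue classes, and no residue split rescues it.

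The fix — which is what the paper does — is to amortize over a full period of six: telescoping the recursion from $n$ down to $n-6$ gives total increment $\sum_{k=0}^{5}\bigl(2\lfloor (n-k)/6\rfloor+1\bigr)+1=2n-3$, which exactly matches $\frac{(n-1)(n+4)}{6}-\frac{(n-7)(n-2)}{6}$, so the induction in steps of six closes with equality on average even though individual steps overshoot. This requires base cases up to $n=6$ (the paper verifies $C(1),\dotsc,C(6)$ by computer search; note also that $C(2)=1$, not $0$ as you wrote, though the bound $\frac{1\cdot 6}{6}=1$ still holds there). Your proposal as written is therefore incomplete: the recursion is right but the claimed closed-form bound does not follow from it by single-step induction.
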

\begin{proof}
We prove the Lemma using the induction on $n$.
By a computer search, we obtain $C(1)=0,\,C(2)=1,\,C(3)=2,\,C(4)=3,\,C(5)=5,\,C(6)=7$.
Hence, the inequality is satisfied for $n\le 6$.
From Lemma~\ref{lem:mind},
for $n\ge 7$
\begin{align*}
C(n)&\le C(n-1)+2\left\lfloor\frac{n}6\right\rfloor+1+\mathbb{I}\{n\equiv 5\bmod6\}\\
&\le C(n-6)+2\frac{n+(n-1)+(n-2)+(n-3)+(n-4)+(n-5)-1-2-3-4-5}6+7\\
&= C(n-6)+ 2n -3\\
&\le \frac{(n-7)(n-2)}6 + 2n -3\\
&= \frac{(n-1)(n+4)}6.\qedhere
\end{align*}
\end{proof}
Note that if we assume that the minimum distance of self-dual additive $\mathbb{F}_4$ codes is at most $2\lfloor\frac{n}6\rfloor+1$ for $n\equiv 1\bmod 6$, we obtain an upper bound $C(n)\le \frac{n(n+2)-2}6$.
Although the optimal asymptotic CZ-complexity is $C(n)=\Theta(n^2/\log n)$~\cite{markov2008optimal},
Lemma~\ref{lem:wcost} gives the upper bound on the CZ-complexity for finite $n$.

Corollary~\ref{cor:23r} and Lemma~\ref{lem:md} imply an upper bound on the minimum distance of $4^{\mathrm{H}+}$-codes.
\begin{corollary}\label{cor:md}
For any graph $G$ with $n$ vertices and the rank-width $r\le n/3$, the $4^{\mathrm{H}+}$-code $\mathcal{C}_G$ has the minimum distance at most $\lceil (3r+1)/2\rceil$.
\end{corollary}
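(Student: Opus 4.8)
The plan is to read off this statement as a direct consequence of two facts already in hand: Corollary~\ref{cor:23r}, which bounds the size of a minimum dependent set of the isotropic independence system $(V(G),\mathcal{I}_G)$, and Lemma~\ref{lem:md}, which translates dependent sets into supports of codewords of $\mathcal{C}_G$. First I would recall that for an additive (hence $\mathbb{F}_2$-linear) code the minimum distance coincides with the minimum Hamming weight of a non-zero codeword, since $\langle x,y\rangle$-orthogonality makes $\mathcal{C}_G$ closed under addition; thus it suffices to produce one non-zero codeword of small weight.

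Next I would apply Corollary~\ref{cor:23r} to $G$: using $r\le n/3$, we get a subset $S\subseteq V(G)$ with $|S|\le\left\lceil\frac{3r+1}2\right\rceil$ and $\cutrank_G(S)<|S|$. In particular $S\neq\varnothing$, because $\cutrank_G(\varnothing)=0=|\varnothing|$ would violate the strict inequality; hence $S\notin\mathcal{I}_G$ in the notation of the appendix. Then Lemma~\ref{lem:md} supplies a non-zero codeword $x\in\mathcal{C}_G$ whose support is contained in $S$, so its weight satisfies $\operatorname{wt}(x)=|\operatorname{supp}(x)|\le|S|\le\left\lceil\frac{3r+1}2\right\rceil$. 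Combined with the first step, this bounds the minimum distance of $\mathcal{C}_G$ by $\left\lceil\frac{3r+1}2\right\rceil$, as claimed.

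I do not expect any genuine obstacle: the corollary is essentially a restatement of Corollary~\ref{cor:23r} under the matroid-to-coding dictionary. The only point that needs care is verifying that the set $S$ delivered by Corollary~\ref{cor:23r} is non-empty, so that Lemma~\ref{lem:md} is applicable; this is immediate from the strict inequality $\cutrank_G(S)<|S|$ together with $\cutrank_G(\varnothing)=0$.
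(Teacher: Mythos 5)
Your proposal is correct and follows exactly the route the paper intends: the paper derives Corollary~\ref{cor:md} in one line by combining Corollary~\ref{cor:23r} with Lemma~\ref{lem:md}, which is precisely your argument. Your extra care about the non-emptiness of $S$ is a valid (and easily verified) detail that the paper leaves implicit.
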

Corollary~\ref{cor:md} for $r=1$ was shown in~\cite{kumar2019graphs}.

The rank-width of an $n$-vertex graph is at most $\lceil n/3\rceil$.
In other words, any graph with rank-width $r$ has at least $3r-2$ vertices.
As another consequence of Lemma~\ref{lem:mind}, we show that there is no graph with $3r-2$ vertices and rank-width $r$ when $r$ is even.

\begin{lemma}\label{lem:3r2e}
For any even $r\ge 2$, a graph $G$ with $3r-2$ vertices has rank-width at most $r-1$.
\end{lemma}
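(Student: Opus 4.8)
The plan is to combine the coding-theoretic input of this appendix with an explicit ``three-branch'' rank-decomposition: first pass to a local-complement representative of $G$ that has a low-degree vertex, then build a width-$(r-1)$ decomposition organized around that vertex.

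\textbf{Reduction to a low-degree vertex.} I would begin by setting $n:=3r-2$. Since $r$ is even, $n\equiv 4\pmod 6$, in particular $n\not\equiv 5\pmod 6$, so Lemma~\ref{lem:mind} bounds the minimum distance of the $4^{\mathrm{H}+}$-code $\mathcal{C}_G$ by $2\lfloor n/6\rfloor+2=r$. Using the fact recalled above that this minimum distance equals one plus the minimum degree over the local-complement equivalence class of $G$, I obtain a graph $G^*$, LC-equivalent to $G$ (hence again on $3r-2$ vertices), with a vertex $v$ of degree $\delta\le r-1$. Since the cut-rank is invariant under local complementation (Lemma~\ref{lem:cutrank}), rank-width is an LC-invariant, so $\rw(G)=\rw(G^*)$ and it suffices to bound $\rw(G^*)$. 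I expect this step to be exactly where evenness is used: for odd $r$ one has $n\equiv 1\pmod 6$ and Lemma~\ref{lem:mind} only yields minimum distance $\le r+1$, which is too weak.

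\textbf{The decomposition.} Next I would set $N:=N_{G^*}(v)$, so $|N|\le r-1$, partition the $3(r-1)$ vertices of $V(G^*)\setminus\{v\}$ into three blocks $P_1,P_2,P_3$ of size $r-1$ with $N\subseteq P_1$, and put $A:=P_1\cup\{v\}$. The candidate rank-decomposition is the subcubic tree with a central vertex adjacent to the roots of three arbitrary subcubic trees whose leaf sets are $A$, $P_2$, $P_3$. To verify the width is at most $r-1$ I would check each edge: the central edge toward the $A$-branch induces the cut $(A,P_2\cup P_3)$, whose bipartite adjacency matrix has a zero row at $v$ (because $N\subseteq P_1\subseteq A$), so its rank is at most $|P_1|=r-1$; the central edges toward $P_2$ and $P_3$ induce cuts with only $r-1$ vertices on one side; and every other edge cuts off a proper subset of $A$, $P_2$, or $P_3$, hence at most $r-1$ vertices, so $\cutrank_{G^*}\le r-1$ there as well.

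\textbf{Anticipated main difficulty.} The subtle point — and the reason the coding input is needed — is the single ``oversized'' block $A$, which has $r$ vertices. A naive approach that simply splits off a minimum-weight dependent set $S_0$ of $\mathcal{C}_G$ (of size $\le r$ and cut-rank $\le r-1$) does not work: when one recursively decomposes $V(G)\setminus S_0$, a cut can have the $r$-element set $S_0\cup W$ on the side opposite a small piece $W$, and there is no reason for its rank to be $\le r-1$. The fix is to force $S_0$ to have the special form $\{v\}\cup N_{G^*}(v)$ for a low-degree $v$ in an LC-representative, so that the oversized block $A$ carries a zero row at $v$ and its cut-rank drops from a priori $r$ down to $r-1$; all remaining cuts are then controlled purely by their small size. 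Thus the real content is the reduction step, while the rest (existence of the partition, subcubicity of the tree, and the trivial rank bounds) is routine bookkeeping, and the same construction in fact covers all even $r\ge 2$ without a separate base case.
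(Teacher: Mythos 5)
Your proof is correct, and the decomposition is essentially the one in the paper: both arguments invoke Lemma~\ref{lem:mind} (with $3r-2\equiv 4\bmod 6$ when $r$ is even, exactly as you note) to extract a size-$r$ block of cut-rank at most $r-1$, split the remaining $2r-2$ vertices into two blocks of size $r-1$, and attach three subtrees to a central degree-three node, so that every cut other than the one isolating the size-$r$ block is bounded by the size of its smaller side. Where you diverge is only in certifying the cut-rank of the size-$r$ block, and here your commentary misidentifies the difficulty. The ``naive approach'' you dismiss is in fact the paper's proof, and it works: by Lemma~\ref{lem:md} the support of a minimum-weight codeword is a dependent set, any size-$r$ superset $S$ of it is again dependent, and dependence means precisely $\cutrank_G(S)\le |S|-1=r-1$, so the cut $(S,\,V(G)\setminus S)$ needs no further argument and no LC-representative. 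The failure mode you describe (a cut pitting $S\cup W$ against a small piece $W$) only arises if one decomposes $V(G)\setminus S$ by an unbalanced tree; it is avoided by the balanced three-block partition, which you also use. So the real content of the lemma is the choice of that balanced tree together with the parity computation in Lemma~\ref{lem:mind}, and your detour through the minimum-degree interpretation of the distance, the representative $G^*$, and the LC-invariance of rank-width (Lemma~\ref{lem:cutrank}) is valid but unnecessary.
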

\begin{proof}
Let $G$ be a graph with $3r-2$ vertices.
From Lemma~\ref{lem:mind}, there is a set $S\subseteq V(G)$ of size $r$ with $\cutrank_G(S)\le r-1$.
Let $A,\, B \subseteq V(G)\setminus S$ be a pair of subsets satisfying $A\cup B = V(G)\setminus S$, $A\cap B = \varnothing$ and $|A|=|B|=r-1$.
Then, we consider three rooted binary trees $T_S$, $T_A$ and $T_B$ where their leaf nodes correspond to $S$, $A$ and $B$, respectively.
Then, we obtain a rank-decomposition of $G$ with a vertex that connects to the roots of $T_S$, $T_A$ and $T_B$.
The width of the rank-decomposition is at most $r-1$.
\end{proof}
In order to generalize Lemma~\ref{lem:3r2e} for odd $r$ by the same proof, we need the conjecture $d\le2\lfloor\frac{n}6\rfloor+1$ for $n\equiv1\bmod6$,
which was confirmed for $n\le 25$.
\begin{lemma}
For $r=1,3,5,7,9$, a graph $G$ with $3r-2$ vertices has rank-width at most $r-1$.
\end{lemma}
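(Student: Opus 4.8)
The plan is to replay the proof of Lemma~\ref{lem:3r2e} for odd $r$, using in place of the unconditional bound of Lemma~\ref{lem:mind} the sharper bound $d\le 2\lfloor n/6\rfloor+1$ for $n\equiv 1\bmod 6$, which (as recalled just after Lemma~\ref{lem:mind}) is a theorem for $n\le 25$. The case $r=1$ is vacuous, since a graph with a single vertex has rank-width $0$ by definition, so I would dispose of it at once and assume $r\in\{3,5,7,9\}$. For these values, $n:=3r-2\in\{7,13,19,25\}$, so $n\le 25$ and, since $r$ is odd, $n\equiv 1\bmod 6$; moreover $\lfloor n/6\rfloor=(r-1)/2$.

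First I would extract a small dependent set. Applying the verified conjecture to $\mathcal{C}_G$ gives a nonzero codeword of weight at most $2\lfloor n/6\rfloor+1=r$, hence by Lemma~\ref{lem:md} a set $S_0\subseteq V(G)$ with $|S_0|\le r$ and $S_0\notin\mathcal{I}_G$, i.e.\ $\cutrank_G(S_0)<|S_0|$. Since $\mathcal{I}_G$ is an independence system and $n\ge r$, I can enlarge $S_0$ arbitrarily to a set $S$ with $|S|=r$; it is still dependent, so $\cutrank_G(S)<|S|=r$, that is, $\cutrank_G(S)\le r-1$.

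Next I would build the rank-decomposition exactly as in Lemma~\ref{lem:3r2e}. Partition the remaining $2r-2$ vertices $V(G)\setminus S$ into two sets $A,B$ with $|A|=|B|=r-1$, take rooted full binary trees $T_S,T_A,T_B$ whose leaf sets are $S,A,B$ respectively, and add one new vertex joined to the three roots; this is a subcubic tree, and the evident bijection from $V(G)$ to its leaves is a rank-decomposition of $G$. To bound the width I would observe that for every edge of this tree, one side $X$ of the induced partition is one of: a proper nonempty subset of $S$, so $|X|\le r-1$; a proper nonempty subset of $A$ or of $B$, so $|X|\le r-2$; the full set $S$, so $\cutrank_G(S)\le r-1$ by construction; or the full set $A$ or $B$, so $|X|=r-1$. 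In every case $\cutrank_G(X)\le\min(|X|,|V(G)\setminus X|)\le r-1$, so the width is at most $r-1$ and $\rw(G)\le r-1$.

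There is no genuine obstacle here: the argument is a verbatim adaptation of the even-$r$ case, and the only reason for the restriction to $r\le 9$ is that the minimum-distance bound $d\le 2\lfloor n/6\rfloor+1$ for $n\equiv 1\bmod 6$ is presently known only for $n\le 25$. If that conjecture were proved for all such $n$, the identical proof would extend the statement to every odd $r$, matching Lemma~\ref{lem:3r2e}.
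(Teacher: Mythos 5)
Your proof is correct and is exactly the argument the paper intends: the paper gives no separate proof for this lemma, merely noting that the proof of Lemma~\ref{lem:3r2e} carries over using the bound $d\le 2\lfloor n/6\rfloor+1$ for $n\equiv 1\bmod 6$, verified for $n\le 25$, and your write-up fills in precisely those details (including the correct check that $3r-2\equiv 1\bmod 6$ for odd $r$ and that $2\lfloor n/6\rfloor+1=r$). No issues.
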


\if0
There also exists a better asymptotic upper bound
\begin{lemma}[\cite{1197853,self-dual2006}]\label{lem:amind}
Let $\mathcal{C}$ be a self-dual additive code over $\mathbb{F}_4$.
\begin{align*}
\limsup \frac{d_i}{n_i} \le \frac{3-\sqrt{3}}4\approx 0.316987
\end{align*}
\end{lemma}

The asymptotic upper bound in Lemma~\ref{lem:amind} gives better asymptotic upper bound on $C(n)$.
\begin{corollary}
For any $\epsilon > 0$,
\begin{align*}
C(n)&= O\left(\left(\frac{3-\sqrt{3}}8 +\epsilon\right)n^2\right).
\end{align*}
\end{corollary}
\fi

\subsection{The proof of Theorem~\ref{thm:upper}}
By combining Lemmas~\ref{lem:54r} and \ref{lem:wcost}, we prove Theorem~\ref{thm:upper}.

\begin{theorem}[(Equivalent to Theorem~\ref{thm:upper})]\label{thm:upperEC}
For any odd $r$ and $n\ge \frac{13r^2-6r-3}{4r}$,
\begin{align*}
C(n,r)&\le \frac{5r^2-1}{4r}n -\frac{221r^4-180r^3+10r^2+36r+9}{96r^2}.
\end{align*}
For any even $r$ and $n\ge \frac{13r-6}4$,
\begin{align*}
C(n,r)&\le \frac{5r}{4}n - \frac{221r^2 - 180r + 100}{96}.
\end{align*}
\end{theorem}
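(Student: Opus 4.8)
The plan is to run a strong induction on $n$, using Lemma~\ref{lem:54r} to carry the bound upward in steps of size at most $r$, and Lemma~\ref{lem:wcost} to anchor it on a window of $r$ consecutive values sitting just above the threshold. Write $\alpha:=\frac{5r^2-1}{4r}$ and $\beta:=\frac{221r^4-180r^3+10r^2+36r+9}{96r^2}$ when $r$ is odd (and $\alpha:=\frac{5r}{4}$, $\beta:=\frac{221r^2-180r+100}{96}$ when $r$ is even), and let $N$ denote the threshold $\frac{13r^2-6r-3}{4r}$ (resp.\ $\frac{13r-6}{4}$); the goal is $C(n,r)\le \alpha n-\beta$ for all integers $n\ge N$. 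The case $r=1$ is exactly Corollary~\ref{cor:r1}, so I would assume $r\ge 2$ throughout.

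\textbf{Base window.} First I would treat $n$ in the window $W:=\{\lceil N\rceil,\lceil N\rceil+1,\dots,\lceil N\rceil+r-1\}$. For such $n$ we have $C(n,r)\le C(n)\le \frac{(n-1)(n+4)}{6}$ by Lemma~\ref{lem:wcost}, since restricting the maximum defining $C$ to graphs of bounded rank-width can only decrease it. It therefore suffices to show $\frac{(n-1)(n+4)}{6}\le \alpha n-\beta$ on $W$, i.e.\ that the quadratic $q(n):=n^2+(3-6\alpha)n+(6\beta-4)$ satisfies $q(n)\le 0$ there. A routine computation gives $6\alpha-3=\frac{15r^2-6r-3}{2r}$ (resp.\ $\frac{15r-6}{2}$), and — this is the point of the exact constant $\beta$ — the discriminant of $q$ equals $r^2$, so the roots of $q$ are $\frac{1}{2}\bigl((6\alpha-3)\pm r\bigr)$, which equal precisely $N$ and $N+r$. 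Hence $q\le 0$ on the full interval $[N,N+r]$, and since $\lceil N\rceil\ge N$ while $\lceil N\rceil+r-1<N+r$, the window $W$ lies inside this interval, which establishes the bound on $W$.

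\textbf{Inductive step.} For $n\ge \lceil N\rceil+r$ I would invoke Lemma~\ref{lem:54r}. One checks $N\ge 2r$ for $r\ge 2$ (both parities), so $n\ge \lceil N\rceil+r\ge 3r$ and the lemma applies: there is $k\in\{1,\dots,r\}$ with $C(n,r)\le \alpha k+C(n-k,r)$, the per-vertex coefficient appearing in Lemma~\ref{lem:54r} being exactly $\alpha$. Since $\lceil N\rceil\le n-k<n$, the inductive hypothesis yields $C(n-k,r)\le \alpha(n-k)-\beta$, hence $C(n,r)\le\alpha n-\beta$. Combined with the base window, this gives the bound for every integer $n\ge \lceil N\rceil$, i.e.\ for every integer $n\ge N$. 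The even-$r$ statement is proved identically with $\alpha=5r/4$ and the corresponding $\beta$ and $N$.

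\textbf{Main obstacle.} The only genuine content is the bookkeeping: the base window must have length exactly $r$ so that Lemma~\ref{lem:54r} (which drops $n$ by at most $r$) always lands back in already-proved territory, and it must simultaneously fit inside the interval on which Lemma~\ref{lem:wcost}'s quadratic bound beats the target line $\alpha n-\beta$. This forces $N$ to be the smaller root of $q$, which in turn pins down the precise constant $\beta$ in the statement; the one nontrivial-looking algebraic step is verifying that the discriminant of $q$ is $r^2$, but this is purely mechanical. A minor point to watch is that the base window can contain values $n\ge 3r$ — this is harmless, since there we only use $C(n,r)\le C(n)\le\frac{(n-1)(n+4)}{6}$ and never apply Lemma~\ref{lem:54r} — and that the threshold drops below $2r$ only for $r=1$, which is excluded by appeal to Corollary~\ref{cor:r1}.
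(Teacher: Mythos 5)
Your proof is correct and follows essentially the same route as the paper: induction on $n$ with Lemma~\ref{lem:54r} for the step and Lemma~\ref{lem:wcost} on a length-$r$ base window just above the threshold, your observation that $N$ and $N+r$ are exactly the roots of the quadratic $q$ being a cleaner packaging of the paper's endpoint substitution. The only blemish is $r=1$: you dispose of it by citing Corollary~\ref{cor:r1}, which the paper derives from this very theorem, so that appeal is circular — but it is harmless, since your own window argument already covers $n=1,2$ directly (both are integers in $[N,N+r]=[1,2]$, where $q\le 0$).
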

\begin{proof}
Assume $r$ is odd.
We prove the theorem by induction on $n$.
We first prove that if $n\in\left[\frac{13r^2-6r-3}{4r},\,\frac{17r^2-6r-3}{4r}\right]$,
\begin{align*}
C(n,r)&\le \frac{5r^2-1}{4r}n -\frac{221r^4-180r^3+10r^2+36r+9}{96r^2}.
\end{align*}
This can be proved by
\begin{align*}
C(n,r) &\le \frac{(n-1)(n+4)}6\\
&= \frac{5r^2-1}{4r}n +  \frac{(n-1)(n+4)}6 - \frac{5r^2-1}{4r}n\\
&= \frac{5r^2-1}{4r}n + \frac16\left(n^2 - \frac{15r^2-6r-3}{2r}n - 4\right)\\
&\le \frac{5r^2-1}{4r}n - \frac{221r^4-180r^3+10r^2+36r+9}{96r^2}\qquad
\end{align*}
where in the last inequality, we substitute $n=\frac{13r^2-6r-3}{4r}$ into the second term since it maximizes this term.

We now assume $n\ge\lfloor\frac{17r^2-6r-3}{4r}+1\rfloor\ge 3r$.
From Lemma~\ref{lem:54r},
\begin{align*}
C(n, r) &\le \frac{5r^2-1}{4r}k + C(n-k, r)\\
&\le \frac{5r^2-1}{4r}k + \frac{5r^2-1}{4r}(n-k) - \frac{221r^4-180r^3+10r^2+36r+9}{96r^2}\\
&= \frac{5r^2-1}{4r}n - \frac{221r^4-180r^3+10r^2+36r+9}{96r^2}.
\end{align*}

Assume $r$ is even.
We prove the theorem by induction on $n$.
We first prove that if $n\in\left[\frac{13r-6}{4},\,\frac{17r-6}{4}\right]$,
\begin{align*}
C(n,r)&\le \frac{5r}{4}n -\frac{221r^2-180r+100}{96}.
\end{align*}
This can be proved by
\begin{align*}
C(n,r) &\le \frac{(n-1)(n+4)}6\\
&= \frac{5r}{4}n +  \frac{(n-1)(n+4)}6 - \frac{5r}{4}n\\
&= \frac{5r}{4}n + \frac16\left(n^2 - \frac{15r-6}{2}n - 4\right)\\
&\le \frac{5r}{4}n - \frac{221r^2-180r+100}{96}\qquad
\end{align*}
where in the last inequality, we substitute $n=\frac{13r-6}{4}$ into the second term since it maximizes this term.

We now assume $n\ge\lfloor\frac{17r-6}4+1\rfloor \ge 3r$.
From Lemma~\ref{lem:54r},
\begin{align*}
C(n, r) &\le \frac{5r}{4}k + C(n-k, r)\\
&\le \frac{5r}{4}k + \frac{5r}{4}(n-k) - \frac{221r^2-180r+100}{96}\\
&= \frac{5r}{4}n - \frac{221r^2-180r+100}{96}.\qedhere
\end{align*}
\end{proof}

\end{document}